\documentclass[format=acmsmall]{acmart}

\usepackage{etoolbox}
\newtoggle{arxiv}
\toggletrue{arxiv} % arxiv version, includes all appendices
%\togglefalse{arxiv} % journal version, skips later appendices

\settopmatter{printacmref=false} % Removes citation information below abstract
\renewcommand\footnotetextcopyrightpermission[1]{} % removes footnote with conference information in first column

% algorithms
\usepackage[ruled]{algorithm2e}

\SetAlFnt{\small}
\SetAlCapFnt{\small}
\SetAlCapNameFnt{\small}
\SetAlCapHSkip{0pt}
\IncMargin{-\parindent}

% tables
\usepackage{array}
\newcolumntype{L}[1]{>{\raggedright\let\newline\\\arraybackslash\hspace{0pt}}m{#1}}
\newcolumntype{C}[1]{>{\centering\let\newline\\\arraybackslash\hspace{0pt}}m{#1}}
\newcolumntype{R}[1]{>{\raggedleft\let\newline\\\arraybackslash\hspace{0pt}}m{#1}}

% enumerate, itemize
\usepackage{enumitem}

% math commands
\newcommand{\E}{\mathbb{E}} 
\renewcommand{\P}{\mathbb{P}} 
\newcommand{\R}{\mathbb{R}} 
\newcommand{\N}{\mathbb{N}} 
\newcommand{\1}{\mathbbm{1}}
\newcommand{\trans}{\mathsf{T}}

\newcommand{\srank}{\textrm{srank}}
\newcommand{\rank}{\textrm{rank}}
\newcommand{\diag}{\textrm{diag}}
\newcommand{\surr}{\textrm{surr}}
\newcommand{\avg}{\textrm{avg}}
\newcommand{\outT}{\textrm{out}}

\newcommand{\fail}{\textrm{fail}}
\newcommand{\tsum}{\textstyle{\sum}}
\DeclareMathOperator*{\argmin}{arg\,min}

% misc
\newcommand{\plotHeight}{1.46in}
\usepackage{bbm,subcaption}
\graphicspath{{./figures/}}
\usepackage[titletoc,title]{appendix}
\usepackage{mathtools}
\mathtoolsset{showonlyrefs}
\iftoggle{arxiv}{ 
\newcommand{\new}[1]{#1} % empty command (but same syntax) for arxiv version 
}{
\newcommand{\new}[1]{{\color{red}#1}} % journal revision wants changes in red
}

% end of preamble
\begin{document}
\title{On the role of clustering in Personalized PageRank estimation}
\author{Daniel Vial}
\affiliation{%
  \institution{University of Michigan}
}
\email{dvial@umich.edu}
\author{Vijay Subramanian}
\affiliation{%
  \institution{University of Michigan}
}
\email{vgsubram@umich.edu}

\begin{abstract}
Personalized PageRank (PPR) is a measure of the importance of a node from the perspective of another (we call these nodes the \textit{target} and the \textit{source}, respectively). PPR has been used in many applications, such as offering a Twitter user (the source) recommendations of who to follow (targets deemed important by PPR); additionally, PPR has been used in graph-theoretic problems such as community detection. However, computing PPR is infeasible for large networks like Twitter, so efficient estimation algorithms are necessary. 

In this work, we analyze the relationship between PPR estimation complexity and clustering. First, we devise algorithms to estimate PPR for many source/target pairs. In particular, we propose an enhanced version of the existing single pair estimator \texttt{Bidirectional-PPR} that is more useful as a primitive for many pair estimation. We then show that the common underlying graph can be leveraged to efficiently and jointly estimate PPR for many pairs, rather than treating each pair separately using the primitive algorithm. Next, we show the complexity of our joint estimation scheme relates closely to the degree of clustering among the sources and targets at hand, indicating that estimating PPR for many pairs is easier when clustering occurs. Finally, we consider estimating PPR when several machines are available for parallel computation, devising a method that leverages our clustering findings, specifically the quantities computed \textit{in situ}, to assign tasks to machines in a manner that reduces computation time. This demonstrates that the relationship between complexity and clustering has important consequences in a practical distributed setting.
\end{abstract}

\maketitle

\section{Introduction}

Many systems, ranging from social networks to financial markets to the human brain, can be represented as graphs. When analyzing such systems, questions regarding the importance of nodes and the relationships between them arise. Which nodes are most influential, globally and locally? From the perspective of a given node, which nodes are important, and which nodes consider the given node to be important? How can these notions be quantified?

PageRank and Personalized PageRank (PPR) can help answer such questions. PageRank is a measure of the importance or centrality of a \textit{target} node; PPR ``personalizes" this measure to the perspective of a \textit{source} node. Proposed to rank Internet search results \cite{page1999pagerank}, and later to personalize these rankings \cite{haveliwala2002topic}, PageRank and PPR have been used in applications such as recommending Twitter followers \cite{gupta2013wtf} and YouTube videos \cite{baluja2008video}, as well as ``beyond the web'' \cite{gleich2015pagerank}, in fields such as bioinformatics \cite{morrison2005generank, freschi2007protein}. In graph theory, PPR has been used as a primitive for tasks such as detecting communities near a seed node \cite{andersen2006local} and assessing similarity between graphs \cite{koutra2013deltacon}.

The widespread use of PageRank and PPR can be attributed to the notion of relational ``importance" they convey, as well as the simplicity of the model from which they are derived. However, the scale of modern networks often makes them difficult (or impossible) to compute. As such, strategies for efficient estimation of PageRank and PPR are necessary.

\begin{figure}
\centering
\begin{subfigure}[b]{\columnwidth}
\centering\
\includegraphics[height=\plotHeight]{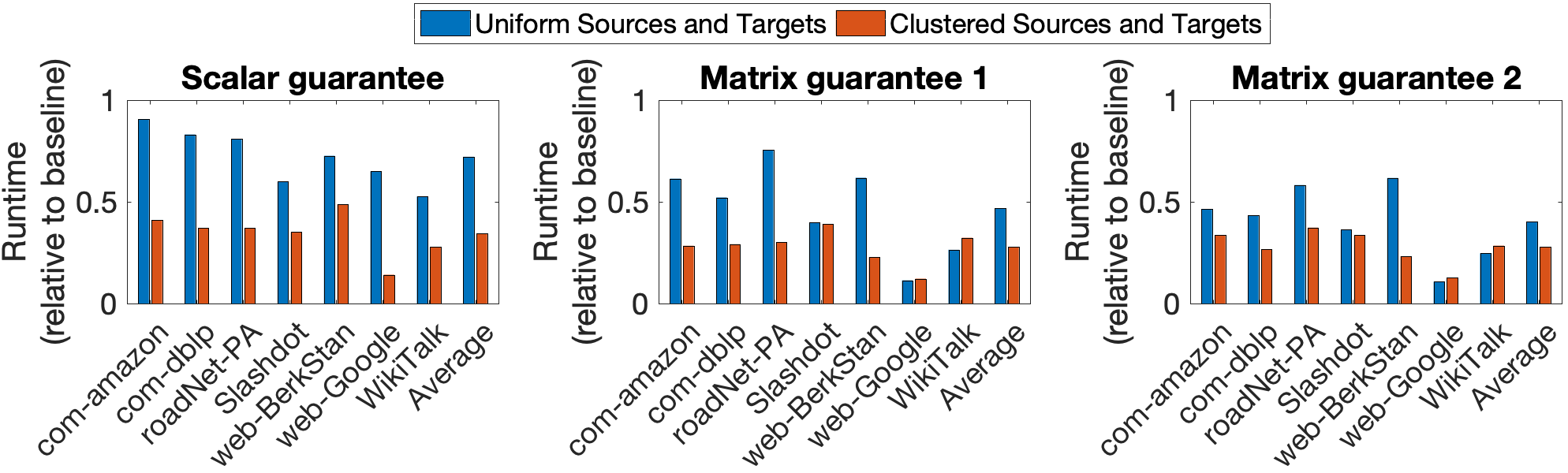} 
\caption{Across a diverse set of real graphs, our algorithms accelerate baseline methods; these accelerations are most significant when the sources and targets are clustered (experiment details in Section \ref{secExpReal}).} \label{figIntroAlgos}
\end{subfigure}
\begin{subfigure}[b]{0.64\columnwidth}
\centering\
\includegraphics[height=\plotHeight]{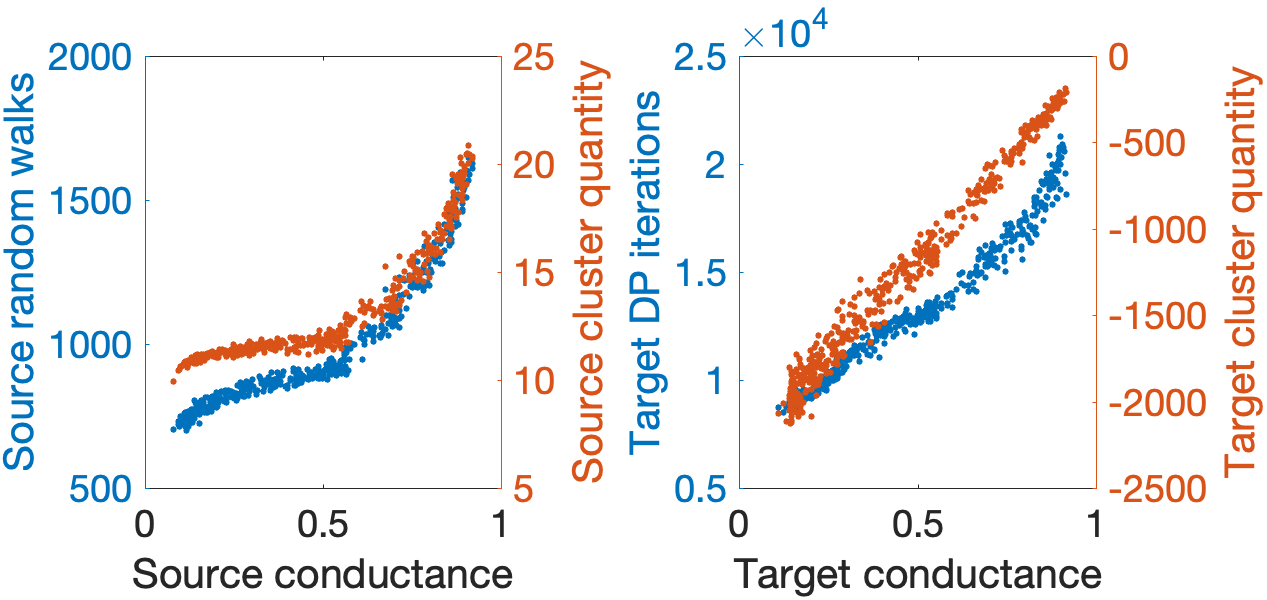} 
\caption{The complexity of the source and target stages of our methods scale with quantities that describe clustering of sources and targets, and that behave like conductance (experiment details in Section \ref{secExpSynScal}).} \label{figIntroClustQuant}
\end{subfigure}%
\hspace{0.02\columnwidth}%
\begin{subfigure}[b]{0.34\columnwidth}
\centering\
\includegraphics[height=\plotHeight]{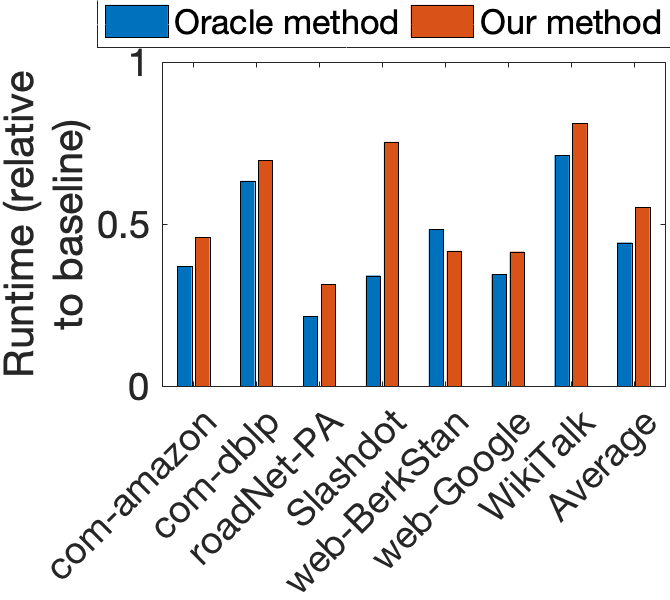}
\caption{Our findings can be used to identify clustering at runtime and accelerate PPR estimation (see Section \ref{secDistributed}).} \label{figIntroDist}
\end{subfigure}
\caption{Key empirical results} \label{figIntroResults}
\end{figure}

\textbf{Our contributions:} In this work, we analyze the relationship between clustering and PPR estimation complexity. In particular, we devise algorithms to estimate PPR for many source/target pairs, and we show that the complexity of these methods decreases with increased clustering among the sources and targets at hand. To demonstrate the consequences of our findings, we consider a distributed  setting in which this relationship between complexity and clustering can be leveraged to design more efficient algorithms. More specifically, our contributions are as follows: 
\begin{enumerate}

\item In Section \ref{secSinglePairDescription}, we propose an enhancement of \texttt{Bidirectional-PPR} \cite{lofgren2016personalized}, the state-of-the-art PPR estimator for a single source/target pair. As the name suggests, \texttt{Bidirectional-PPR} estimates PPR in two stages: random walks forward from the source node and dynamic programming (DP) backward from the target node. Our algorithm, called \texttt{FW-BW-MCMC}, adds a DP stage forward from the source that allows it to serve as a primitive in the many pair setting. In Appendix \ref{secSinglePairComparison}, we establish similar guarantees to those for \texttt{Bidirectional-PPR}.

\item In Section \ref{secManyPair}, we use \texttt{FW-BW-MCMC} as a primitive to estimate PPR for many pairs, proposing methods that accelerate the naive scheme of separately sampling random walks for each source and separately running DP for each target. For the sources, we show the forward DP allows random walk samples to be shared, decreasing the number of samples required. For the targets, we define a new iterative update for the backward DP, which eliminates repeated computations that may occur when treating each target separately. Using these ideas, we devise an algorithm with accuracy guarantees on each scalar estimate, and two algorithms with accuracy guarantees on the matrix containing all estimates. Across a diverse set of graphs, our methods are roughly 1.1 to 9.3 times faster than baseline methods (Fig.\ \ref{figIntroAlgos}).

\item We show analytically in Section \ref{secManyPair} and empirically in Section \ref{secExperiments} that the accelerations offered by our algorithms are most significant when the sources and targets are each clustered together in the graph, i.e.\ PPR estimation is ``easier'' when clustering occurs. For example, our algorithms typically accelerate baseline methods by factors of 3-4 when clustering occurs (Fig.\ \ref{figIntroAlgos}). More specifically, we prove the number of random walks for the sources and the number of DP iterations for the targets scale with quantities that describe clustering among the sources and targets, respectively; we find empirically that these clustering quantities scale with a more traditional clustering quantity, conductance (Fig.\ \ref{figIntroClustQuant}). \new{Also, while these clustering quantities are difficult to analyze in general, we provide analytical results for the stochastic block model, the prototypical model for networks with community structure.}

\item Finally, in Section \ref{secDistributed}, we demonstrate an application of our results, showing that our findings can be used to devise efficient algorithms when estimating PPR in a distributed setting. Specifically, we show that quantities computed during the forward DP can be used to predict the random walk sampling time for different assignments of tasks to machines, and we propose a natural but heuristic method to compute an assignment that (locally) minimizes this time. At a high level, our method ``learns'' the clustering structure present at runtime; empirically, this learning is quite successful, in the sense that our method performs nearly as well as an oracle method that knows the clustering structure \textit{a priori} (Fig.\ \ref{figIntroDist}).
\end{enumerate}
The remainder of the paper is organized as follows. We begin with preliminaries and related work in Sections \ref{secPrelim} and \ref{secRelated}, respectively. Sections \ref{secSinglePairDescription}-\ref{secDistributed} follow the outline above. We close in Section \ref{secConclusions}.

\section{Preliminaries} \label{secPrelim}

We begin with some preliminary definitions. Let $G = (V,E)$ be a directed graph, and define $n = |V|, m = |E|$. For $v \in V$, let $N_{\textrm{out}}(v) = \{ u \in V : v \rightarrow u \in E \}$ denote $v$'s outgoing neighbors, and let $d_{\textrm{out}}(v) = | N_{\textrm{out}}(v) |$ denote the out-degree of $v$. For simplicity, we assume $d_{\textrm{out}}(v) > 0$ $\forall\ v \in V$. Similarly define $N_{\textrm{in}}(v)$ and $d_{\textrm{in}}(v)$ as $v$'s incoming neighbors and in-degree. Finally, let $A$ denote the adjacency matrix of $G$, let $D$ be the diagonal matrix with $D(v,v) = d_{\textrm{out}}(v)$, and define $P = D^{-1} A$.

PageRank \cite{page1999pagerank} is the stationary distribution $\{ \pi(v) \}_{v \in V }$ of the Markov chain with transition matrix $(1-\alpha) P + \alpha \frac{1}{n} 1_n 1_n^T$, where $\alpha \in (0,1)$ and $1_n$ denotes the all ones vector of length $n$. In words, this chain is a random walk on $G$ for which, with probability $\alpha$ at each step, the random walker ``jumps'' to a uniform node, rather than following the walk. Let us denote this chain by $\{ X_i \}_{i \in \N}$. Clearly, $\{ X_i \}_{i \in \N}$ is irreducible and aperiodic, as $P(u,v) \geq \frac{\alpha}{n} > 0\ \forall\ u , v \in V$. Assuming $V$ is finite, positive recurrence follows, so $\{ \pi(v) \}_{ v \in V }$ exists, is unique, and satisfies
\begin{equation} \label{eqAvgNumberVisits}
\pi(v) = \lim_{N \rightarrow \infty} \frac{1}{N} \sum_{i=0}^{N-1} \mathbbm{1}_{ \{ X_i = v \} }\ \forall\ v \in V .
\end{equation}
It is by \eqref{eqAvgNumberVisits} that PageRank gives a measure of ``importance": we consider $v$ important when $\pi(v)$ is large, which occurs when $v$ is visited often on the chain $\{ X_i \}_{i \in \N}$.

We will henceforth refer to PageRank as \textit{global} PageRank to distinguish it from the generalization to PPR. Formally, PPR is the stationary distribution $\{ \pi_{\sigma}(v) \}_{ v \in V }$ of the Markov chain with transition matrix $P_{\sigma} = (1-\alpha) P + \alpha 1_n \sigma^T$. Here $\sigma$ is a nonnegative vector that sums to 1; hence, it yields a distribution on the jump locations, generalizing the uniform jumps of global PageRank. This gives $\pi_{\sigma}$ an interpretation like \eqref{eqAvgNumberVisits}, while also accounting for the preference on jump locations given by $\sigma$.

There are two important mathematical viewpoints of PPR that serve as the foundation for many estimation techniques; we will make use of both in the sections that follow. The first viewpoint is linear algebraic. Here we let $\pi_{\sigma}$ denote the stationary distribution as a row vector. By global balance, $\pi_{\sigma}$ satisfies $\pi_{\sigma} = \pi_{\sigma} P_{\sigma}$; solving for $\pi_{\sigma}$ (and assuming $\pi_{\sigma}$ is normalized to sum to 1) gives
\begin{equation} \label{eqPprLinAlg}
\pi_{\sigma} = \alpha \sigma^T ( I - (1-\alpha) P)^{-1} = \alpha \sigma^T \sum_{i=0}^{\infty} (1-\alpha)^i P^i .
\end{equation}
Note this immediately suggests estimating PPR by computing the first $k$ terms of the summation. The second PPR viewpoint is probabilistic. Denoting by $\{Y_i \}_{i \in \N}$ the Markov chain with transition matrix $P$, and letting $L \sim \textrm{geometric}(\alpha)$, \cite{athreya2003perfect} shows
\begin{equation} \label{eqPprEndpoint}
\pi_{\sigma}(v) = \P [ Y_L = v | Y_0 \sim \sigma ]\ \forall\ v \in V ,
\end{equation}
which again suggests estimation techniques; in this case, using Markov chain Monte Carlo (MCMC). We note this viewpoint is closely related to exact sampling \cite{propp1996exact} and Doeblin chains \cite{athreya2003perfect,meyn2012markov}.

Often, we let $\sigma = e_s$ for some $s \in V$, where $e_s \in \{ 0,1 \}^n$ satisfies $e_s(v) = \mathbbm{1}_{ \{ v = s \} }$. In such a case, we denote PPR as $\{ \pi_s(v) \}_{v \in V}$ and the  transition matrix as $P_s$. In fact, using \eqref{eqPprLinAlg}, one can show
\begin{equation} \label{eqLinearity}
\pi_{\sigma}(v) = \sum_{s \in V} \sigma(s) \pi_s(v)\ \forall\ v \in V .
\end{equation}
 Due to \eqref{eqLinearity}, many PPR estimation algorithms focus on estimating $\pi_s(v)$, from which extensions to estimating $\pi_{\sigma}(v)$ naturally follow; we focus on this as well throughout the paper.

Finally, we argue existence and uniqueness of PPR is not a concern. Indeed, the PPR Markov chain is aperiodic (since $P_s(s,s) \geq \alpha > 0$), so to guarantee existence and uniqueness, we need only verify irreducibility. For this, let $V_s =  \{ v \in V : \textrm{$\exists$ a path from $s$ to $v$ in $G$} \}$, so that $\forall\ u , v \in V_s$, $\exists\ i \in \N$ s.t.\ $P_s^i ( u , v ) > 0$ (we can jump from $u$ to $s$, then reach $v$ from $s$). Hence, if $P_s$ is \textit{not} irreducible, we can define a modified chain with states $V_s$ that \textit{is} irreducible and obtain the stationary distribution $\pi_s$ for this chain. We can then set $\pi_s(v) = 0\ \forall\ v \notin V_s$ -- intuitively, if $s$ cannot reach $v$, $v$ is not ``important" to $s$, so its PPR should be zero. Given this simple fix, we assume existence and uniqueness of PPR.

\begin{table}
\caption{Summary of notation} \label{tabSummaryNotation}
\begin{tabular}{|l|l|l|}
\hline
\textbf{Notation} & \textbf{Defined in} & \textbf{Description} \\ \hline
% $\alpha$ & Section \ref{secPrelim} & PageRank/PPR jump probability \\ \hline 
%$\pi$ & Section \ref{secPrelim} & Global PageRank vector  \\ \hline 
%$\pi_s$ & Section \ref{secPrelim} & PPR vector for $s \in V$  \\ \hline
$\pi, \pi_s$ & Section \ref{secPrelim} & Global PageRank vector, PPR vector for $s \in V$  \\ \hline
$p^s, r^s$ & Section \ref{secSinglePairDescription} & Output vectors of Algorithm \ref{algApproxPR} (forward DP); satisfy \eqref{eqFwInvariant} \\ \hline
$\sigma_s$ & Section \ref{secSinglePairDescription} & Starting distribution for walks in \texttt{FW-BW-MCMC}; $\sigma_s = r^s / \| r^s \|_1$ \\ \hline
$r^s_{\max}$ & Section \ref{secSinglePairDescription} & Upper bound on $\| r^s \|_1$ at conclusion of Algorithm \ref{algApproxPR} \\ \hline
$p^t, r^t$ & Section \ref{secSinglePairDescription} & Output vectors of Algorithm \ref{algApproxCont} (backward DP); satisfy \eqref{eqBwInvariant} \\ \hline
$r^t_{\max}$ & Section \ref{secSinglePairDescription} & Upper bound on $\| r^t \|_{\infty}$ at conclusion of Algorithm \ref{algApproxCont} \\ \hline
$\Sigma_S$ & Section \ref{secManySource} & Matrix with rows $\{ \sigma_s \}_{s \in S}$ (subscript $S$ omitted at times) \\ \hline
$\| \Sigma_S \|_{\infty,1}$ & Section \ref{secManySource} & Source clustering quantity; see \eqref{eqSourceClustQuant} \\ \hline
$c_T$ & Section \ref{secManyTarget} & Target clustering quantity; see \eqref{eqTargetClustQuant}  \\ \hline
$\srank(A)$ & Section \ref{secManyPairMatrix} & Stable rank of matrix $A$; $\srank(A) = (\|A\|_F/\|A\|_2)^2$ \\ \hline
$P_S, R_S, P_T, R_T$ & Section \ref{secManyPairMatrix} & Matrices containing $\{ p^s \}_{s \in S}$, $\{ r^s \}_{s \in S}$, $\{ p^t \}_{t \in T}$, $\{ r^t \}_{t \in T}$, resp.\ \\ \hline
$\Pi(S,T)$ & Section \ref{secManyPairMatrix} & Matrix containing $\{ \pi_s(t) \}_{s \in S, t \in T}$ \\ \hline
$\sigma_{\avg}, \sigma_{\max}$ & Section \ref{secManyPairMatrix} & Starting distributions for walks in matrix estimators; see \eqref{eqSigmaAvgSigmaMaxDefn} \\ \hline
$\Phi(U)$ & Section \ref{secExpSynScal} & Conductance of $U \subset V$; see \eqref{eqConductance} \\ \hline
$\sigma_S$ & Section \ref{secDistributed} & Vector with $\sigma_s(v) = \max_{s \in S} \sigma_s(v)$ \\ \hline
\end{tabular}
\end{table}

\section{Related Work} \label{secRelated}

Before proceeding, we discuss some existing PPR estimation algorithms. Broadly speaking, these can be organized hierarchically: first, those that estimate the entire PPR matrix $\{ \pi_s(t) \}_{s \in V, t \in V}$; second, those that estimate a single row $\{ \pi_s(t) \}_{t \in V}$ or column $\{ \pi_s(t) \}_{s \in V}$ of this matrix, or its column sums (i.e.\ global PageRank); and third, those that estimate a single entry $\pi_s(t)$.

At the first level, several algorithms have been proposed to accelerate the power iteration or matrix inversion in \eqref{eqPprLinAlg}. To accelerate the power iteration, \cite{jeh2003scaling} provides a decomposition that allows a single row of the PPR matrix to be estimated using previously-computed rows; hence, this yields a procedure of first computing a small number of rows and then using these to estimate other rows. To obtain less costly matrix inversions, several works, e.g.\ \cite{tong2008random,shin2015bear}, have leveraged structural assumptions of the graph at hand. For example, Tong \textit{et al.}\ in \cite{tong2008random} propose a decomposition of $P$ into a block diagonal matrix $P_1$ and $P_2 : = P - P_1$; for graphs like social networks, $P_2$ can be extremely sparse. From the probabilistic viewpoint \eqref{eqPprEndpoint}, \cite{fogaras2005towards} gives an algorithm to estimate any entry of the PPR matrix at runtime using a precomputed database of random walk samples. 

At the second level, algorithms include the dynamic programming methods in \cite{andersen2006local} and \cite{andersen2008local} that estimate a row and a column of the PPR matrix, respectively; both can be viewed as localized versions of the power iteration in \eqref{eqPprLinAlg}. The algorithm in \cite{andersen2006local} yields $l_1$ and $l_{\infty}$ error guarantees on the row estimate with complexity $O(m)$, while \cite{andersen2008local} gives an $l_{\infty}$ guarantee on the column estimate with complexity $O(m)$. We make use of these algorithms in our methods and will discuss them in more detail in Section \ref{secSinglePairDescription}. We also note the approach in \cite{andersen2006local,andersen2008local} is closely related to work by Lee and co-authors \cite{lee2013computing,lee2014asynchronous,lee2014solving} that focuses on estimation of the stationary distribution of countable state-space Markov chains, as well as estimation in the context of general linear systems. From the probabilistic viewpoint, an important work is \cite{avrachenkov2007monte}, which analyzes Monte Carlo methods for global PageRank estimation, based on both the final step of sampled random walks (as given by \eqref{eqPprEndpoint}) and the number of visits along the entire walk. In \cite{avrachenkov2007monte}, it is shown that a single walk from each node (i.e.\ $n$ walks total) suffices to obtain estimates with small relative error for nodes with high global PageRank. Another work in this category is \cite{borgs2014multiscale}, which uses random walk-based methods to detect all nodes with global PageRank exceeding $n^{-\delta}, \delta \in (0,1)$ with complexity sublinear in $n$. \cite{borgs2014multiscale} also contains an algorithm to estimate a row of the PPR matrix with each estimate satisfying a multiplicative plus additive error guarantee; the complexity is linear in $n$ (if the error tolerance is set to match \cite{lofgren2016personalized}).

At the third level, the aforementioned \texttt{Bidirectional-PPR} algorithm from \cite{lofgren2016personalized} combines existing dynamic programming and Monte Carlo methods to estimate a single PPR value with worst-case and average-case complexity $O(n)$ and $O(\sqrt{m})$, respectively. From an accuracy perspective, this algorithm achieves a relative error bound for PPR values exceeding $1/n$, and an absolute error bound otherwise. We discuss this algorithm in more detail in Section \ref{secSinglePairDescription}.

In the context of this body of work, we will consider estimation of a small set of PPR values, $\{ \pi_s(t) \}_{s \in S, t \in T}$ for some $S,T \subset V$. While we do not precisely quantify ``small'', we implicitly assume $|S| \approx |T| = o( \sqrt{m} )$. In this setting, the existing methods described above can be applied in two ways. First, using methods such as the power iteration or the dynamic programming schemes (i.e.\ the first two levels of the above hierarchy), one can estimate entire rows and/or columns of the PPR matrix and then discard unwanted estimates. Such approaches typically have complexity $O( |S| m )$ or $O( |T| m )$ (depending on exactly which approach is used). Second, one can run the single pair estimator \texttt{Bidirectional-PPR} separately for each pair $(s,t) \in S \times T$. This approach has typical complexity $O(|S| |T| \sqrt{m} )$. When $|S| \approx |T| = o( \sqrt{m} )$, the second approach is more efficient. Hence, we will treat this approach as a baseline for comparison to our methods. Our primary contribution is to show that this baseline can be accelerated by exploiting clustering among $S$ and $T$ to estimate PPR values \textit{jointly}, rather than running \texttt{Bidirectional-PPR} \textit{separately} for each $(s,t) \in S \times T$.

\section{Single node pair estimation} \label{secSinglePairDescription}

We begin by proposing an enhancement of \texttt{Bidirectional-PPR} \cite{lofgren2016personalized}, the state-of-the-art single pair PPR estimator; we will introduce our algorithm and then describe \texttt{Bidirectional-PPR} as a special case. As mentioned in Section \ref{secRelated}, the idea behind these estimators is to combine dynamic programming (DP) and Markov chain Monte Carlo (MCMC) to estimate $\pi_s(t)$ for some $s, t \in V$. Our algorithm uses two DP stages and one MCMC stage. We will refer to these stages as the forward DP, backward DP, and MCMC stages; hence, we call our estimator \texttt{FW-BW-MCMC}. It is depicted pictorially in Fig.\ \ref{figAlgDepiction} and defined formally in Algorithm \ref{algOurEstimator}. Before proceeding, we briefly describe each stage.

\begin{figure}
\centering
\includegraphics[height=1in]{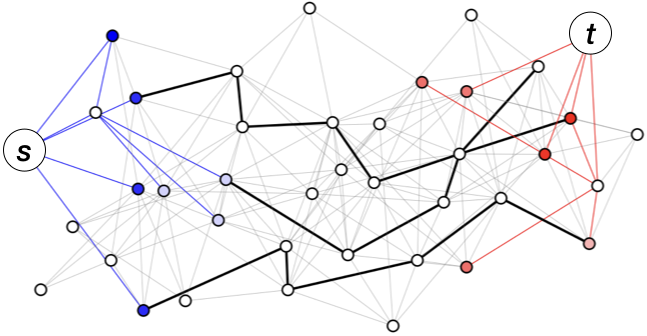}
\caption{Depiction of our algorithm \texttt{FW-BW-MCMC}. Blue and red nodes and edges show forward and backward dynamic programming, respectively; thick black edges show random walks.} \label{figAlgDepiction}
\end{figure}

The forward DP stage is Algorithm \ref{algApproxPR}. This is nearly identical to the \texttt{Approximate-PageRank} algorithm of \cite{andersen2006local}, so we use the same name here; however, we change the termination criteria from $\| D^{-1} r^s \|_{\infty} \leq r_{\max}^s$ to $\| r^s \|_{1} \leq r_{\max}^s$, where $r^s_{\max} \in (0,1)$ is an input to the algorithm (we describe our motivation for this change shortly). The algorithm takes as input $s \in V$ and produces $p^s, r^s \in \R_+^n$, shown in \cite{andersen2006local} to satisfy the invariant \eqref{eqFwInvariant} at each iteration.
\begin{equation} \label{eqFwInvariant}
\pi_s(u) = p^s(u) + \sum_{w \in V} r^s(w) \pi_w(u)\ \forall\ u \in V .
\end{equation}
As mentioned Section \ref{secRelated}, Algorithm \ref{algApproxPR} can be viewed as a ``localized" power iteration. At a high level, it computes elements of the matrices in \eqref{eqPprLinAlg} corresponding to high probability paths from $s$ to $u$ (the $p^s(u)$ term) while tracking the error from ``uncomputed" paths (the $\sum_{w \in V} r^s(w) \pi_w(u)$ term). These high probability paths are shown as blue edges in Fig.\ \ref{figAlgDepiction}.

The backward DP stage is \texttt{Approximate-Contributions} (Algorithm \ref{algApproxCont}, from \cite{andersen2008local}), which can be viewed as the dual of Algorithm \ref{algApproxPR}: while Algorithm \ref{algApproxPR} works forwards (along outgoing edges), Algorithm \ref{algApproxCont} works backwards (along incoming edges).  In \cite{andersen2008local}, it is shown that Algorithm \ref{algApproxCont} maintains the invariant \eqref{eqBwInvariant}, which can be interpreted similarly to \eqref{eqFwInvariant}. This stage is shown in red in Fig.\ \ref{figAlgDepiction}.
\begin{equation} \label{eqBwInvariant}
\pi_v(t) = p^t(v) + \sum_{w \in V} \pi_v(w) r^t(w)\ \forall\ v \in V.
\end{equation}

To motivate the MCMC stage, first observe that combining \eqref{eqFwInvariant} and \eqref{eqBwInvariant} with $u = t$ and $v = s$ gives
\begin{equation} \label{eqFwBwInvariant}
\pi_s(t) = p^t(s) + \langle p^s , r^t \rangle +  \sum_{w, w' \in V} r^s(w) \pi_w(w') r^t(w') ,
\end{equation}
and so, after running the DP stages, only the third term in \eqref{eqFwBwInvariant} is unknown. The goal of the MCMC stage is to estimate this term. Towards this end, let $\sigma_s = r^s / \| r^s \|_1$ and use \eqref{eqLinearity} to write this term as
\begin{equation}
\| r^s \|_1 \sum_{w' \in V}  \sum_{w \in V} \sigma_s(w) \pi_w(w')  r^t (w') = \| r^s \|_1 \sum_{w' \in V} \pi_{\sigma_s}(w') r^t(w') = \| r^s \|_1  \E_{U\sim \pi_{\sigma_s}} \left[ r^t(U) \right] .
\end{equation}
Leveraging the probabilistic PPR interpretation \eqref{eqPprEndpoint}, we can then estimate this term by sampling random walks. More specifically, we first sample a starting node from $\sigma_s$ (blue nodes in Fig.\ \ref{figAlgDepiction}), and we then sample a random walk beginning at the starting node (black edges in Fig.\ \ref{figAlgDepiction}). This process of sampling random walks is the MCMC stage of our algorithm.

As mentioned above, the forward DP stage uses termination criteria $\| r^s \|_{1} \leq r_{\max}^s$, rather than the $\| D^{-1} r^s \|_{\infty} \leq r_{\max}^s$ criteria used in \cite{andersen2006local}. This is because we will require a uniform bound on $\{ \| r^s \|_1 \}_{s \in S}$ when proving results pertaining to a set sources $S$ in later sections. However, this bound is not needed in practice, where we can instead use $\| D^{-1} r^s \|_{\infty} \leq r_{\max}^s$ termination. We call this variant of our algorithm \texttt{FW-BW-MCMC-Practical}; see Algorithm \ref{algOurEstimatorPractical} in Appendix \ref{appPracticalVersion} for a formal definition.

Having defined \texttt{FW-BW-MCMC}, we can describe the existing algorithm \texttt{Bidirectional-PPR}, which (in brief) operates as follows: run the backward DP from $t$, take $v = s$ in \eqref{eqBwInvariant}, and estimate the unknown term $\E_{U\sim \pi_s} [ r^t(U) ]$ by sampling random walks from $s$. We observe this is a special case of \texttt{FW-BW-MCMC}; specifically, the case $r^s_{\max} = 1$. We reemphasize that walks are sampled from $\nu \sim \sigma_s$ in \texttt{FW-BW-MCMC} and from $s$ in \texttt{Bidirectional-PPR}, which will be an important distinction later.

In the next sections, we will propose many pair estimators that use either \texttt{Bidirectional-PPR} or our enhancement as a primitive. We will show that using our enhancement as the primitive offers runtime accelerations not possible when using \texttt{Bidirectional-PPR}. Implicit in this discussion will be an understanding that using either primitive yields similar performance when these accelerations are ignored (so that using our enhancement as the primitive offers better performance when the accelerations are accounted for). In particular, we can prove the following results (as single pair estimation is not our focus, we defer formal statements and proofs to Appendix \ref{secSinglePairComparison}):
\begin{enumerate}
\item \texttt{FW-BW-MCMC}, \texttt{FW-BW-MCMC-Practical}, and \texttt{Bidirectional-PPR} offer the same accuracy guarantee (except for mild differences in assumptions) 

\item \texttt{FW-BW-MCMC} and \texttt{Bidirectional-PPR} have worst-case complexity $O(n)$

\item \texttt{FW-BW-MCMC-Practical} and \texttt{Bidirectional-PPR} have average-case complexity $O(\sqrt{m})$
\end{enumerate}

\begin{algorithm}
\caption{$(p^s,r^s) = \texttt{Approximate-PageRank} (G,s,\alpha,r_{\max}^s)$} \label{algApproxPR}
Initialize $p^s = 0, r^s = e_s$ \\
\While{$\| r^s \|_{1} > r_{\max}^s$}{
	 Let $v^* \in \arg \max_{v \in V} r^s(v) / d_{\textrm{out}}(v)$ \\
%	\For{ $u \in N_{\textrm{out}}(v^*)$ }{
%		  $r^s(u) \leftarrow r^s(u) + (1-\alpha) r^s(v^*) / d_{\textrm{out}}(v^*)$
%	}
	Set $r^s(u) \leftarrow r^s(u) + (1-\alpha) r^s(v^*) / d_{\textrm{out}}(v^*)\ \forall\ u \in N_{\textrm{out}}(v^*)$,  $p^s(v^*) \leftarrow p^s(v^*) + \alpha r^s(v^*), r^s(v^*) = 0$
}
\end{algorithm}
\begin{algorithm}
\caption{$(p^t,r^t) = \texttt{Approximate-Contributions} (G,t,\alpha,r_{\max}^t)$} \label{algApproxCont}
Initialize $p^t = 0, r^t = e_t$ \\
\While{$\| r^t \|_{\infty} > r_{\max}^t$}{
	 Let $v^* \in \arg \max_{v \in V} r^t(v)$ \\
%	\For{ $u \in N_{\textrm{in}}(v^*)$ }{
%		  $r^t(u) \leftarrow r^t(u) + (1-\alpha) r^t(v^*) / d_{\textrm{out}}(u)$
%	}
	Set $r^t(u) \leftarrow r^t(u) + (1-\alpha) r^t(v^*) / d_{\textrm{out}}(u)\ \forall\ u \in N_{\textrm{in}}(v^*)$, $p^t(v^*) \leftarrow p^t(v^*) + \alpha r^t(v^*), r^t(v^*) = 0$
}
\end{algorithm}
\begin{algorithm}
\caption{$\hat{\pi}_s(t) = \texttt{FW-BW-MCMC} (G,s,t,\alpha,r_{\max}^s,r_{\max}^t,w)$} \label{algOurEstimator}
 Let $(p^s,r^s) = \texttt{Approximate-PageRank} (G,s,\alpha,r_{\max}^s)$ (Algorithm \ref{algApproxPR}); set $\sigma_s = r^s / \| r^s \|_1$ \\
 Let $(p^t,r^t) = \texttt{Approximate-Contributions} (G,t,\alpha,r_{\max}^t)$ (Algorithm \ref{algApproxCont}) \\
\For{ $i = 1$ \KwTo $w$}{
	 Sample random walk starting at $\nu \sim \sigma_s$ of length $\sim$ geom($\alpha$); let $X_i = r^t ( U_i )$, where $U_i$ is endpoint of walk
}
 Let $\hat{\pi}_s(t) = p^t(s) + \langle p^s , r^t \rangle +  \frac{\| r^s \|_1}{w} \sum_{i=1}^w X_i$
\end{algorithm}

\section{Many node pair estimation} \label{secManyPair}

In this section, we consider the problem of estimating PPR for many node pairs, $\{ \pi_s(t) \}_{s \in S, t \in T}$ for some $S, T \subset V$. We consider two variants of this problem. First, in Section \ref{secManyPairScalar}, we view $\{ \pi_s(t) \}_{s \in S, t \in T}$ as a set of scalars, each of which we aim to accurately estimate. Second, in Section \ref{secManyPairMatrix}, we view $\{ \pi_s(t) \}_{s \in S, t \in T}$ as a matrix, which we aim to approximate accurately in the operator norm. For both variants, we propose algorithms that accelerate existing approaches, and we show the accelerations scale with quantities that can be interpreted as clustering measures of $S$ and $T$. In addition to these algorithms, we briefly discuss variants that use precomputation in Section \ref{secPrecomp}.

\subsection{Scalar estimation viewpoint} \label{secManyPairScalar}

Given $S,T \subset V$, a natural approach to estimate $\pi_s(t)\ \forall\ (s,t) \in S \times T$ is to use single pair estimators from Section \ref{secSinglePairDescription} as primitives. In particular, we could use either of the following approaches:
 \begin{itemize}
\item Run forward DP and sample random walks from $\nu \sim \sigma_s$ for each $s \in S$. Run backward DP from each $t \in T$. Compute estimates as in \texttt{FW-BW-MCMC}.
\item Sample random walks from each $s \in S$. Run backward DP from each $t \in T$. Compute estimates as in \texttt{Bidirectional-PPR}. 
\end{itemize}
As argued in Appendix \ref{secSinglePairComparison}, the primitives \texttt{FW-BW-MCMC} and \texttt{Bidirectional-PPR} are roughly equivalent in terms of complexity and accuracy; hence, both approaches have similar complexity. However, in Section \ref{secManySource}, we show the source stage of the first approach (forward DP and random walks) can be accelerated in a way not possible for the second approach. Further, in Section \ref{secManyTarget}, we show the target stage (backward DP) can be accelerated as well. Hence, using primitive method \texttt{FW-BW-MCMC} and the accelerations of Sections \ref{secManySource}-\ref{secManyTarget}, we can more efficiently estimate $\{ \pi_s(t) \}_{s \in S, t \in T}$.

\begin{figure}
\centering
\begin{subfigure}[b]{0.48\columnwidth}
\centering
\includegraphics[height=1in]{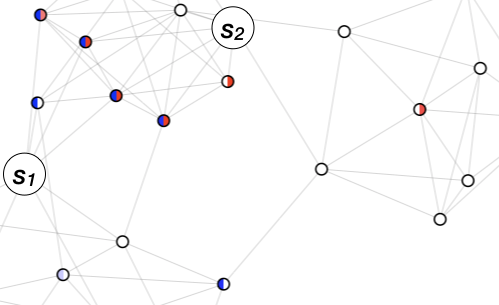}
\caption{Walks are sampled from blue nodes for $s_1$ and from red nodes for $s_2$; walks from blue \textit{and} red nodes are shared between $s_1$ and $s_2$.} \label{figTwoSource}
\end{subfigure}%
\hspace{0.04\columnwidth}%
\begin{subfigure}[b]{0.48\columnwidth}
   \centering
  \includegraphics[height=1in]{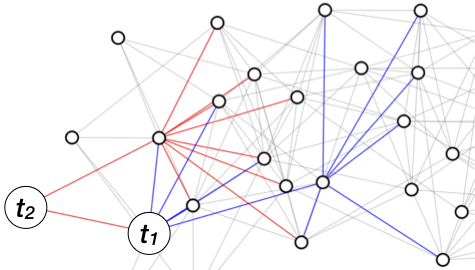}
   \caption{Red paths are computed via \texttt{Extend} during $t_2$ DP; blue paths can be computed via \texttt{Merge} during $t_2$ DP, rather than recomputed via \texttt{Extend}.} \label{figTwoTargets}
\end{subfigure}
\caption{Accelerations for \texttt{FW-BW-MCMC} when $|S| = |T| = 2$} \label{figAccelerationDepictions}
\end{figure}

\subsubsection{Source stage acceleration} \label{secManySource}

To accelerate the source stage, we define a unified MCMC stage for a set of sources $S$. At a high level, this scheme allows us to share walks across multiple $s \in S$, thereby decreasing the total number of walks required. We motivate the scheme pictorially in Fig.\ \ref{figTwoSource}, for the simple case $S = \{s_1, s_2\}$. Here blue and red depict $\sigma_{s_1}$ and $\sigma_{s_2}$ values, i.e.\ blue and red nodes are the starting nodes of random walks used in the $\pi_{s_1}$ and $\pi_{s_2}$ estimates, respectively. Observe several nodes have nonzero $\sigma_{s_1}$ and $\sigma_{s_2}$ values. The unified MCMC stage allows us to use random walks sampled from such nodes towards both estimates ($\pi_{s_1}$ and $\pi_{s_2}$).

To define the unified MCMC stage, we first define an equivalent MCMC stage for a single source. Recall that in Algorithm \ref{algOurEstimator} we sample each of $w$ random walks in two stages: first, we sample starting node $\nu_s \sim \sigma_s$, and second, we sample a walk starting at $\nu_s$. Equivalently, we can first sample starting nodes $\{ \nu_s^{(i)} \}_{i=1}^w$ i.i.d.\ from $\sigma_s$, and then sample $X^{(w)}_s(v) : = \sum_{i=1}^w \1_{  \{ \nu_s^{(i)} = v \} }$ walks starting at $v$, for each $v \in V$. With this in mind, the unified MCMC stage proceeds as follows. First, for each $s \in S$ we sample starting nodes $\{ \nu_s^{(i)} \}_{i=1}^w$ i.i.d.\ from $\sigma_s$ (as in the single source case), and we define $X^{(w)}_s(v)$ as above. Next, we sample $X^{(w)}(v) : = \max_{s \in S} X^{(w)}_s(v)$ walks starting at each $v \in V$. Letting $U_i^{v}$ denote the endpoint of the $i$-th walk from $v$,  we then estimate $\pi_s(t)$ as
\begin{equation} \label{eqMaxSamplingEstimate}
\hat{\pi}_s(t) = p^t(s) + \langle p^s , r^t \rangle +  \frac{\| r^s \|_1}{w} \sum_{v \in V : X^{(w)}_s(v) > 0} \sum_{i=1}^{X^{(w)}_s(v)} r^t( U_i^{v} ) .
\end{equation}
The final term in \eqref{eqMaxSamplingEstimate} is an unbiased estimate of $\E_{U \sim \pi_{\sigma_s}} [ r^t  ( U ) ]$ using $\sum_{v \in V} X_s^{(w)}(v) = w$ random walks, so the accuracy guarantee of Algorithm \ref{algOurEstimator} holds. To analyze the complexity of this scheme, first note we must sample $\sum_{v \in V} X^{(w)}(v)$ walks in total. We can then prove the following.

\begin{theorem} \label{STATE_UNIFIED_MCMC_COMP}
Fix $\epsilon, p_{\fail} \in (0,1)$. Assume $w$, the number of random walks required for each $s \in S$ in the unified MCMC stage from Section \ref{secManySource}, satisfies
\begin{equation}\label{eqUnifiedMcmcSampleComp}
w > \frac{3 \log( 2 \sum_{s \in S, v \in V} \mathbbm{1}_{ \{ \sigma_s(v) > 0 \} }  /p_{\fail}) }{ \epsilon^2 \min_{s \in S, v \in V : \sigma_s(v) > 0} \sigma_s(v) } .
\end{equation} 
Then with probability at least $1-p_{\fail}$, the total number of walks $\sum_{v \in V} X^{(w)}(v)$ sampled satisfies
\begin{equation} \label{eqUnifiedMcmcCompGuarantee}
\left| \sum_{v \in V} X^{(w)}(v) - w \sum_{v \in V} \max_{s \in S} \sigma_s(v) \right| \leq \epsilon w \sum_{v \in V} \max_{s \in S} \sigma_s(v) .
\end{equation}
\end{theorem}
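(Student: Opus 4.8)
The plan is to prove \eqref{eqUnifiedMcmcCompGuarantee} via a multiplicative Chernoff bound applied to each source/node pair, followed by a union bound. The key observation is that for a fixed $s \in S$, the starting nodes $\{ \nu_s^{(i)} \}_{i=1}^w$ are i.i.d.\ samples from $\sigma_s$, so for each $v \in V$ the count $X^{(w)}_s(v) = \sum_{i=1}^w \1_{\{ \nu_s^{(i)} = v \}}$ is a sum of $w$ i.i.d.\ $\mathrm{Bernoulli}(\sigma_s(v))$ random variables; in particular $\E[ X^{(w)}_s(v) ] = w \sigma_s(v)$. Note we will not need any independence \emph{across} $s \in S$.

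First I would fix a pair $(s,v)$ with $\sigma_s(v) > 0$ and apply the two-sided multiplicative Chernoff bound to $X^{(w)}_s(v)$: for $\epsilon \in (0,1)$,
\begin{equation}
\P\Big[ \big| X^{(w)}_s(v) - w \sigma_s(v) \big| > \epsilon\, w \sigma_s(v) \Big] \leq 2 \exp\Big( - \tfrac{\epsilon^2 w \sigma_s(v)}{3} \Big) \leq 2 \exp\Big( - \tfrac{\epsilon^2 w \min_{s' \in S,\, v' \in V : \sigma_{s'}(v') > 0} \sigma_{s'}(v') }{3} \Big) .
\end{equation}
Taking a union bound over the $\sum_{s \in S, v \in V} \1_{\{ \sigma_s(v) > 0 \}}$ pairs with $\sigma_s(v) > 0$ and substituting the hypothesis \eqref{eqUnifiedMcmcSampleComp} on $w$, the probability that \emph{some} such pair violates its Chernoff bound is strictly less than $p_{\fail}$. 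Hence, with probability at least $1 - p_{\fail}$, the event
\begin{equation}
\mathcal{E} = \Big\{ (1-\epsilon)\, w \sigma_s(v) \leq X^{(w)}_s(v) \leq (1+\epsilon)\, w \sigma_s(v)\ \ \forall\ s \in S,\, v \in V \Big\}
\end{equation}
holds; here I also use that when $\sigma_s(v) = 0$ we have $X^{(w)}_s(v) = 0 = w\sigma_s(v)$ with probability $1$, so those pairs satisfy the bounds trivially and need not enter the union bound (which is what makes the constant in \eqref{eqUnifiedMcmcSampleComp} come out exactly right).

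Finally, on $\mathcal{E}$, for every $s \in S$ we have $X^{(w)}_s(v) \leq (1+\epsilon)\, w \sigma_s(v) \leq (1+\epsilon)\, w \max_{s' \in S} \sigma_{s'}(v)$, and taking the maximum over $s$ gives $X^{(w)}(v) = \max_{s \in S} X^{(w)}_s(v) \leq (1+\epsilon)\, w \max_{s \in S} \sigma_s(v)$; likewise, picking $s^* \in \argmax_{s \in S} \sigma_s(v)$ gives $X^{(w)}(v) \geq X^{(w)}_{s^*}(v) \geq (1-\epsilon)\, w \sigma_{s^*}(v) = (1-\epsilon)\, w \max_{s \in S} \sigma_s(v)$. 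Summing these two inequalities over $v \in V$ yields \eqref{eqUnifiedMcmcCompGuarantee}. The only real subtlety is bookkeeping: one must union bound over exactly the set of pairs with $\sigma_s(v) > 0$ to match the constant in \eqref{eqUnifiedMcmcSampleComp}, and verify that the pointwise two-sided multiplicative bounds are preserved under both the maximum over $s$ and the sum over $v$ (using only monotonicity of $\max$ and of summation); no step is technically difficult.
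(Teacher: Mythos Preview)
Your proposal is correct and follows essentially the same approach as the paper: apply a multiplicative Chernoff bound to each $X_s^{(w)}(v)$ (a $\mathrm{Binomial}(w,\sigma_s(v))$ random variable), union bound over the pairs $(s,v)$ with $\sigma_s(v)>0$, and then pass from the per-pair bounds to the sum of maxima via monotonicity. The only cosmetic difference is that the paper handles the upper and lower tails separately (each contributing $p_{\fail}/2$) and phrases the reduction as an inclusion of events, whereas you first define the good event $\mathcal{E}$ via a two-sided bound and then work deterministically on it; the resulting constants and logic are identical.
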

\begin{proof}
See Appendix \ref{PROOF_UNIFIED_MCMC_COMP}.
\end{proof}
Before proceeding, we offer several remarks on this result:
\begin{itemize}
\item A lower bound on $w$ is given by \eqref{eqFwBwMcmcAccAss} in Theorem \ref{STATE_FWBWMCMC_ACC} to guarantee accuracy of each estimate. Thus, if $w$ exceeds both \eqref{eqUnifiedMcmcSampleComp} and \eqref{eqFwBwMcmcAccAss}, we obtain guarantees for both scalar accuracy and complexity of the walks. (In general, though, it is unclear which of \eqref{eqUnifiedMcmcSampleComp} and \eqref{eqFwBwMcmcAccAss} is larger.)
\item \new{In the worst case, the denominator on the right side of \eqref{eqUnifiedMcmcSampleComp} may be quite small, so the assumption on $w$ in Theorem \ref{STATE_UNIFIED_MCMC_COMP} may be restrictive. However, this only means that the concentration in \eqref{eqUnifiedMcmcCompGuarantee} may not provably occur, \textit{not} that the scheme will necessarily have poor performance. Furthermore, we find that this concentration essentially occurs for the values of $w$ used in practice, see e.g.\ leftmost plot in Fig.\ \ref{figClusteringPlot} and left two plots in Fig.\ \ref{figRealMultiPairDetails}.}
%\item While the quantity on the right side of \eqref{eqUnifiedMcmcSampleComp} is somewhat unwieldy, it can still be computed at runtime (after running forward DP) to assess the scheme's complexity.
\item We will denote the matrix with rows $\{ \sigma_s \}_{s \in S}$ by $\Sigma$ (or by $\Sigma_S$, if we wish to emphasize the sources $S$ at hand) and will write the bound in \eqref{eqUnifiedMcmcCompGuarantee} as
\begin{equation} \label{eqSourceClustQuant}
\| \Sigma \|_{\infty,1} = \sum_{v \in V} \max_{s \in S} \sigma_s(v) 
\end{equation}
Here we have used the notation of the $L_{p,q}$ matrix norm, defined for a matrix $A$ as
\begin{equation} \label{eqLpqNorm}
\| A \|_{p,q} = \left( \textstyle \sum_j \left( \textstyle \sum_i | A(i,j) |^p \right)^{q/p} \right)^{1/q} .
\end{equation}
\end{itemize}

From Theorem \ref{STATE_UNIFIED_MCMC_COMP}, we expect to sample approximately $w \| \Sigma \|_{\infty,1}$ walks when $w$ is large. It is easily verified that $\| \Sigma \|_{\infty,1} \in [1,|S|]$, so our approach requires $w |S|$ random walks in the worst case, but only $w$ in the best case. In contrast, if we use \texttt{Bidirectional-PPR} as a primitive for many pair estimation, the unified MCMC stage is not possible (all random walks used to estimate $\pi_s$ begin at $s$, so sharing walks is not possible), and $w |S|$ walks are \textit{always} required. In short, \texttt{FW-BW-MCMC} with the unified MCMC stage accelerates the source stage of our many pair estimation approach.

\new{Unfortunately, it is difficult to quantify the degree of this acceleration in general, in part because $\| \Sigma_S \|_{\infty,1}$ depends on the forward DP, which itself is difficult to analyze. However, in Section \ref{secExperiments}, we offer empirical evidence that $\| \Sigma_S \|_{\infty,1}$ scales with the \textit{conductance} of $S$, a common measure of the clustering of $S$ in the underlying graph (see \eqref{eqConductance} for a formal definition). Furthermore, as will be discussed next, this quantity provably scales with clustering for the \textit{stochastic block model}, a common model for networks with community structure. In short, when $S$ is clustered in the graph, $\| \Sigma_S \|_{\infty,1}$ is typically small, and estimating PPR for many sources is easier.

We now turn to our result for the stochastic block model. We consider the special case for which $n$ is a perfect square and the graph is composed of $\sqrt{n}$ communities, each containing $\sqrt{n}$ nodes. (This allows us to compare the extremes of choosing $\sqrt{n}$ sources from the same community or from distinct communities; however, the analysis can be modified for other cases.) More specifically, we define $V_{n,i} = \{ 1 + (i-1) \sqrt{n} , \ldots , i \sqrt{n} \}$ and set $V_n = \cup_{i=1}^{\sqrt{n}} V_{n,i}$; we will view each $V_{n,i}$ as a community. For $v \in V_n$, we denote by $i(v)$ the unique $i \in \{1,\ldots,\sqrt{n}\}$ satisfying $v \in V_{n,i}$, i.e.\ $i(v)$ is the community that $v$ belongs to. We then construct a graph $G_n = (V_n, E_n)$ as follows: for any $u, v \in V_n$ s.t.\ $u \neq v$, edge $u \rightarrow v$ is present with probability $p_n$ if $i(u) = i(v)$ (i.e.\ if $u,v$ are in the same community), and is present with probability $q_n$ if $i(u) \neq i(v)$ (i.e.\ if $u,v$ are in different communities), independent of other edges. We also define
\begin{equation}
d_{\outT}(v) = | \{ u \in V_n : v \rightarrow u \in E_n \} | , \quad d_{\outT}^-(v) = | \{ u \in V_n \setminus V_{n,i(v)} : v \rightarrow u \in E_n \} | \quad \forall\ v \in V_n .
\end{equation}
In words, $d_{\outT}(v)$ is $v$'s out-degree (as before, though here it is a random variable), and $d_{\outT}^-(v)$ is the number of edges pointing from $v$ to other communities.

Our analysis will assume $p_n = p$ is a constant and $q_n = o ( 1 / \sqrt{n} )$. In this case, $\E [ d_{\outT}(v) ] = \Theta(\sqrt{n})$ (i.e.\ the graph is dense) and $\E [ d_{\outT}^-(v) ] = o ( \sqrt{n} )$ (i.e.\ nodes prefer to connect to their own community). Also, we assume the forward DP is run for at most $o(\sqrt{n})$ iterations. Since all nodes have out-degree $\Theta(\sqrt{n})$ with high probability (see proof of Theorem \ref{thmSbmSources}), this means we dedicate at most $o( n )$ complexity to the forward DP. This is consistent with the fact that our algorithm has average-case complexity $O ( \sqrt{m} )$, since $\sqrt{m} = n^{3/4}$ when all out-degrees are $\Theta(\sqrt{n})$. Hence, this assumption on the number of iterations is minor. Under these assumptions, we can prove the following:
\begin{theorem} \label{thmSbmSources}
Let $\{ G_n = (V_n,E_n) \}_{n \in \N : \sqrt{n} \in \N}$ be the sequence of stochastic block models described above, with $p_n = p$ for some constant $p \in (0,1)$ and $q_n = o ( 1 / \sqrt{n} )$. Assume we run the forward DP for at least one iteration, but at most $o ( \sqrt{n} )$ iterations. Then the following hold:
\begin{itemize}
\item For each $n$, let $S_n = V_{n,i}$ for some $i \in \{1,\ldots,\sqrt{n}\}$, i.e.\ all sources belong to the same community. If $q_n = \Omega ( \log n / n )$ (i.e.\ cross-community connections are dense), then for some constant $C > 0$,
\begin{equation}
\lim_{n \rightarrow \infty} \P \left( \| \Sigma_{S_n} \|_{\infty,1}  \leq C q_n n \right) = 1 ,\\
\end{equation}
i.e.\ $\| \Sigma_{S_n} \|_{\infty,1} = O ( q_n n ) = o ( \sqrt{n} )$ with high probability. If instead $q_n = \Theta(1/n)$ (i.e.\ cross-community connections are sparse), then for some constant $C > 0$,
\begin{equation}
\lim_{n \rightarrow \infty} \P \left( \| \Sigma_{S_n} \|_{\infty,1} \leq C \log n / \log \log n \right) = 1 ,
\end{equation}
i.e.\ $\| \Sigma_{S_n} \|_{\infty,1} = O ( \log n / \log \log n )$ with high probability.
\item For each $n$, let $S_n \subset V_n$ with $|S_n| = \sqrt{n}$ and $i(s) \neq i(s')\ \forall\ s , s' \in S_n$ s.t.\ $s \neq s'$, i.e.\ each source belongs to a distcint community. Then for any constant $\delta \in (0,1)$,
\begin{equation}
\lim_{n \rightarrow \infty} \P \left( \| \Sigma_{S_n} \|_{\infty,1} \geq (1-\delta) \sqrt{n} \right) = 1 ,
\end{equation}
i.e.\ $\| \Sigma_{S_n} \|_{\infty,1} \in [ ( 1-\delta) \sqrt{n} , \sqrt{n} ]$ with high probability.
\end{itemize}
\end{theorem}
\begin{proof}
\iftoggle{arxiv}{%
See Appendix \ref{appProofThmSbmSources}.%
}{%
See Appendix H in \cite{fullPaper}.%
}%
\end{proof}
}

\subsubsection{Target stage acceleration} \label{secManyTarget}

Our next goal is to accelerate the target stage of the many pair estimation approach. For this, we propose a unified DP stage that avoids repeated computations that may occur when running backward DP for each target separately. We motivate our approach in the simple case $T = \{ t_1, t_2 \}$. Assume that $p^{t_1}, r^{t_1}$ have been computed by Algorithm \ref{algApproxCont}, and that $p^{t_2}, r^{t_2}$ are currently being computed. If $r^{t_2}(t_1) > r^t_{\max}$ at some iteration, we can use the alternate update rule given by \eqref{eqMergeUpdate} (rather than that given in Algorithm \ref{algApproxCont}).
\begin{equation} \label{eqMergeUpdate}
p^{t_2} \leftarrow p^{t_2} + r^{t_2}(t_1) p^{t_1}, \quad r^{t_2} \leftarrow r^{t_2} + r^{t_2}(t_1) ( r^{t_1} - e_{t_1} ) .
\end{equation}
When $p^{t_2}, r^{t_2}$ are updated via \eqref{eqMergeUpdate}, the invariant \eqref{eqBwInvariant} is maintained. Indeed, for any $s \in V$,
\begin{align}
&  p^{t_2}(s) + r^{t_2}(t_1) p^{t_1}(s) + \sum_{u \in V} \pi_s(u) ( r^{t_2}(u) + r^{t_2}(t_1) ( r^{t_1}(u) - e_{t_1}(u) ) )  \\
& \quad =  p^{t_2}(s) + \sum_{u \in V} \pi_s(u) r^{t_2}(u) + r^{t_2}(t_1) ( ( p^{t_1}(s) + \sum_{u \in V} \pi_s(u) r^{t_1}(u) ) - \pi_s(t_1) ) ) \label{eqMergeInvariant1}  \\
& \quad = \pi_s(t_2) + r^{t_2}(t_1) ( \pi_s(t_1) - \pi_s(t_1) ) = \pi_s(t_2) , \label{eqMergeInvariant2} 
\end{align}
where in \eqref{eqMergeInvariant1} we simply rearranged terms and in \eqref{eqMergeInvariant2} we assume $p^{t_1}, r^{t_1}$ and $p^{t_2}, r^{t_2}$ satisfy \eqref{eqBwInvariant}. 

We can interpret the update rule \eqref{eqMergeUpdate} as follows. As discussed in Section \ref{secSinglePairDescription}, we view Algorithm \ref{algApproxCont} as a method of traversing paths to $t$ and computing the probability of these paths. For the update in Algorithm \ref{algApproxCont}, specific paths are extended by a single step at each iteration; we call this update \texttt{Extend}. In contrast, the alternate update rule \eqref{eqMergeUpdate} extends paths by (potentially) many steps in an iteration; specifically, by appending paths to $t_1$, with paths from $t_1$ to $t_2$, to obtain paths to $t_2$. We call this update \texttt{Merge} to highlight that longer paths are combined to obtain new paths.

The utility of \texttt{Merge} is that the probability of paths to $t_2$ through $t_1$ need not be recomputed one step at a time via \texttt{Extend}. This is depicted in Fig.\ \ref{figTwoTargets}: red paths are computed via \texttt{Extend} during $t_2$ DP; blue paths, having already been computed via \texttt{Extend} during $t_1$ DP, are used to compute longer paths in a single iteration via \texttt{Merge} during $t_2$ DP. In contrast, blue paths would be recomputed one step at a time via \texttt{Extend} during $t_2$ DP, if separate DP was used. In short, \texttt{Merge} may allow Algorithm \ref{algApproxCont} to terminate in fewer iterations. This is made more specific in Proposition \ref{STATE_BW_ITER_MERGE}.

\begin{proposition} \label{STATE_BW_ITER_MERGE}
Suppose $T = \{t_1,t_2\}$ and $\pi_{t_1}(t_2) > r^t_{\max}$. If we run Algorithm \ref{algApproxCont} for $t_2$ and use \texttt{Merge} at iterations for which $v^* = t_1$, the algorithm terminates in at most $\frac{n \pi(t_2)}{\alpha r^t_{\max}} - \frac{( \| p^{t_1} \|_1 - \alpha )}{\alpha}$ iterations. If \texttt{Merge} is not used, the number of iterations for termination is at most $\frac{n \pi(t_2)}{\alpha r^t_{\max}}$. 
\end{proposition}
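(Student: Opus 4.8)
The plan is to compare the growth of $\|p^{t_2}\|_1$ across iterations against the classical upper bound $\|p^{t_2}\|_1 \le n\pi(t_2)$. For the upper bound, I would first recall the identity $\sum_{v \in V} \pi_v(w) = n\pi(w)$ for every $w$, which follows from \eqref{eqPprLinAlg} applied to $\sigma = e_v$ and to $\sigma = \tfrac1n 1_n$ (global PageRank). The invariant \eqref{eqBwInvariant} is maintained for $t_2$ at every iteration: by the computation \eqref{eqMergeInvariant1}--\eqref{eqMergeInvariant2} it is preserved by the \texttt{Merge} update, and it is preserved by the ordinary \texttt{Extend} update by \cite{andersen2008local}. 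Summing \eqref{eqBwInvariant} over $v$ and using $r^{t_2}, \pi \ge 0$ then gives $n\pi(t_2) = \|p^{t_2}\|_1 + n\sum_w r^{t_2}(w)\pi(w) \ge \|p^{t_2}\|_1$ throughout.

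Next I would analyze the per-iteration increase of $\|p^{t_2}\|_1$, which is monotone nondecreasing (both updates add nonnegative vectors to $p^{t_2}$). On an iteration with $v^* \ne t_1$ (an \texttt{Extend} step), $\|p^{t_2}\|_1$ grows by $\alpha r^{t_2}(v^*) = \alpha\|r^{t_2}\|_\infty > \alpha r^t_{\max}$, since the while-loop guard forces $\|r^{t_2}\|_\infty > r^t_{\max}$; on an iteration with $v^* = t_1$ (a \texttt{Merge} step), \eqref{eqMergeUpdate} grows it by $r^{t_2}(t_1)\|p^{t_1}\|_1 > r^t_{\max}\|p^{t_1}\|_1$. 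I also need $\|p^{t_1}\|_1 \ge \alpha$: in the first iteration of $t_1$'s run of \texttt{Approximate-Contributions}, $r^{t_1} = e_{t_1}$ forces $v^* = t_1$, which adds at least $\alpha$ to $\|p^{t_1}\|_1$, and $\|p^{t_1}\|_1$ never decreases. Writing $K_E, K_M$ for the number of \texttt{Extend} and \texttt{Merge} iterations, summing the increments and invoking $\|p^{t_2}\|_1 \le n\pi(t_2)$ at termination gives $K_E\,\alpha r^t_{\max} + K_M\, r^t_{\max}\|p^{t_1}\|_1 < n\pi(t_2)$; dividing by $\alpha r^t_{\max}$ and setting $K = K_E + K_M$ rearranges to $K < \tfrac{n\pi(t_2)}{\alpha r^t_{\max}} - K_M\tfrac{\|p^{t_1}\|_1 - \alpha}{\alpha}$. (This also establishes termination, since each iteration adds at least $\alpha r^t_{\max}$ to a quantity bounded by $n\pi(t_2)$.)

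The remaining step — and the one I expect to be the main obstacle — is to show $K_M \ge 1$, i.e.\ that \texttt{Merge} is invoked at least once; this is where the hypothesis $\pi_{t_1}(t_2) > r^t_{\max}$ is used. Suppose \texttt{Merge} is never invoked. Then $v^* \ne t_1$ in every iteration, so every write to $p^{t_2}$ touches a coordinate other than $t_1$, and hence $p^{t_2}(t_1) = 0$ at termination. Evaluating \eqref{eqBwInvariant} at $v = t_1$ at termination and using $\|r^{t_2}\|_\infty \le r^t_{\max}$ together with $\sum_w \pi_{t_1}(w) = 1$ yields $\pi_{t_1}(t_2) = \sum_w \pi_{t_1}(w) r^{t_2}(w) \le r^t_{\max}$, a contradiction. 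Hence $K_M \ge 1$, and since $\|p^{t_1}\|_1 - \alpha \ge 0$ the bound above becomes $K < \tfrac{n\pi(t_2)}{\alpha r^t_{\max}} - \tfrac{\|p^{t_1}\|_1 - \alpha}{\alpha}$, which is the claimed bound. For the statement when \texttt{Merge} is not used, every iteration is an \texttt{Extend} step, so $K\,\alpha r^t_{\max} < \|p^{t_2}\|_1 \le n\pi(t_2)$ gives $K < \tfrac{n\pi(t_2)}{\alpha r^t_{\max}}$, recovering the classical guarantee. The only genuinely delicate points are tracking that no write to $p^{t_2}$ lands on coordinate $t_1$ once \texttt{Merge} is suppressed, and squeezing the invariant against $r^t_{\max}$ using that $\pi_{t_1}$ is a probability vector; the rest is routine bookkeeping of monotone increments and of strict versus non-strict inequalities.
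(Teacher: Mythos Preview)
Your proposal is correct and follows essentially the same argument as the paper: bound $\|p^{t_2}\|_1 \le n\pi(t_2)$ via the invariant, track the per-iteration increase under \texttt{Extend} versus \texttt{Merge}, and use the hypothesis $\pi_{t_1}(t_2) > r^t_{\max}$ together with the invariant at $v=t_1$ to force at least one \texttt{Merge} step. Your write-up is slightly more explicit than the paper's in spelling out why $\sum_v \pi_v(t_2)=n\pi(t_2)$ and why $\|p^{t_1}\|_1\ge\alpha$, but the structure and key ideas coincide.
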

\begin{proof}
See Appendix \ref{PROOF_BW_ITER_MERGE}.
\end{proof}

From Algorithm \ref{algApproxCont}, $\| p^{t_1} \|_1 \geq \alpha$. Hence, Proposition \ref{STATE_BW_ITER_MERGE} allows us to tighten the iteration bound by $\frac{( \| p^{t_1} \|_1 - \alpha )}{\alpha} \geq 0$ (with equality if and only if the algorithm terminates in a single iteration for $t_1$). In the more general case, the iterations we save roughly scales with the quantity  
\begin{equation} \label{eqTargetClustQuant}
c_T = \sum_{i=1}^{|T|} \left| \left\{ j \in \{ 1 , 2 , ... , i-1 \} : \pi_{t_j}(t_i) > r^t_{\max} \right\} \right| ,
\end{equation}
assuming the nodes in $T$ are chosen in order $\{t_1,t_2,\dotsc, t_{|T|}\}$. We note the choice of this order has a clear impact on performance, but optimizing it at runtime is difficult; we discuss this more in %
\iftoggle{arxiv}{%
Appendix \ref{secTargetOrder}. %
}{%
Appendx I of \cite{fullPaper}. %
}%
See Algorithm \ref{algApproxContMany} for our many target algorithm.

We next offer a clustering interpretation of the quantity $c_T$. For this, note $\pi_{t_j}(t_i) > r^t_{\max}$ is a notion of ``closeness" between $t_i$ and $t_j$; hence, $c_T$ is a notion of clustering of the set $T$, and our analysis suggests estimating PPR for many targets is easier when the targets are clustered. Note that, while the source clustering quantity $\| \Sigma \|_{\infty,1}$ from Section \ref{secManySource} is \textit{smaller} when clustering among sources is more significant, the target clustering quantity $c_T$ is \textit{larger} when clustering among targets is more significant; in Section \ref{secExperiments}, we show $-c_T$ scales with the conductance of $T$ in practice.

\begin{algorithm}
\new{
\caption{$\{ (p^t,r^t) \}_{t \in T} = \texttt{Approx-Cont-Many-Targets} (G,T = \{t_1,t_2,\ldots,t_{|T|}\} ,\alpha,r_{\max}^t)$} \label{algApproxContMany}
\For{$i=1$ \KwTo $|T|$}{
 $p^{t_i} = 0, r^{t_i} = e_{t_i}$ \\
\While{$\| r^{t_i} \|_{\infty} > r_{\max}^t$}{
	  $v^* \in \arg \max_{v \in V} r^{t_i}(v)$   \\
	\uIf{$ v^* \in \{ t_1, \ldots , t_{i-1} \}$}{
		 $p^{t_i} \leftarrow p^{t_i} + r^{t_i}(v^*) p^{v^*},  r^{t_i} \leftarrow r^{t_i} + r^{t_i}(v^*) ( r^{v^*} - e_{v^*} )$  (i.e.\ use \texttt{Merge} update \eqref{eqMergeUpdate})
}
\Else{
	 $r^{t_i}(u) \leftarrow r^{t_i}(u) + (1-\alpha) \frac{r^{t_i}(v^*)}{ d_{\textrm{out}}(u)}\ \forall\ u \in N_{\textrm{in}}(v^*)$, $p^{t_i}(v^*) \leftarrow p^{t_i}(v^*) + \alpha r^{t_i}(v^*), r^{t_i}(v^*) = 0$
}	
}
}
}
\end{algorithm}

\subsection{Matrix approximation viewpoint} \label{secManyPairMatrix}

For the second variant of the many pair estimation problem, we view $\{ \pi_s(t) \}_{s \in S, t \in T}$ as a matrix that we aim to accurately approximate. For simplicity, we assume $|S| = |T| = l$, and we denote these sets $S = \{ s_i \}_{i=1}^l, T = \{ t_i \}_{i=1}^l$. We also assume $V = \{ 1 , 2 , ... , n \}$, and we let $\Pi$ denote the matrix of dimension $n \times n$ whose $(i,j)$-th element is $\pi_i(j)$. In this notation, we seek an estimate $\hat{\Pi}(S,T)$ of $\Pi(S,T)$ that minimizes $\| \hat{\Pi}(S,T) - \Pi(S,T) \|_2$, where for a matrix $A$, $A(I,J)$ denotes the submatrix of $A$ containing rows $I$ and columns $J$, and where $\| A \|_2 = \max_{\|x\|_2 = 1} \| A x \|_2$ is the operator norm.

Before proceeding, we introduce additional notation used in this section. Similar to the $A(I,J)$ notation, $A(I,:)$ and $A(:,J)$ are the submatrices with rows $I$ and all columns, and all rows and columns $J$, respectively. For a vector $x$, $x(I)$ is the vector with elements $I$; when $x$ has nonzero entries, $\diag(x)$ and $\diag(1/x)$ are the diagonal matrices whose $i$-th diagonal elements are $x(i)$ and $1/x(i)$, respectively. Finally, we will encounter stable rank, which for a matrix $A$ is defined as $\srank(A) = (  \| A \|_F / \| A \|_2  )^2$, where $\| \cdot \|_F = \| \cdot \|_{2,2}$ is the Frobenius norm, with $\| \cdot \|_{2,2}$ defined as in \eqref{eqLpqNorm}. It is straightforward to verify $1 \leq \srank ( A ) \leq \rank ( A )$ by writing $\| A \|_F^2$ and $\| A \|_2^2$ in terms the singular values of $A$ (see, for example, Section 2.1.15 of \cite{tropp2015introduction}).

With this notation in mind, we define the following matrices:
\begin{gather}
P_S \in \R^{n \times l} \textrm{ s.t.\ } P_S(i,j) = p^{s_j}(i), \quad R_S \in \R^{n \times l} \textrm{ s.t.\ } R_S(i,j) = r^{s_j}(i) , \label{eqSmatDefn} \\
P_T \in \R^{n \times l} \textrm{ s.t.\ } P_T(i,j) = p^{t_j}(i) , \quad R_T \in \R^{n \times l} \textrm{ s.t.\ } R_T(i,j) = r^{t_j}(i) . \label{eqTmatDefn} 
\end{gather}
Here $p^{s_j}, r^{s_j}$ and $p^{t_j}, r^{t_j}$ are assumed to have been computed via Algorithms \ref{algApproxPR} and \ref{algApproxContMany}, respectively. We may then collect the invariant \eqref{eqFwBwInvariant} for each $(s_i, t_j)$ pair in matrix form as
\begin{equation} \label{eqMatrixInvariant}
\Pi(S,T) = P_T(S,:) + P_S^{\trans} R_T + R_S^{\trans} \Pi R_T .
\end{equation}
Observe only $R_S^{\trans} \Pi R_T $ is unknown in \eqref{eqMatrixInvariant}. Hence, we consider estimation of this term. To this end, let $\sigma$ be any $n$-length vector satisfying $\sigma(i) > 0\ \forall\ i \in \{1,2,...,n\}$ and $\sum_{i=1}^n \sigma(i) =1$; note we may view $\sigma$ as a distribution on $V$. We then rewrite the unknown term in \eqref{eqMatrixInvariant} as
\begin{equation} \label{eqMatrixInvariantUnknown}
R_S^{\trans} \Pi R_T  = R_S^{\trans} \diag( 1 / \sigma )  \diag( \sigma ) \Pi  R_T .
\end{equation}
Using \eqref{eqMatrixInvariantUnknown}, we can obtain unbiased estimates of $R_S^{\trans} \Pi R_T$ as follows. Let $\{ \mu_i \}_{i=1}^w$ be i.i.d. samples from $\sigma$. For $i \in \{ 1 , 2 , ... , w \}$, let $\nu_i \sim \pi_{\mu_i}$ independently (where we sample from $\pi_{\mu_i}$ using a random walk, as given by \eqref{eqPprEndpoint}), and let $X_i = R_S^{\trans} \diag( 1 / \sigma ) e_{\mu_i} e_{\nu_i}^{\trans} R_T$. It is straightforward to see $\E [ e_{\mu_i} e_{\nu_i}^{\trans} ] = \diag( \sigma ) \Pi$; hence, $\E [ X_i ] = R_S^{\trans} \Pi R_T$. We may then estimate $\Pi(S,T)$ as $\hat{\Pi}(S,T) = P_T(S,:) + P_S^{\trans} R_T + \frac{1}{w} \sum_{i=1}^w X_i$.

We will consider two forms of $\sigma$ for this approach. Specifically, let us define
\begin{equation} \label{eqSigmaAvgSigmaMaxDefn}
\sigma_{\avg}(i) = \frac{1}{l} \sum_{s \in S} \sigma_s(i)  , \quad \sigma_{\textrm{max}}(i) = \frac{1}{\| \Sigma \|_{\infty,1}} \max_{s \in S} \sigma_s(i) ,
\end{equation}
where $\sigma_s = r^s / \| r^s \|_1$ as before. Observe that when $\sigma \in \{ \sigma_{\avg} , \sigma_{\max} \}$, the assumption $\sum_{i=1}^n \sigma(i) = 1$ is satisfied. Furthermore, we argue that the assumption $\sigma(i) > 0$ is without loss of generality in these cases. Indeed, suppose $\sigma(j) = 0$ for some $j$ and $\sigma(i) > 0$ for $i \neq j$. Then $\P [ \mu_i = j ] = 0$ by definition, and by \eqref{eqSigmaAvgSigmaMaxDefn}, $r^s(j) = 0\ \forall\ s \in S$. It is then readily verified that $R_S ( V \setminus \{ j \} , : )^{\trans} \diag( 1 / \sigma ( V \setminus \{ j \} ) ) e_{\mu_i} e_{\nu_i}^{\trans} R_T$ is an unbiased estimate of $R_S^{\trans} \Pi R_T$. Given this simple fix, we assume $\sigma(i) > 0$ moving forward.

\begin{algorithm}
\caption{$\hat{\Pi}(S,T) = \texttt{FW-BW-MCMC-Many-Pair} (G,S =  \{ s_i \}_{i=1}^l,T= \{ t_i \}_{i=1}^l ,\alpha,r_{\max}^t,r_{\max}^s,w)$} \label{algMatrixEstimator}
\For{ $i = 1$ \KwTo $l$}{
	 $(p^{s_i},r^{s_i}) = \texttt{Approximate-PageRank} (G,s_i,\alpha,r_{\max}^s)$ (Algorithm \ref{algApproxPR})
}
 $\{ (p^t,r^t) \}_{t \in T} = \texttt{Approx-Cont-Many-Targets} (G,T,\alpha,r_{\max}^t)$ (Algorithm \ref{algApproxContMany}) \\
 Construct $P_S, R_S, P_T, R_T$ via \eqref{eqSmatDefn}, \eqref{eqTmatDefn}; compute $\sigma = \sigma_{\avg}$ or $\sigma = \sigma_{\max}$ via \eqref{eqSigmaAvgSigmaMaxDefn} \\
\For{ $i = 1$ \KwTo $w$}{
	 Let $X_i = R_S^{\trans} \diag(1/\sigma) e_{\mu_i} e_{\nu_i}^{\trans} R_T$, where $\nu_i$ is endpoint of walk starting at $\mu_i \sim \sigma$ of length $\sim$ geom($\alpha$)
}
Let $\hat{\Pi}(S,T) = P_T(S,:) + P_S^{\trans} R_T + \frac{1}{w} \sum_{i=1}^w X_i$
\end{algorithm}

To summarize, we have proposed the matrix approximation scheme formally defined in Algorithm \ref{algMatrixEstimator}. Theorem \ref{STATE_MATRIX_ACC} provides a guarantee for the accuracy of this scheme.

\begin{theorem} \label{STATE_MATRIX_ACC}
Fix $\epsilon > 0$. If $\sigma = \sigma_{\avg}$ in Algorithm \ref{algMatrixEstimator}, assume the number of walks $w$ satisfies
\begin{equation}
w \geq  l^2 \sqrt{\srank(\Pi(S,T))} \log ( 2 l / p_{\textrm{fail}} ) r^s_{\max} r^t_{\max} (6+4\epsilon) / (3\epsilon^2) .
%w \geq \frac{ l^2 \sqrt{\srank(\Pi(S,T))} \log ( 2 l / p_{\textrm{fail}} ) r^s_{\max} r^t_{\max} (6+4\epsilon) }{ 3 \epsilon^2 } .
\end{equation}
If instead $\sigma = \sigma_{\max}$ in Algorithm \ref{algMatrixEstimator}, assume $w$ satisfies
\begin{equation}
w \geq  l^{3/2} \| \Sigma \|_{\infty,1} \log ( 2 l / p_{\textrm{fail}} ) r^s_{\max} r^t_{\max} (6+4\epsilon) / (3\epsilon^2) .
%w \geq \frac{  l^{3/2} \| \Sigma \|_{\infty,1} \log ( 2 l / p_{\textrm{fail}} ) r^s_{\max} r^t_{\max} (6+4\epsilon) }{ 3 \epsilon^2 } .
\end{equation}
Then for both choices of $\sigma$, and with probability at least $1-p_{\textrm{fail}}$, Algorithm \ref{algMatrixEstimator} returns an estimate $\hat{\Pi}(S,T)$ satisfying $\| \Pi(S,T) - \hat{\Pi}(S,T) \|_2 \leq \epsilon \max \{ \| \Pi(S,T) \|_2 , 1 \}$.
\end{theorem}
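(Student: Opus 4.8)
The plan is to write the estimation error as an average of i.i.d.\ mean-zero matrices and apply the matrix Bernstein inequality (e.g., Theorem~6.1.1 of \cite{tropp2015introduction}). Since each $X_i$ in Algorithm~\ref{algMatrixEstimator} satisfies $\E X_i = R_S^\trans \Pi R_T$, the matrix invariant \eqref{eqMatrixInvariant} gives $\hat{\Pi}(S,T) - \Pi(S,T) = \tfrac1w\sum_{i=1}^w (X_i - \E X_i)$, so it suffices to bound the norm of this sum of $w$ independent copies of the centered $l\times l$ matrix $X-\E X$. Matrix Bernstein controls this deviation in terms of an almost-sure bound $R$ on $\tfrac1w\|X-\E X\|_2$ and a variance proxy $v \geq \tfrac1w\max\{\|\E[(X-\E X)(X-\E X)^\trans]\|_2,\|\E[(X-\E X)^\trans(X-\E X)]\|_2\}$: taking $t = \epsilon\max\{\|\Pi(S,T)\|_2,1\}$, the deviation exceeds $t$ with probability at most $2l\exp(-(t^2/2)/(v+Rt/3))$, so the result follows once $v+Rt/3 \leq t^2/(2\log(2l/p_{\fail}))$, which the stated bound on $w$ is designed to force.

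For $R$, I would use that $X = (R_S^\trans\diag(1/\sigma)e_\mu)(R_T^\trans e_\nu)^\trans$ is rank one, hence $\|X\|_2 = \big(\|R_S(\mu,:)\|_2/\sigma(\mu)\big)\|R_T(\nu,:)\|_2$. Here $\|R_T(\nu,:)\|_2 \leq \sqrt{l}\,r^t_{\max}$ since every entry $r^{t_j}(\nu)$ is at most $r^t_{\max}$ (termination of Algorithm~\ref{algApproxContMany}), and $\|R_S(\mu,:)\|_2 \leq \|R_S(\mu,:)\|_1 = \sum_j r^{s_j}(\mu)$. The key inputs are $\|r^{s_j}\|_1 \leq r^s_{\max}$ (exactly why Algorithm~\ref{algApproxPR} uses $\|r^s\|_1$ termination) and $r^{s_j}(\mu) \leq (\max_s\sigma_s(\mu))r^s_{\max}$: the former gives $\sigma_{\avg}(\mu) \geq \sum_j r^{s_j}(\mu)/(l r^s_{\max})$, the latter gives $\sum_j r^{s_j}(\mu) \leq l r^s_{\max}\max_s\sigma_s(\mu)$, and combining with $\sigma_{\max}(\mu) = \max_s\sigma_s(\mu)/\|\Sigma\|_{\infty,1}$ yields $\|R_S(\mu,:)\|_2/\sigma(\mu) \leq l r^s_{\max}$ for $\sigma=\sigma_{\avg}$ and $\leq l r^s_{\max}\|\Sigma\|_{\infty,1}$ for $\sigma=\sigma_{\max}$. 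Hence $\|X\|_2$, and therefore $\|X-\E X\|_2 \leq 2\sup\|X\|_2$, is at most $2 l^{3/2}r^s_{\max}r^t_{\max}$ (resp.\ $2 l^{3/2}\|\Sigma\|_{\infty,1}r^s_{\max}r^t_{\max}$).

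For $v$, I would bound both operator norms by $\E\|X\|_F^2$ (each expectation is positive semidefinite with trace equal to $\E\|X\|_F^2$, and the subtracted terms $\E X\,\E X^\trans$, $\E X^\trans \E X$ are positive semidefinite). Writing $\E\|X\|_F^2 = \sum_\mu \tfrac{\|R_S(\mu,:)\|_2^2}{\sigma(\mu)}\sum_\nu \pi_\mu(\nu)\|R_T(\nu,:)\|_2^2$, I would estimate the inner sum via $\|R_T(\nu,:)\|_2^2 \leq r^t_{\max}\sum_i r^{t_i}(\nu)$ and the backward invariant \eqref{eqBwInvariant}, which gives $\sum_\nu \pi_\mu(\nu)r^{t_i}(\nu) = \pi_\mu(t_i)-p^{t_i}(\mu) \leq \pi_\mu(t_i)$; and the outer factor via the same source-side bounds as above, namely $\|R_S(\mu,:)\|_2^2/\sigma(\mu) \leq C\sum_j r^{s_j}(\mu)$ with $C = l r^s_{\max}$ (resp.\ $C = \|\Sigma\|_{\infty,1}r^s_{\max}$), followed by the forward invariant \eqref{eqFwInvariant}, which gives $\sum_\mu r^{s_j}(\mu)\pi_\mu(t_i) = \pi_{s_j}(t_i) - p^{s_j}(t_i) \leq \pi_{s_j}(t_i)$. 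Chaining these estimates collapses $\E\|X\|_F^2$ to $C r^t_{\max}\sum_{i,j}\pi_{s_j}(t_i) = C r^t_{\max}\,\mathbf{1}^\trans\Pi(S,T)\mathbf{1}$, and a Frobenius Cauchy--Schwarz against $\mathbf{1}\mathbf{1}^\trans$ (of Frobenius norm $l$) together with $\|\Pi(S,T)\|_F = \sqrt{\srank(\Pi(S,T))}\,\|\Pi(S,T)\|_2$ gives $\E\|X\|_F^2 \leq l^2\sqrt{\srank(\Pi(S,T))}\,r^s_{\max}r^t_{\max}\|\Pi(S,T)\|_2$ for $\sigma=\sigma_{\avg}$ and $\leq l\|\Sigma\|_{\infty,1}\sqrt{\srank(\Pi(S,T))}\,r^s_{\max}r^t_{\max}\|\Pi(S,T)\|_2$ for $\sigma=\sigma_{\max}$.

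Finally I would substitute these into $v+Rt/3 \leq t^2/(2\log(2l/p_{\fail}))$: using $\|\Pi(S,T)\|_2 \leq \max\{\|\Pi(S,T)\|_2,1\} = t/\epsilon$, $\srank(\Pi(S,T)) \geq 1$ (so $l^{3/2} \leq l^2\sqrt{\srank}$), $\srank(\Pi(S,T)) \leq l$ (so $l\sqrt{\srank} \leq l^{3/2}$), and $\epsilon < 1$, both inequalities reduce to exactly the stated lower bound on $w$ — with the variance proxy $v$ producing the ``$6$'' and the almost-sure term $Rt/3$ producing the ``$4\epsilon$'' in the factor $(6+4\epsilon)/(3\epsilon^2)$. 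The main obstacle I anticipate is the variance estimate: threading both dynamic-programming invariants so that $\E\|X\|_F^2$ reduces cleanly to $\mathbf{1}^\trans\Pi(S,T)\mathbf{1}$ while keeping the prefactor at $l^2$ (resp.\ scaling with $\|\Sigma\|_{\infty,1}$) rather than something looser; the almost-sure bound and the Bernstein bookkeeping are comparatively routine.
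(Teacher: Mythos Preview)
Your proposal is correct and matches the paper's proof in structure: matrix Bernstein applied to $\tfrac1w\sum_i(X_i-\E X_i)$, with the same almost-sure bounds $\|X\|_2\le l^{3/2}r^s_{\max}r^t_{\max}$ (resp.\ $l^{3/2}\|\Sigma\|_{\infty,1}r^s_{\max}r^t_{\max}$) obtained via exactly the source-side inequalities you describe, and the same reduction of the second moment to $\sum_{s,t}\pi_s(t)$ via the combined invariant~\eqref{eqFwBwInvariant}.

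The one noteworthy difference is the $\sigma_{\max}$ variance bound. The paper does not bound $m_2(X)$ by $\E\|X\|_F^2$ in that case; instead it writes out $\E[XX^\trans]$ and $\E[X^\trans X]$ explicitly, bounds each operator norm by the maximum row sum (Perron--Frobenius, since both matrices are symmetric and nonnegative), and obtains $m_2(X)\le l\|\Sigma\|_{\infty,1}r^s_{\max}r^t_{\max}\max\{\|\Pi(S,T)\|_\infty,\|\Pi(S,T)\|_1\}$, finishing with the norm equivalence $\|\cdot\|_\infty,\|\cdot\|_1\le\sqrt{l}\,\|\cdot\|_2$. Your route---reusing the trace bound and then invoking $\srank(\Pi(S,T))\le l$---is cleaner and unifies the two cases; the paper's route gives a slightly sharper intermediate estimate (in terms of induced $1$- and $\infty$-norms rather than Frobenius norm) but lands at the same $l^{3/2}\|\Sigma\|_{\infty,1}r^s_{\max}r^t_{\max}$ after the norm equivalence. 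Either way the constants work out to $(6+4\epsilon)/(3\epsilon^2)$, and the assumption $\epsilon<1$ you mention is not actually needed.
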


\begin{proof} See Appendix \ref{PROOF_MATRIX_ACC} \end{proof}

We note that, neglecting common factors, Theorem \ref{STATE_MATRIX_ACC} states $w$ scales with $l^2$ and $l^{3/2}$ in the best case for $\sigma_{\avg}$ and $\sigma_{\max}$, respectively; in the worst case, $w$ scales with $l^{5/2}$ for both approaches. In the next section, we compare $\sqrt{ l\ \srank(\Pi(S,T)) }$ with $\| \Sigma \|_{\infty,1}$ empirically to compare the ``typical" case.

Next, we observe Theorem \ref{STATE_MATRIX_ACC} shows that, as in the scalar estimation viewpoint of Section \ref{secManyPairScalar}, PPR matrix approximation is easier when clustering occurs. This is because, when $\sigma = \sigma_{\max}$, complexity scales with $\| \Sigma \|_{\infty,1}$ (which we have argued is measure of clustering of $S$); when $\sigma = \sigma_{\avg}$, complexity scales with $\srank(\Pi(S,T))$, a measure of matrix dimensionality. Additionally, stable rank is unique from the clustering quantities introduced thus far in that it takes into account both $S$ and $T$ (unlike $\| \Sigma \|_{\infty,1}$, which only accounts for $S$, or $c_T$, which only accounts for $T$).

Finally, we comment on a difference for the choices of $\sigma$. In particular, when $\sigma = \sigma_{\max}$, one can set $w$ proportional to $\| \Sigma \|_{\infty,1}$ before sampling random walks, leveraging clustering at runtime to increase efficiency. In contrast, when $\sigma = \sigma_{\avg}$, the scaling factor in the $w$ lower bound is the unknown quantity $\srank(\Pi(S,T))$. However, we propose using $\srank( P_T(S,:) + P_S^{\trans} R_T )$ (known at runtime) as a surrogate for $\srank(\Pi(S,T))$. In Section \ref{secExperiments}, we show empirically that using this surrogate yields performance similar to using $\srank(\Pi(S,T))$.

\subsection{Precomputation variants} \label{secPrecomp}

While we have thus far assumed all computations are done online, one can also consider variants for which some computations are done offline, with the results stored for later use. In fact, in Section 4 of \cite{lofgren2016personalized}, the authors propose several such algorithms for the case of one source $s \in V$ and many targets $T \subset V$, using \texttt{Bidirectional-PPR} as a primitive. Each of these variants proceeds as follows. For the offline stage, \texttt{Approx-Contributions} is run for every $t \in V$, and the vectors $\{ p^t, r^t \}_{t \in V}$ are stored. For the online stage, random walks are sampled from $s$, and $\{ \pi_s(t) \}_{t \in T}$ are estimated using the endpoints of these walks and $\{ p^t, r^t \}_{t \in T}$. As mentioned, several such algorithms are proposed; these only differ in how the vectors are stored and how the walks and vectors are combined to generate estimates. In particular, the basic framework of running \texttt{Approx-Contributions} offline and sampling walks from $s$ online is used in all of the precomputation algorithms from \cite{lofgren2016personalized}.

Analogous to our extension of \texttt{Bidirectional-PPR} from the single pair case to the many pairs case, we can extend these precomputation algorithms from the single source case to the many sources case. Specifically, we can modify each of these algorithms in two ways (but otherwise leave them unchanged). First, we can modify the offline stage by also precomputing and storing $\{ p^s , r^s \}_{s \in V}$ via \texttt{Approx-PageRank}. Second, we can modify the online stage by sampling walks using the precomputed vectors $\{ r^s \}_{s \in S}$ and the walk sharing scheme from Section \ref{secManySource}.

To assess the performance of this approach, we compare against the naive extension of \cite{lofgren2016personalized}'s precomputation algorithms to the case $|S| > 1$; namely, leaving the offline stage unchanged and sampling walks separately from each $s \in S$ online. Clearly, our approach requires more storage (due to running \texttt{Approx-PageRank} offline); however, this storage will be roughly double that of the naive extension and thus will not increase the order of the space complexity. On the other hand, our approach will accelerate the online stage of this naive extension, since fewer random walks will typically be sampled. Specifically, per Section \ref{secManySource}, we expect to sample $w \| \Sigma \|_{\infty,1}$ walks instead of $w |S|$ walks; as discussed previously, the former quantity can be much smaller if $S$ is clustered.

We also note that Algorithm \ref{algApproxContMany} can be used to compute $\{ p^t, r^t \}_{t \in V}$ offline, though this is a minor point, since offline computational complexity is generally not a concern. However, this raises another point. When  precomputation is not allowed, our source and target accelerations are both used at runtime; when precomputation is allowed, only our source acceleration is used at runtime. Hence, the runtime savings of our schemes may be less significant in the precomputation setting. In spite of this, we believe the savings will still be considerable in general. This belief follows from the fact that, in our experiments, the source acceleration is generally at least as significant as the target acceleration. For example, Fig.\ \ref{figManySTplots} shows that the number of random walks sampled grows more slowly in $|S|$ than the number of DP iterations grows in $|T|$. Additionally, Fig.\ \ref{figRealMultiPair} shows that for fixed $|S|,|T|$, walk savings and DP iteration savings are comparable across a wide range of graphs.

\section{Experiments} \label{secExperiments}

In this section, we demonstrate the empirical performance of our algorithms and the role of clustering in their performance. We conduct experiments using both synthetic and real graphs. On the synthetic side, we use a directed Erd\H{o}s-R\'enyi graph and directed stochastic block model (referred to hereafter as \texttt{Direct-ER} and \texttt{Direct-SBM}, respectively), each with $n = 2 \times 10^3$ and $\E [ m ] = 2 \times 10^4$. For the real datasets, we use a set of graphs from the Stanford Network Analysis Platform \cite{snapnets} including social networks (\texttt{Slashdot}, \texttt{Wiki-Talk}), partial web crawls (\texttt{web-BerkStan}, \texttt{web-Google}), co-purchasing and co-authoring graphs (\texttt{com-amazon}, \texttt{com-dblp}), and a road network (\texttt{roadNet-PA}). In addition to the diverse domains of these datasets, they differ in terms of sparsity (in order of magnitude, each has $10^6$ edges, but the number of nodes ranges from $10^4$ to $10^6$), so we believe our empirical results are robust across different graph structures. We also note that error bars depict standard deviation across experimental trials, while for scatter plots without error bars, each dot represents a single trial. For further experimental documentation, we point the reader to %
\iftoggle{arxiv}{%
Appendix \ref{secExperimentDetails}. %
}{%
Appendix J in \cite{fullPaper}. %
}%
\new{In particular, 
\iftoggle{arxiv}{%
Table \ref{tabParam} in Appendix \ref{secExperimentDetails} %
}{%
Table 3 in \cite{fullPaper} %
}%
documents algorithmic parameters used. We chose these parameters so the primitive algorithms \texttt{FW-BW-MCMC} and \texttt{Bidirectional-PPR} yield similar accuracy ($\approx 10 \%$ relative error) while balancing runtime between the DP and MCMC stages of the algorithm in the single pair case. Note the analysis in Appendix \ref{secSinglePairComparison} shows that balancing runtime in this manner minimizes overall complexity; hence, for both algorithms, our chosen parameters optimize runtime subject to an accuracy constraint, providing a fair comparison. Finally, the implementation of our algorithms is available at https://github.com/danielvial/clusteringPpr.
}

\subsection{Synthetic data}

\subsubsection{Scalar estimation} \label{secExpSynScal}

We first compare \texttt{FW-BW-MCMC} with \texttt{Bidirectional-PPR} when computing $\pi_s(t)\ \forall\ (s,t) \in S \times T$ as $|S|$ and $|T|$ grow on \texttt{Direct-ER}. More specifically, for \texttt{FW-BW-MCMC} we use the $\| D^{-1} r^s \|_{\infty} \leq r_{\max}^s$ forward DP scheme as in \texttt{FW-BW-MCMC-Practical}, sample walks using the scheme from Section \ref{secManySource}, and use Algorithm \ref{algApproxContMany} for backward DP; for \texttt{Bidirectional-PPR}, we sample walks separately from each $s \in S$ and run backward DP separately for each $t \in T$. Results are shown in Fig.\ \ref{figManySTplots}. Note the number of random walks sampled and number of backward DP iterations grow more slowly with $|S| = |T|$ using \texttt{FW-BW-MCMC}, due to the accelerations proposed in Sections \ref{secManySource} and \ref{secManyTarget}, respectively. As a result, runtime grows more slowly using \texttt{FW-BW-MCMC}. %
%In terms of accuracy, the algorithms offer similar performance: average relative error for \texttt{FW-BW-MCMC} and \texttt{Bidirectional-PPR} is $0.078$ and $0.088$, respectively. 
In Fig.\ \ref{figManySTplots}, we also show the clustering quantities \eqref{eqSourceClustQuant} and \eqref{eqTargetClustQuant}. We observe the source clustering quantity $\| \Sigma \|_{\infty,1}$ has a concave shape, which corresponds to the apparent sublinear growth of random walks as $|S|$ grows. Additionally, the target clustering quantity $c_T$ has a convex shape; since backward DP iteration \textit{savings} scale with $c_T$, we expect DP iterations to correspondingly ``flatten", which indeed occurs. These observations empirically validate the key insights of Section \ref{secManyPairScalar}: namely, that the estimation schemes proposed have complexities that scale with the identified clustering quantities $\| \Sigma \|_{\infty,1}$ and $c_T$. We also plot $\srank(\Pi(S,T))$ on the runtime plot; note it appears to flatten along with runtime as $|S|, |T|$ grow. \new{Finally, these plots remain qualitatively similar as $n$ grows, while the improvement of our scheme over the existing one increases; see %
\iftoggle{arxiv}{%
Appendix \ref{secExperimentDetails}.%
}{%
Appendix J in \cite{fullPaper}.%
}%
}

\begin{figure}
\centering
\includegraphics[height=\plotHeight]{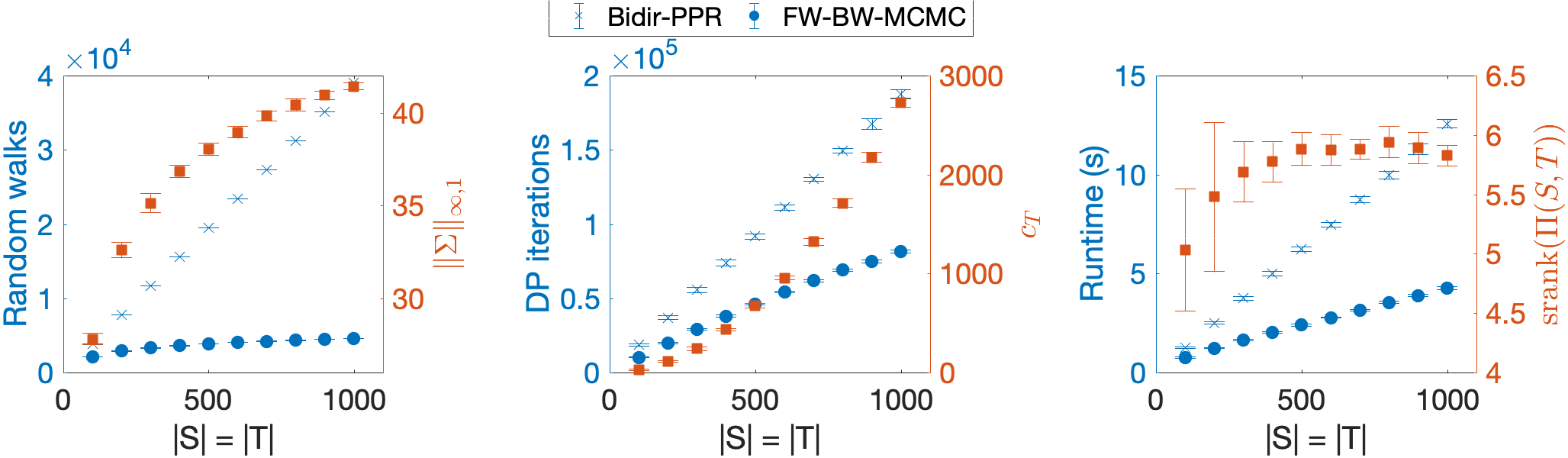}
\caption{On \texttt{Direct-ER}, random walks, backward DP iterations, and runtime scale more slowly in $|S|, |T|$ for our method \texttt{FW-BW-MCMC} when compared to the existing method \texttt{Bidir-PPR}.} \label{figManySTplots}
\end{figure}

\begin{figure}
\centering
\includegraphics[height=\plotHeight]{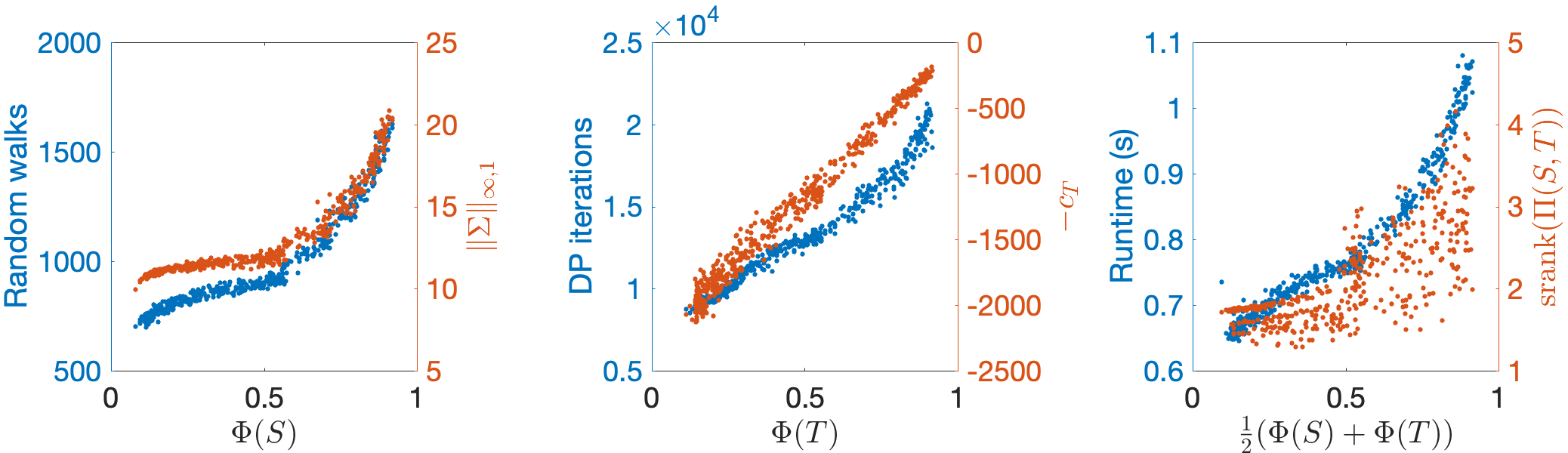}
\caption{When clustering is significant, fewer random walks and backward DP iterations yield faster runtime for our method on \texttt{Direct-SBM}; additionally, our clustering measures roughly scale with conductance.} \label{figClusteringPlot}
\end{figure}

Next, to further examine the effect of clustering, we use \texttt{Direct-SBM}. %, a highly modular graph (one with subsets of nodes densely connected among themselves but sparsely connected to the rest of the network). 
We fix $|S| = |T| = 100$ and sample $S$ and $T$ from decreasingly clustered sets via the following scheme: we first sample $S,T$ from a single community, we then sample $S,T$ from two communities, etc., until we sample $S,T$ from the entire graph, allowing us to observe a wide range of clustering. As in the previous experiment, we are interested in how algorithmic performance relates to $\| \Sigma \|_{\infty,1}$ and $c_T$. Here, we also compare these quantities to a clustering measure commonly used in the graph theory literature (see e.g.\ the aforementioned \cite{andersen2006local}), \textit{conductance}, defined for $U \subset V$ as
\begin{equation} \label{eqConductance}
\Phi(U) = \frac{\sum_{i \in U, j \notin U} A_{ij} }{ \min \{ \sum_{u \in U} d_{\textrm{out}}(u) , \sum_{u \notin U} d_{\textrm{out}}(u) \} } .
\end{equation}
In Fig.\ \ref{figClusteringPlot}, we observe fewer random walks are sampled when $\Phi(S)$ is small (when $S$ is significantly clustered); similarly, the backward DP converges in fewer iterations when $\Phi(T)$ is small (when $T$ is significantly clustered). Furthermore, Fig.\ \ref{figClusteringPlot} shows that $\| \Sigma \|_{\infty,1}$ grows with $\Phi(S)$ and $-c_T$ grows with $\Phi(T)$. In short, our identified clustering quantities behave similar to conductance. In the runtime plot, we again show $\srank(\Pi(S,T))$ as a measure of overall complexity; this quantity (roughly) grows with the average conductance $\frac{1}{2} (\Phi(S)+\Phi(T))$, as does runtime. %Parameters are again documented in Appendix \ref{secExperimentDetails} and yield average relative error below $10\%$ on \texttt{Direct-SBM}.

\subsubsection{Matrix approximation} \label{secExpSynMat}

We now document performance of our matrix approximation scheme (Algorithm \ref{algMatrixEstimator}) using \texttt{Direct-SBM} and the $S,T$ sampling strategy from the previous experiment. We compare three cases: $\sigma = \sigma_{\max}$ with $w \propto \| \Sigma \|_{\infty,1}$, $\sigma = \sigma_{\avg}$ with $w \propto \sqrt{ l\ \srank(\Pi(S,T)) }$, and $\sigma = \sigma_{\avg}$ with $w \propto \sqrt{l\ \srank(P_T(S,:)+P_S^{\trans} R_T)}$. These cases are motivated by Theorem \ref{STATE_MATRIX_ACC}, which states that the sample requirements for $\sigma = \sigma_{\max}$ and $\sigma = \sigma_{\avg}$ are $\| \Sigma \|_{\infty,1}$ and $\sqrt{l\ \srank(\Pi(S,T))}$, respectively (neglecting common factors); additionally, since $\srank(\Pi(S,T))$ is unknown in practice, we proposed using $\srank(P_T(S,:)+P_S^{\trans} R_T)$ as a surrogate in the discussion following the theorem. Results are shown in Fig.\ \ref{figMatrixApprox}. Observe that for all three cases, fewer walks are sampled when $S$ and $T$ are clustered (i.e.\ when $\frac{1}{2} (\Phi(S) + \Phi(T))$ is small; nevertheless, error remains roughly constant (in fact, when clustering is present, error is somewhat lower despite fewer walks being sampled). Further, we observe $\sigma_{\max}$ and $\sigma_{\avg}$ have similar performance, in terms of complexity and accuracy. Finally, we note the results for the $\srank(\Pi(S,T))$ and $\srank(P_T(S,:)+P_S^{\trans} R_T)$ cases are quite similar, suggesting that $\srank(P_T(S,:)+P_S^{\trans} R_T)$ is an appropriate surrogate for $\srank(\Pi(S,T))$.

\begin{figure}
\centering
\includegraphics[height=\plotHeight]{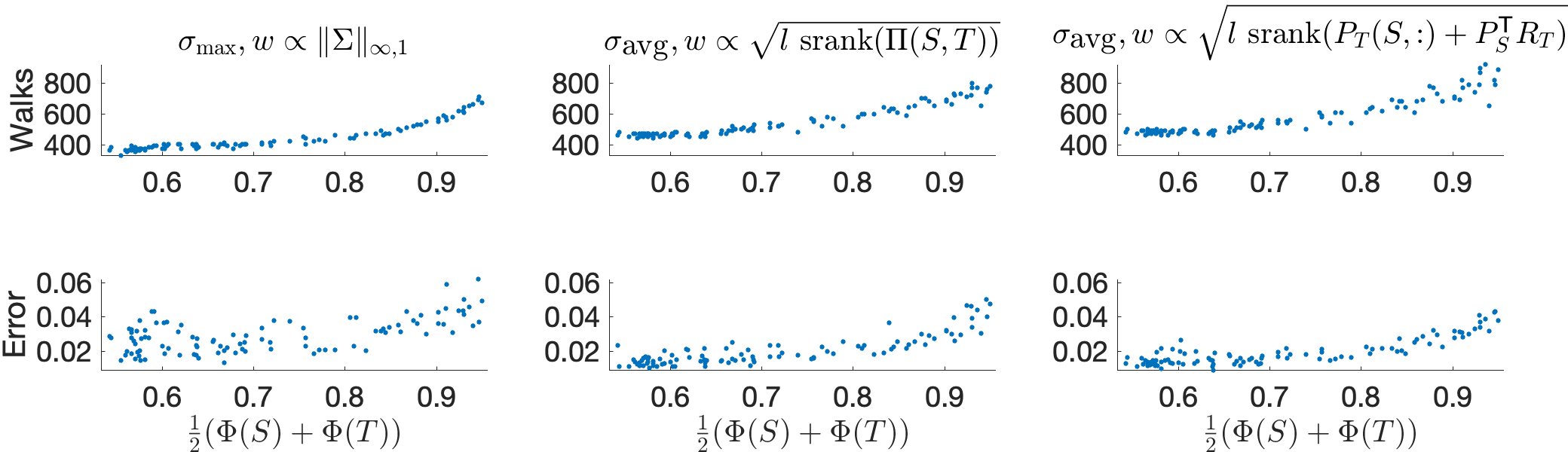}
\caption{On \texttt{Direct-SBM}, our matrix approximation schemes are most efficient when clustering is significant; additionally, the surrogate $\srank(P_T(S,:)+P_S^{\trans} R_T))$ performs similar to $\srank(\Pi(S,T))$.} \label{figMatrixApprox}
\end{figure}

\subsection{Real data} \label{secExpReal}

\subsubsection{Scalar estimation} \label{secExpRealScal}

We next compare \texttt{FW-BW-MCMC} with \texttt{Bidirectional-PPR} as in Section \ref{secExpSynScal}, but here using real datasets. We fix $|S| = |T| = 1000$ and randomly sample $S, T$ using two different schemes: sampling uniformly among all nodes and using an algorithm described in %
\iftoggle{arxiv}{%
Appendix \ref{secExperimentDetails} %
}{%
Appendix J of \cite{fullPaper} %
}%
to build clustered subsets of nodes; we find these schemes typically give conductance values $\approx 0.99$ and $\approx 0.5$, respectively, allowing us to observe two degrees of clustering. In Fig.\ \ref{figRealMultiPair}, we show random walk count, DP iteration count, and runtime for our method relative to the corresponding values using \texttt{Bidirectional-PPR}. Averaging across the diverse set of graphs considered, our method is approximately 1.4 times faster in the uniform case and 2.9 times faster in the clustered case, highlighting the efficiency of our algorithms and the impact of clustering on their performance. Additionally, we note our method is at least twice as fast for all datasets in the clustered case. For the same experiment, we also show random walk count (normalized to $w$) and the number of \texttt{Merge} updates (i.e.\ the number of DP iterations saved when compared to existing methods) in Fig.\ \ref{figRealMultiPairDetails}. From Theorem \ref{STATE_UNIFIED_MCMC_COMP} and Proposition \ref{STATE_BW_ITER_MERGE}, we expect these quantities to scale linearly with the identified clustering quantities $\| \Sigma \|_{\infty,1}$ and $c_T$, respectively; from Fig.\ \ref{figRealMultiPairDetails}, we observe this scaling roughly occurs. This verifies our analysis empirically on real datasets.

\begin{figure}
\centering
\includegraphics[height=\plotHeight]{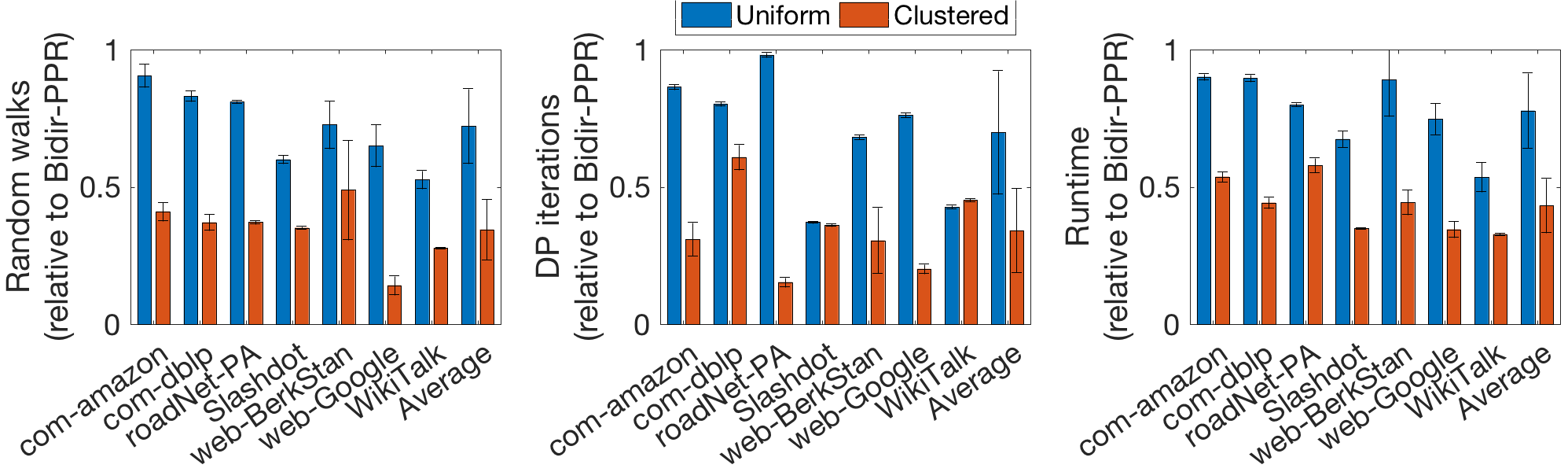}
\caption{On real graphs, our scalar methods are typically 1.4 and 2.9 times faster than existing methods when $S,T$ are chosen uniformly and clustered, respectively, due to fewer random walks and DP iterations.} \label{figRealMultiPair}
\end{figure}
\begin{figure}
\centering
\includegraphics[height=\plotHeight]{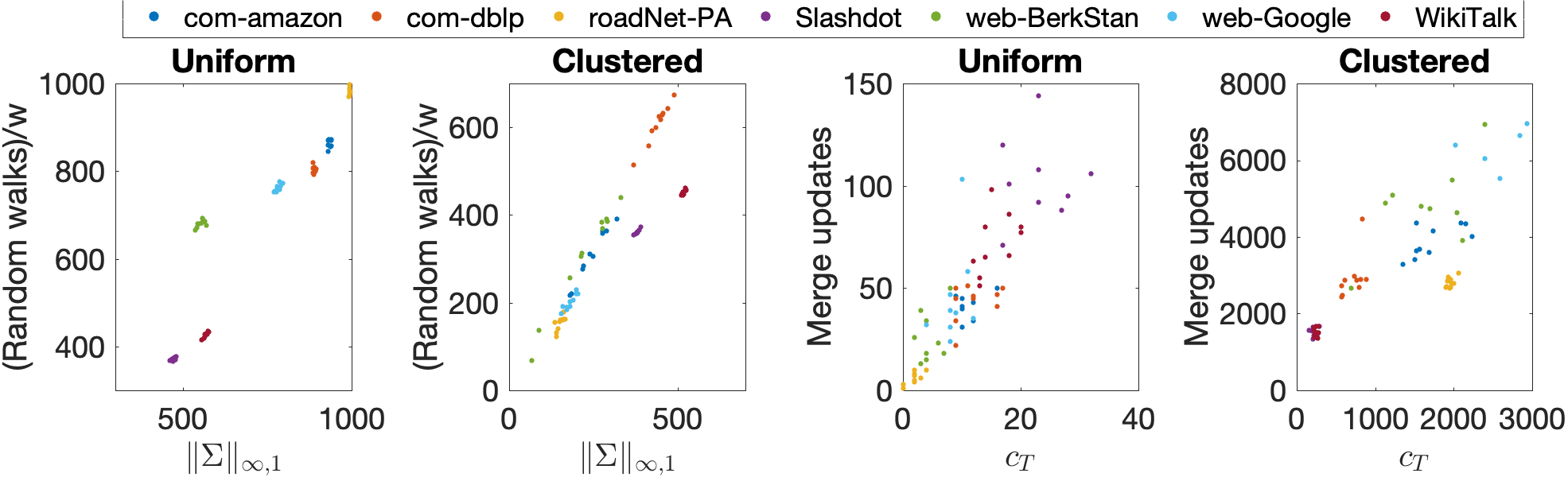}
\caption{On real graphs, random walks and \texttt{Merge} updates scale with clustering quantities $\| \Sigma \|_{\infty,1}$ and $c_T$, empirically validating the analysis of Section \ref{secManyPairScalar} across diverse datasets.} \label{figRealMultiPairDetails}
\end{figure}

\subsubsection{Matrix approximation} \label{secExpRealMat}

Finally, we test our matrix approximation scheme (Algorithm \ref{algMatrixEstimator}) on real graphs. Here we also compare to a baseline method that does not leverage clustering among targets and sources. In particular, we run backward DP separately for each target, rather than using the accelerated scheme as in Algorithm \ref{algMatrixEstimator}. Additionally, the baseline method uses no forward DP, i.e.\ we set $r^s_{\max} = 1$ in Algorithm \ref{algMatrixEstimator}, so that $p^s = 0, r^s = e_s\ \forall\ s \in S$. Note that, in this case, both the $\sigma_{\max}$ and $\sigma_{\avg}$ schemes reduce to sampling $\mu_i \sim S$ uniformly, sampling $\nu_i \sim \pi_{\nu_i}$ using a random walk, and estimating $\Pi(S,T)$ as $\hat{\Pi}(S,T) = P_T(S,:) + \frac{1}{w} \sum_{i=1}^w X_i$, where $X_i = [ e_{s_1}\  e_{s_2}\ \cdots\ e_{s_l} ]^{\trans} e_{\mu_i} e_{\nu_i}^{\trans} R_T$ is an unbiased estimate of $\Pi(S,:) R_T$. We reemphasize that walks are \textit{not} shared among sources for this baseline scheme, i.e.\ clustering among sources is not leveraged to improve performance. For the baseline scheme, we set $w \propto l$, and we compare performance to the $\sigma_{\max}$ scheme with $w \propto \| \Sigma \|_{\infty,1}$ and the $\sigma_{\avg}$ scheme with $w \propto \sqrt{l\ \srank(P_T(S,:)+P_S^{\trans} R_T)}$. Results are shown in Fig.\ \ref{figRealMat}, with quantities shown for the $\sigma_{\max}$ and $\sigma_{\avg}$ schemes relative to the baseline scheme. Averaging across datasets, the $\sigma_{\max}$ and $\sigma_{\avg}$ schemes are over twice as fast as the baseline scheme when $S,T$ are chosen uniformly and 3.4 times faster when $S,T$ are clustered; additionally, the accuracy of both schemes is comparable to the baseline across datasets (and slightly better on average). We also note both our schemes are at least twice as fast as the baseline for all graphs in the clustered case.

\begin{figure}
\centering
\includegraphics[height=\plotHeight]{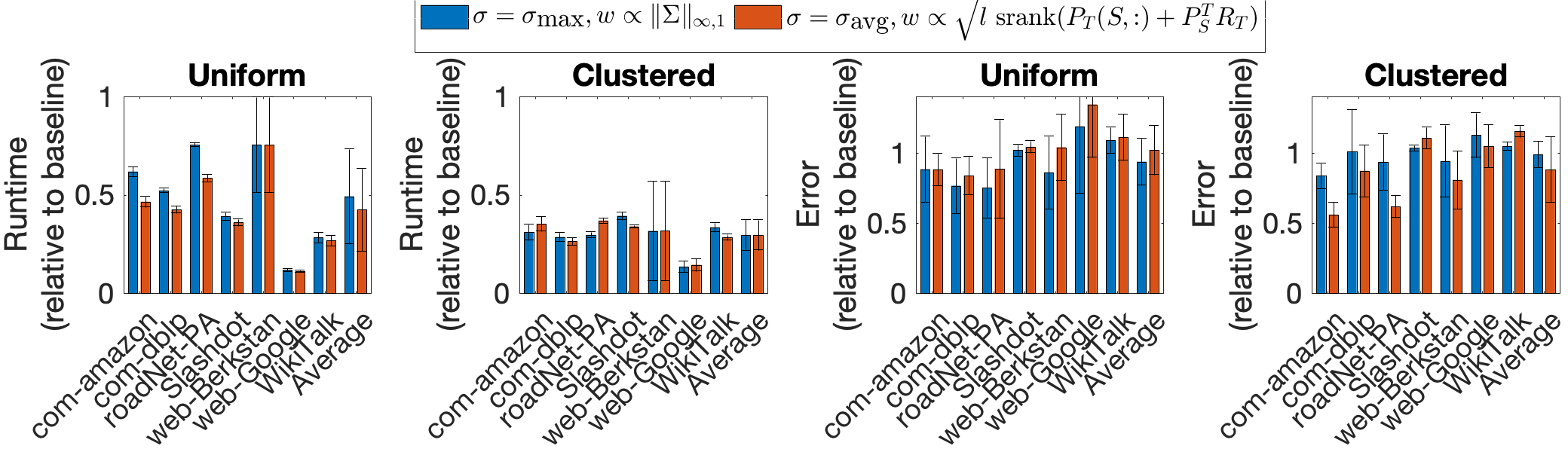}
\caption{On real graphs, our matrix approximation schemes are significantly faster than the baseline method (which uses no forward DP) with comparable accuracy; this is most notable when $S,T$ are clustered.} \label{figRealMat}
\end{figure}

\section{Application: distributed random walk sampling} \label{secDistributed}

Thus far, our key finding has been that PPR estimation complexity scales with quantities that describe clustering among sources and/or targets. In this section, we demonstrate one application of these findings; namely, that these findings can be used to efficiently estimate $\{ \pi_s \}_{s \in S}$ online when several machines are available and when offline storage is permitted. More specifically, we consider a natural distributed computational setting with the following features:
\begin{itemize}
\item $k$ machines are available for parallel computation and a central machine is available to facilitate the parallel computation (for simplicity, we assume $k \in \{|S|, |S|/2, |S|/3, \ldots \}$)
\item $\{ p^t, r^t \}_{t \in V}$ have been precomputed via Algorithm \ref{algApproxCont} and are stored offline
\end{itemize}

Using the existing method \texttt{Bidirectional-PPR} as a primitive, a baseline strategy for this estimation task is as follows: arbitrarily partition $S$ into $k$ subsets of size $|S|/k$, use the $i$-th machine to sample random walks from each source $s$ belonging to the $i$-th subset, and estimate $\pi_s$ using the endpoints of walks from $s$ and $\{ p^t, r^t \}_{t \in V}$ (as in the primitive method \texttt{Bidirectional-PPR}). 

Our goal is to devise a strategy more efficient than this baseline. In particular, we propose the following approach. First, we arbitrarily partition $S$ into $k$ subsets of size $|S|/k$, and we use the $i$-th machine to run forward DP (Algorithm \ref{algApproxPR}) for each source $s$ belonging to the $i$-th subset. Second, we use the central machine to construct another partition $\{ S_i \}_{i=1}^k$ of $S$, in a manner we discuss shortly. Third, we use the $i$-th machine to run the accelerated source stage from Section \ref{secManySource} for the subset of sources $S_i$. Finally, we estimate $\pi_s$ as in the primitive method \texttt{FW-BW-MCMC}.

It remains to specify how to construct the partition $\{ S_i \}_{i=1}^k$. For this, we turn to Theorem \ref{STATE_UNIFIED_MCMC_COMP} and the results of Section \ref{secExperiments}, which indicate that the number of random walks sampled on the $i$-th machine scales with $\| \Sigma_{S_i} \|_{\infty,1}$, where $\Sigma_{S_i}$ is the matrix with rows $\{ \sigma_s \}_{s \in S_i}$. Hence, as the random walk stage in our approach runs in parallel across $i$, the runtime of the this stage scales with 
 \begin{equation} \label{eqObjectiveOriginal}
\max_{i \in \{1,\ldots,k\} } \| \Sigma_{S_i} \|_{\infty,1} .
\end{equation}
Our goal is thus to construct the partition $\{ S_i \}_{i=1}^k$ so as to minimize \eqref{eqObjectiveOriginal}. However, as this is a combinatorial optimization problem, we devise a heuristic method to approximate the solution. To simplify the discussion of this  method, we introduce some notation. For $S' \subset S$, let $\sigma_{S'}$ be s.t.\ $\sigma_{S'}(v) = \max_{s' \in S'} \sigma_{s'}(v)\ \forall\ v \in V$; note  $\| \Sigma_{S'} \|_{\infty,1} = \| \sigma_{S'} \|_1$. For $S' \subset S$ and $s \in S \setminus S'$, let
\begin{align}
d(s,S')  = \sum_{v \in V} \max \left\{ \sigma_s(v) - \sigma_{S'}(v) , 0 \right\} . \label{eqDSdefn} % \\
\end{align}
It is straightforward to show \eqref{eqDSderiv}, i.e.\ \eqref{eqDSdefn} gives the increase in $\| \sigma_{S'} \|_1$  if $s$ is added to $S'$. 
\begin{equation}\label{eqDSderiv}
d ( s, S' ) = \| \sigma_{ S' \cup \{ s \} } \|_1 -  \| \sigma_{S'} \|_1 .
\end{equation}
With this notation in place, we may restate the objective function \eqref{eqObjectiveOriginal} as
 \begin{equation}\label{eqObjective}
\max_{i \in \{1,\ldots,k\} } \| \sigma_{S_i} \|_1 .
\end{equation}

Our heuristic method to approximate the minimizer of \eqref{eqObjective} proceeds as follows. First, we assign one node to each $S_i$, $i \in \{1,\ldots,k\}$, using an initialization method similar to $k$-means++ \cite{arthur2007k}: we choose the $i$-th of these nodes with probability proportional to its distance from the first $(i-1)$ of them, in hopes of choosing initial nodes with $\sigma_s$ vectors far apart. Next, we iteratively assign the remaining $|S|-k$ nodes to some $S_j$. In particular, we assign $s$ to $S_j$ such that $d(s,S_j) + \| \sigma_{S_j} \|_1$ is minimal; from \eqref{eqDSderiv}, this can be viewed as minimizing the increase in the objective function \eqref{eqObjective} incurred by assigning $s$ to some $S_j$. This heuristic method is formally defined in Algorithm \ref{algHeuristic}.

\begin{algorithm}
\caption{ $\{ S_i \}_{i=1}^k$ = \texttt{Source-Partition} ($\{ \sigma_s \}_{s \in S}$, $k$)  } \label{algHeuristic}
 Draw $s \sim S$ uniformly, set $S_1 = \{ s \}$, $\sigma_{S_1} = \sigma_s$; set $S_i = \emptyset\ \forall\ i \in \{2,\ldots,k\}$ \\
\For{$i = 2$ \KwTo $k$}{
	 Draw $s \sim S$ with probability proportional to $\min_{j \in \{1,\ldots,i-1\}} \| \sigma_s - \sigma_{S_j} \|_1$; set $S_i = \{ s \}$, $\sigma_{S_i} = \sigma_s$
}
\For{$i = k+1$ \KwTo $|S|$}{
	 Choose any $s \in S \setminus ( \cup_{j=1}^k S_j )$ (any $s$ not yet assigned); compute $d(s,S_j)\ \forall\ j \in \{1,\ldots,k\}$ \\
	 Let $j^* \in \argmin_j  d(s,S_j) + \| \sigma_{S_j} \|_1$ , $\sigma_{S_{j^*}}(v) = \max \{ \sigma_{S_{j^*}} (v), \sigma_s(v)   \}\ \forall\ v \in V$, $S_{j^*} = S_{j^*} \cup \{ s \}$
}
% \KwRet{$\{ S_i \}_{i=1}^k$}
\end{algorithm}

We now empirically compare our approach with the baseline scheme. For this experiment, we set $S = \{ \tilde{S}_i \}_{i = 1}^k$, where each $\tilde{S}_i$ is a clustered subset of nodes constructed as in Section \ref{secExperiments} (with $k = 10$ and $| \tilde{S}_i | = 100\ \forall\ i$). This yields a set of sources $S$ that is not highly clustered itself, but that contains $k$ subsets that are densely connected internally and sparsely connected to other subsets. In addition to comparing to the baseline, we also test the performance of an ``oracle'' scheme, which knows the clustering information of the input set $S$. More specifically, the oracle scheme proceeds in the same manner as our scheme, except instead of using Algorithm \ref{algHeuristic} to construct the partition $\{ S_i \}_{i = 1}^k$, it simply sets $S_i = \tilde{S}_i\ \forall\ i \in \{1,2,\ldots,k\}$. Put differently, while the heuristic scheme attempts to learn an assignment of sources to machines for which each machine is assigned a clustered set of sources (in the sense that \eqref{eqObjective} is minimal), the oracle scheme knows such an assignment \textit{a priori}.

Results for this experiment are shown in Fig.\ \ref{figDistSetting}, using the set of real graphs from Section \ref{secExperiments}. Averaging across graphs, the oracle and heuristic methods are roughly 1.8 and 2.2 faster than the baseline scheme, respectively (left). (Here total runtime is computed as maximum walk sampling time across machines for the baseline; sum of maximum forward DP time and maximum walk time for the oracle; and sum of maximum forward DP time, maximum walk time, and time to run Algorithm \ref{algHeuristic} for the heuristic.) Additionally, both methods sample approximately $\frac{1}{4}$ of the random walks sampled by the baseline scheme, across graphs (middle). Finally, the heuristic method typically produces a partition $\{ S_i \}_{i =1}^k$ of $S$ with objective function value \eqref{eqObjective} similar to that produced by the oracle method (right). Interestingly, the heuristic outperforms the oracle for several datasets. This suggests that the cluster information known by the oracle does not necessarily produce an optimal assignment of sources to machines; rather, the source clustering quantity $\| \sigma_{S_i} \|_1$ identified in Section \ref{secManySource} may be what truly dictates performance.

Before closing, we offer several remarks on this distributed setting. First, while we focused on the scalar estimation scheme from Section \ref{secManySource}, the framework extends to the $\sigma_{\max}$ matrix approximation scheme from Section \ref{secManyPairMatrix}. In particular, using the latter scheme in this setting would also involve construction of a partition so as to minimize \eqref{eqObjective}, per Theorem \ref{STATE_MATRIX_ACC}. For this reason, we expect the performance of this scheme to be similar to Fig.\ \ref{figDistSetting}. Second, we note that using the $\sigma_{\avg}$ matrix approximation scheme in this setting requires a partition that minimizes a different objective function. In %
\iftoggle{arxiv}{%
Appendix \ref{secAdditionalDist}, %
}{%
Appendix K of \cite{fullPaper}, %
}%
we present an algorithm to construct such a partition, as well as empirical results describing performance (in short, our scheme performs similarly to the oracle and noticeably outperforms the baseline, as in Fig.\ \ref{figDistSetting}). Third, we find in practice that our heuristic partitioning schemes naturally balance the number of sources assigned to each machine (see %
\iftoggle{arxiv}{%
Appendix \ref{secAdditionalDist}). %
}{%
Appendix K of \cite{fullPaper}). %
}%
Such balance is crucial in the performance of our scheme. This is because we require $\| \sigma_{S_i} \|_1 = O ( |S| / k )\ \forall\ i$ to perform as well as the baseline, which may in turn require an extreme degree of clustering if the partition is unbalanced (for example, if $|S_i| = O(|S|)$ for some $i$). It is worth noting that we also tried to partition $\{ \sigma_s \}_{s \in S}$ using $k$-means++ (an off-the-shelf vector partitioning algorithm), but this led to highly unbalanced assignments (and thus poor performance). Finally, we note than one limitation of our scheme is that, if $|S|,|T| = \Theta(n)$, Algorithm \ref{algHeuristic} essentially partitions the entire graph and thus may be slower than directly estimating PPR. However, we recall from Section \ref{secRelated} that our focus is $|S|,|T| = o(\sqrt{m}) = o(n)$, so this is not a concern. Indeed, for the Fig.\ \ref{figDistSetting} experiment, Algorithm \ref{algHeuristic} accounted for only 12\% of runtime (averaged across graphs).

\begin{figure}
\centering
\includegraphics[height=\plotHeight]{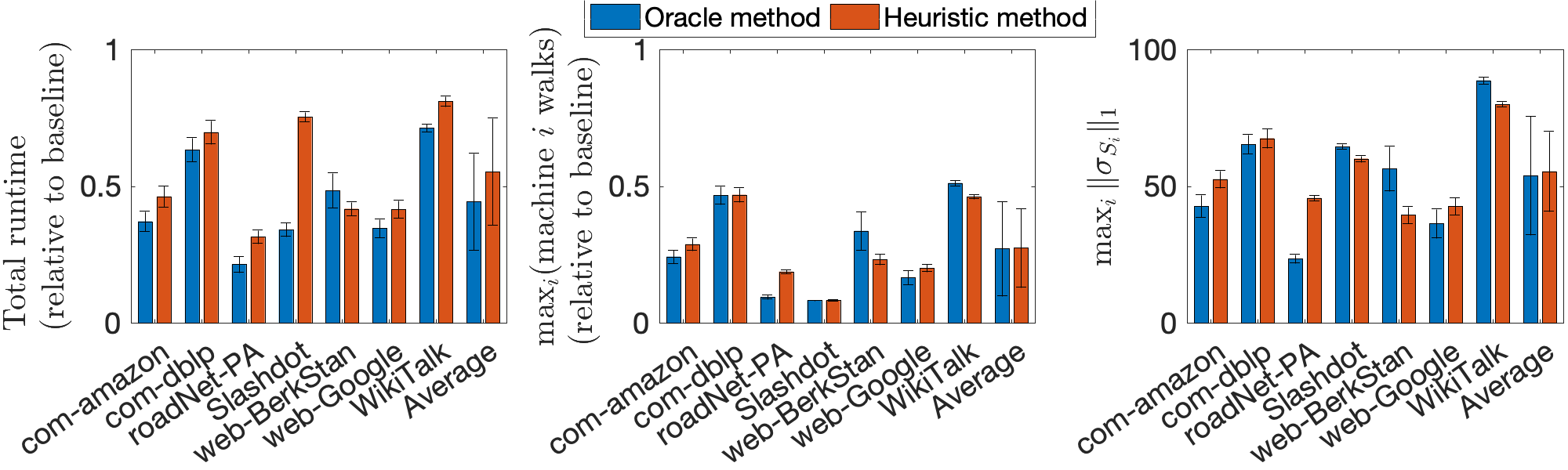}
\caption{In the distributed setting, our heuristic method is typically 1.8 times faster than the baseline, samples $\frac{1}{4}$ of the walks, and produces a low objective function value, with performance similar to an oracle method.} \label{figDistSetting}
\end{figure}

\section{Conclusions} \label{secConclusions}

In this work, we analyzed the relationship between PPR estimation complexity and clustering by devising estimation algorithms for many node pairs and showing the complexity of these methods scales with quantities interpretable as clustering measures. To demonstrate the utility of these findings, we considered a distributed setting for which the clustering quantities computed \emph{in situ} could be leveraged to reduce computation time. We believe this setting and the algorithms we designed for it are just one example of how our findings can be used to accelerate PPR estimation; hence, an avenue for future work would be to further explore applications of our results.

\newpage

\begin{appendices}

\section{Analysis of \texttt{FW-BW-MCMC} and comparison to \texttt{Bidirectional-PPR}} \label{secSinglePairComparison}

Here we state and prove the guarantees that were stated informally at the end of Section \ref{secSinglePairDescription}. We include the corresponding results for \texttt{Bidirectional-PPR} for comparison. We first present the accuracy guarantee, Theorem \ref{STATE_FWBWMCMC_ACC}. The idea is to bound relative error when $\pi_s(t) \geq \delta$ and to bound absolute error when $\pi_s(t) < \delta$. The authors of \cite{lofgren2016personalized} suggest choosing $\delta = O ( \frac{1}{n} )$. This choice dictates that we desire the relative bound when $t$'s PPR exceeds a uniform distribution over all nodes, which suggests that $t$ is ``significant'' to $s$ in this case. The proof applies the Chernoff bound to a variety of cases; this approach is similar \cite{lofgren2016personalized}, but we must address cases that do not arise in that work.

\begin{theorem} \label{STATE_FWBWMCMC_ACC} 
Fix minimum PPR threshold $\delta$, relative error tolerance $\epsilon$, and failure probability $p_{\textrm{fail}}$. For \texttt{FW-BW-MCMC}, assume the following hold:
\begin{equation}\label{eqFwBwMcmcAccAss}
\epsilon \in \left( 0,\tfrac{1}{\sqrt{2e}} \right) , \quad w = \tfrac{c r^s_{\max} r^t_{\max}}{\delta} , \quad c > \tfrac{3 (2e)^{1/3} \log( 2 / p_{\textrm{fail}} )}{\epsilon^{7/3}} .
\end{equation} 
For \texttt{Bidirectional-PPR}, assume the following hold:
\begin{equation}\label{eqBidirAccAss}
r^t_{\max} \in \left( \tfrac{2 e \delta}{\alpha \epsilon} , 1 \right), \quad w = \tfrac{c r^t_{\max}}{\delta} , \quad c > \tfrac{ 3 \log( 2 / p_{\textrm{fail}} ) }{ \epsilon^2 } .
\end{equation}
Then the estimate $\hat{\pi}_s(t)$ produced by either algorithm satisfies the following with probability $ \geq 1-p_{\textrm{fail}}$: 
\begin{equation} \label{eqBidirTypeErrGuar}
| \pi_s(t) - \hat{\pi}_s(t) | \leq \begin{cases} \epsilon \pi_s(t), & \pi_s(t) \geq \delta \textrm{ (significant case)} \\ 2 e \delta , & \pi_s(t) < \delta \textrm{ (insignificant case)} \end{cases} .
\end{equation}
\end{theorem}
\begin{proof}
See \cite{lofgren2016personalized} for \texttt{Bidirectional-PPR}; see Appendix \ref{PROOF_FWBWMCMC_ACC} for \texttt{FW-BW-MCMC}.
\end{proof}
From Theorem \ref{STATE_FWBWMCMC_ACC}, \texttt{FW-BW-MCMC} offers the same accuracy as \texttt{Bidirectional-PPR}. However, our assumptions on $\epsilon$ and $c$ are stronger than those required for \texttt{Bidirectional-PPR}. The first assumption is mild, since $\frac{1}{\sqrt{2e}} \approx 0.43$ and we typically desire a tighter relative error bound. The second affects complexity and will be discussed next. Note also that our guarantee holds for any $r^t_{\max} \in (0,1)$, while proving the theorem for \texttt{Bidirectional-PPR} requires a lower bound on $r^t_{\max}$.

Next, we have a worst-case complexity result in Theorem \ref{STATE_FWBWMCMC_COMP} (by worst case, we mean the result holds for when the algorithm is run for any $s,t \in V$). The idea is to choose $r^s_{\max}, r^t_{\max}$ to balance the complexity of the DP and MCMC stages of the algorithm. The result requires the additional assumption $m \delta <  \log(1/p_{\textrm{fail}} ) / \epsilon^2 $, which guarantees that these $r^s_{\max}, r^t_{\max}$ values lie in $(0,1)$. Note that with $\delta = O( \frac{1}{n} )$, this implies $m = O(n)$, i.e.\ nodes have constant degrees as $n$ grows.

\begin{theorem} \label{STATE_FWBWMCMC_COMP}
Fix minimum PPR threshold $\delta$, relative error tolerance $\epsilon$, and failure probability $p_{\textrm{fail}}$. Assume \eqref{eqFwBwMcmcAccAss}-\eqref{eqBidirAccAss} hold and $m \delta <  \log(1/p_{\textrm{fail}} ) / \epsilon^2$. Then setting $r^s_{\max} = r^t_{\max} = \frac{ m^{1/3} \delta^{1/3} \epsilon^{7/9} }{ ( \log(1/p_{\textrm{fail}}) )^{1/3} }$ in \texttt{FW-BW-MCMC} yields minimal complexity $O \left(  \frac{ m^{2/3} ( \log(1/p_{\textrm{fail}} ) )^{1/3} }{ \alpha \epsilon^{7/9} \delta^{1/3} }  \right)$. Furthermore, setting $r^t_{\max} = \frac{ \sqrt{m \delta} \epsilon }{ \sqrt{ \log(1/p_{\textrm{fail}}) } }$ in \texttt{Bidirectional-PPR} yields minimal complexity $O \left( \frac{ \sqrt{ m \log(1/ p_{\textrm{fail} } ) } }{ \alpha \epsilon \sqrt{\delta} } \right)$.
\end{theorem}
\begin{proof}
See Appendix \ref{PROOF_FWBWMCMC_COMP}.
\end{proof}
Note that, with $\delta = O(\frac{1}{n})$, so that $m = O(n)$, both algorithms have complexity linear in $n$, while \texttt{FW-BW-MCMC} has strictly better dependence on the parameters $p_{\textrm{fail}}$ and $\epsilon$.

Finally, we present an average-case complexity result for \texttt{FW-BW-MCMC-Practical} (Algorithm \ref{algOurEstimatorPractical}), which changes the termination criteria to $\| D^{-1} r^s \|_{\infty} \leq r_{\max}^s$ in the forward DP.
\begin{theorem}
For any $s \in V$ and for $t \sim V$ uniformly, \texttt{FW-BW-MCMC-Practical} produces an estimate satisfying the accuracy guarantee of Theorem \ref{STATE_FWBWMCMC_ACC} and has complexity $O \left(  \frac{\sqrt{m \log(1/p_{\textrm{fail}})}}{\sqrt{n \delta} \alpha \epsilon^{7/6} }  \right)$.
\end{theorem}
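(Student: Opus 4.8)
The plan is to prove the two halves of the statement --- the accuracy guarantee and the expected running time --- separately, with almost all of the work going into the running time.

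\textbf{Accuracy.} Since \texttt{FW-BW-MCMC-Practical} differs from \texttt{FW-BW-MCMC} only in the forward-DP termination rule ($\| D^{-1} r^s \|_\infty \le r^s_{\max}$ versus $\| r^s \|_1 \le r^s_{\max}$), I would re-run the proof of Theorem~\ref{STATE_FWBWMCMC_ACC} (the one in Appendix~\ref{PROOF_FWBWMCMC_ACC}) and locate the single place where $\| r^s \|_1 \le r^s_{\max}$ is used: namely, to bound the range of the per-walk contributions $\| r^s \|_1 r^t(U_i)$ in the Chernoff/Bernstein step. This bound can be replaced by the universal fact $\| r^s \|_1 \le 1$, which holds \emph{regardless} of the termination rule, because each push in Algorithm~\ref{algApproxPR} zeros $r^s(v^*)$ and redistributes a total of $(1-\alpha) r^s(v^*)$, so $\| r^s \|_1$ is non-increasing from its initial value $\| e_s \|_1 = 1$. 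Hence every contribution satisfies $\| r^s \|_1 r^t(U_i) \le r^t_{\max}$ (using $\| r^t \|_\infty \le r^t_{\max}$ from Algorithm~\ref{algApproxCont}), and since $p^t(s) + \langle p^s, r^t\rangle$ is still computed exactly with the random part of the estimator having mean $\pi_s(t) - p^t(s) - \langle p^s, r^t\rangle$ by \eqref{eqFwBwInvariant}, the same case analysis goes through with $w$ now of order $r^t_{\max}\log(1/p_{\textrm{fail}})/(\epsilon^{7/3}\delta)$ --- i.e.\ the factor $r^s_{\max}$ appearing in \eqref{eqFwBwMcmcAccAss} is no longer needed (and cannot be recovered, since practical termination does not bound $\| r^s \|_1$ by $r^s_{\max}$).

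\textbf{Running time --- decomposition.} The expected cost splits as forward DP (for the fixed source $s$) $+$ backward DP (for the random target $t$) $+$ MCMC. For the forward DP I would use the standard argument behind \cite{andersen2006local}: each iteration picks $v^*$ with $r^s(v^*)/d_{\textrm{out}}(v^*) > r^s_{\max}$, increases $\| p^s \|_1$ by $\alpha r^s(v^*) > \alpha r^s_{\max} d_{\textrm{out}}(v^*)$, and costs $O(d_{\textrm{out}}(v^*))$; since $p^s(u) \le \pi_s(u)$ by \eqref{eqFwInvariant} so $\| p^s \|_1 \le 1$ throughout, summing over iterations gives forward-DP cost $O(1/(\alpha r^s_{\max}))$ for \emph{any} $s$. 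The MCMC stage samples $w$ walks of $\mathrm{geometric}(\alpha)$ length, costing $O(w/\alpha)$ in expectation, and sampling the $w$ start nodes from $\sigma_s$ is a lower-order additive cost.

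\textbf{Running time --- the backward DP average.} This is the crux. Fix $t$; in Algorithm~\ref{algApproxCont} each selection of a node $v$ as $v^*$ increases $p^t(v)$ by $\alpha r^t(v) > \alpha r^t_{\max}$, and $p^t(v)$ is non-decreasing with $p^t(v) \le \pi_v(t)$ by \eqref{eqBwInvariant}, so $v$ is selected fewer than $\pi_v(t)/(\alpha r^t_{\max})$ times; as an iteration with $v^*=v$ costs $O(d_{\textrm{in}}(v))$, the backward-DP cost for target $t$ is $O\big( \tfrac{1}{\alpha r^t_{\max}} \sum_{v \in V} d_{\textrm{in}}(v)\, \pi_v(t) \big)$. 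Averaging over $t \sim V$ uniformly and exchanging sums,
\[
\frac{1}{n}\sum_{t \in V}\sum_{v \in V} d_{\textrm{in}}(v)\, \pi_v(t) = \frac{1}{n}\sum_{v \in V} d_{\textrm{in}}(v) \sum_{t \in V}\pi_v(t) = \frac{1}{n}\sum_{v \in V} d_{\textrm{in}}(v) = \frac{m}{n},
\]
using that $\pi_v$ is a probability distribution over targets ($\sum_{t \in V}\pi_v(t)=1$) and $\sum_{v \in V} d_{\textrm{in}}(v)=m$. So the expected backward-DP cost is $O(m/(n\alpha r^t_{\max}))$ --- it is exactly this identity, turning a worst-case-$t$ bound of order $n\pi(t)/r^t_{\max}$ into an average of order $\bar d/r^t_{\max}$, that gives sublinear rather than linear scaling.

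\textbf{Running time --- balancing.} Collecting the three pieces, the expected total time is $O\big( \tfrac{1}{\alpha r^s_{\max}} + \tfrac{m}{n \alpha r^t_{\max}} + \tfrac{r^t_{\max}\log(1/p_{\textrm{fail}})}{\alpha \epsilon^{7/3}\delta} \big)$. Since $r^s_{\max}$ appears only in the first term, I would set it to a constant (or to $\Theta(\epsilon^{7/6}\sqrt{n\delta/(m\log(1/p_{\textrm{fail}}))})$ to match the other terms), making the forward DP non-dominant, and then set $r^t_{\max} = \Theta\big(\sqrt{(m/n)\,\epsilon^{7/3}\delta/\log(1/p_{\textrm{fail}})}\big)$ to balance the remaining two terms; their common value is $O\big(\sqrt{m\log(1/p_{\textrm{fail}})}/(\sqrt{n\delta}\,\alpha\epsilon^{7/6})\big)$, the claimed bound (which reduces to $O(\sqrt{m})$ when $\delta = O(1/n)$). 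The only thing to verify is $r^s_{\max}, r^t_{\max} \in (0,1)$, which holds under a mild regularity condition in the spirit of \eqref{eqComplexityAssumption} (and is automatic with $\delta = O(1/n)$ and $m = O(n)$). I expect the main obstacle to be the accuracy half --- carefully checking that every invocation of $\| r^s \|_1 \le r^s_{\max}$ in the proof of Theorem~\ref{STATE_FWBWMCMC_ACC} survives the weakening to $\| r^s \|_1 \le 1$, with the correspondingly adjusted $w$ --- while the backward-DP averaging identity above is the key new ingredient on the complexity side.
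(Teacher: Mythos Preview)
Your proposal is correct and the complexity half is essentially identical to the paper's proof: the same forward-DP bound $O(1/(\alpha r^s_{\max}))$, the same backward-DP averaging identity $\tfrac{1}{n}\sum_{t}\sum_{v} d_{\textrm{in}}(v)\pi_v(t) = m/n$ (which the paper cites from \cite{lofgren2013personalized}), and the same balancing of $r^t_{\max}$.

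The one place you diverge is the accuracy argument. The paper's \texttt{FW-BW-MCMC-Practical} (Algorithm~\ref{algOurEstimatorPractical}) does not sample a fixed $w$ walks; it samples $w\|\tilde r^s\|_1$ walks with $w = c r^t_{\max}/\delta$. This makes the accuracy proof a one-liner: with that many walks the algorithm is \emph{literally} \texttt{FW-BW-MCMC} run with $r^s_{\max}$ set to the realized value $\|\tilde r^s\|_1$, so Theorem~\ref{STATE_FWBWMCMC_ACC} applies verbatim without re-examining any Chernoff case. Your route---keep $w$ fixed and replace every use of $\|r^s\|_1 \le r^s_{\max}$ in Appendix~\ref{PROOF_FWBWMCMC_ACC} by $\|r^s\|_1 \le 1$---also works (the substitutions in Cases~2 and~3 go through as you describe), but it requires re-walking the case analysis rather than invoking the existing theorem wholesale. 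Since the paper bounds the MCMC cost by $w\|\tilde r^s\|_1 \le w$ anyway, both versions land on the same complexity expression and the same optimal $r^t_{\max}$.
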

\begin{proof}
See Appendix \ref{appPracticalVersion}.
\end{proof}
With $\delta = O(\frac{1}{n})$, this establishes the $O(\sqrt{m})$ average case complexity claimed at the end of Section \ref{secSinglePairDescription}. The guarantee for \texttt{Bidirectional-PPR} in \cite{lofgren2016personalized} has $\epsilon$ instead of $\epsilon^{7/6}$ but is otherwise identical.

\section{Proof of Theorem \ref{STATE_FWBWMCMC_ACC}}  \label{PROOF_FWBWMCMC_ACC}

We will use the following result from \cite{dubhashi2009concentration}.

\begin{theorem} \label{thmChernoff}
(from Theorem 1.1 in \cite{dubhashi2009concentration}) Let $\{ Z_i \}$ be a set of independent random variables with $Z_i \in [0,1]\ \forall\ i$, and let $Z = \sum_i Z_i$. Then for any $\eta \in (0,1)$ and any $d > 2 e \E [ Z ]$,
%\begin{align} 
%\P [ Z > (1+\eta)  \E[Z] ] & \leq \exp( -\eta^2 \E[Z] / 3 )\ \forall\ \eta \in (0,1) ,  \label{eqChernoff1A} \\
%\P [ Z < (1-\eta)  \E[Z] ] & \leq \exp( -\eta^2 \E[Z] / 2 )\ \forall\ \eta \in (0,1) , \label{eqChernoff1B} \\
%\P [ Z > d ] & \leq 2^{-d}\ \forall\ d > 2 e \E [ Z ] . \label{eqChernoff2}
%\end{align}
\begin{gather}
\P [ Z > (1+\eta)  \E[Z] ] \leq \exp( -\eta^2 \E[Z] / 3 ) , \quad  \P [ Z < (1-\eta)  \E[Z] ] \leq \exp( -\eta^2 \E[Z] / 2 ) , \label{eqChernoff1} \\
\P [ Z > d ] \leq 2^{-d} . \label{eqChernoff2}
\end{gather}
\end{theorem}
%We also observe we can use \eqref{eqChernoff1A} and \eqref{eqChernoff1B} to obtain \eqref{eqChernoff1}.
%\begin{equation} %\label{eqChernoff1}
%\P [ | Z - \E [ Z ] | > \eta \E[Z] ] \leq 2 \exp( -\eta^2 \E[Z] / 3 ) .
%\end{equation}

To begin the proof, we define $Y_i = X_i / r_{\max}^t$ and $Y = \sum_{i=1}^w Y_i$, where $X_i$ is from Algorithm \ref{algOurEstimator}. Observe the $Y_i$'s are independent and $Y_i \in [0,1]$ (by the terminating condition of Algorithm \ref{algApproxCont}), so Theorem \ref{thmChernoff} applies for appropriate choices of $\eta$ and $d$. We also observe that \eqref{eqExpScalings} holds, which follows by linearity and $w = \frac{c r^s_{\max} r^t_{\max}}{\delta}$ in the statement of the theorem. 
\begin{equation} \label{eqExpScalings}
\E[Y] = \tfrac{w}{r^t_{\max}} \E [ X_i ] = \tfrac{c r^s_{\max}}{\delta} \E [ X_i ] . 
\end{equation}

We now turn to the case $\pi_s(t) \geq \delta$, for which we aim to show $\P [ | \hat{\pi}_s(t) - \pi_s(t)  | > \epsilon \pi_s(t) ] < p_{\textrm{fail}}\ \forall\ \epsilon \in ( 0 ,  \frac{1}{\sqrt{2e}})$. We will examine three sub-cases. The first two sub-cases depend on the constant $k := ( \frac{\epsilon}{2e} )^{1/3}$ (we motivate the choice of this constant at the conclusion of the proof). We also observe the following, which follows from the assumption $c > \frac{3 (2e)^{1/3} \log(2/p_{\textrm{fail}})}{\epsilon^{7/3}}$:
\begin{equation} \label{eqKCinequality}
\tfrac{ k }{3} = \tfrac{\epsilon }{6 e k^2} = \tfrac{ \epsilon^{1/3} }{ 3 (2e)^{1/3} } > \tfrac{\log(2/p_{\textrm{fail}})}{\epsilon^2 c} .
\end{equation}

For the first sub-case, assume $\E [Y] \geq kc$. Then we have the following:
\begin{align}
& \P [ | \hat{\pi}_s(t) - \pi_s(t)  | > \epsilon \pi_s(t) ]  \leq \P \left[ \left| \tfrac{1}{w} \tsum_{i=1}^w X_i -  \E [X_i ] \right| > \epsilon \E [X_i] \right]  \\
&  \quad = \P \left[ | Y - \E [ Y ] | > \epsilon \E [ Y ] \right]  \leq 2 \exp \left( - \epsilon^2 \E [Y] / 3 \right)  \leq 2 \exp \left( - \epsilon^2 k c / 3 \right)  < p_{\textrm{fail}}  .
\end{align}
Here the first inequality holds by definition of $\hat{\pi}_s(t)$ in Algorithm \ref{algOurEstimator} and the invariant \eqref{eqFwBwInvariant}; the equality holds by \eqref{eqExpScalings} and the definition of $Y$; the second inequality uses Theorem \ref{thmChernoff} (note $\epsilon < \frac{1}{\sqrt{2e}} < 1$, so \eqref{eqChernoff1} applies); and the final two inequalities hold by $\E[Y] \geq kc$ and \eqref{eqKCinequality}, respectively.

For the second sub-case, assume $\E [ Y ] \in [ \frac{\epsilon c}{2e} , kc)$. First, observe that by \eqref{eqExpScalings}, the assumption $\E [ Y ] < k c$, and the Algorithm \ref{algApproxPR} terminating condition,
\begin{equation}
\| r^s \|_1 \E [ X _i ] = \tfrac{\| r^s \|_1 \delta \E[Y]}{c r^s_{\max}} <  \tfrac{\| r^s \|_1  k \delta}{ r^s_{\max}} \leq k \delta .
\end{equation}
and so $\pi_s(t) \geq \| r^s \|_1 \E [ X _i ]  + (1-k) \delta$ (else, $\pi_s(t) < \delta$ by \eqref{eqFwBwInvariant}, a contradiction). We then write:
\begin{align}
& \P [ | \hat{\pi}_s(t) - \pi_s(t)  | > \epsilon \pi_s(t) ]  \leq \P \left[ \left| \tfrac{1}{w} \tsum_{i=1}^w X_i -  \E [X_i] \right| > \epsilon \left( \E [ X ] + \tfrac{ (1-k) \delta}{\|r^s\|_1} \right) \right]  \\
&  \quad = \P \left[ \left| Y - \E [ Y ] \right| > \epsilon \left( \E [ Y ] + \tfrac{  (1-k) \delta w}{\|r^s\|_1 r^t_{\max}} \right) \right]  \leq \P \left[ \left| Y - \E [ Y ] \right| > \epsilon \left( \E [ Y ] + (1-k) c \right) \right]  \\
&  \quad = \P \left[ \left| Y - \E [ Y ] \right| > \epsilon \left( \E [ Y ] + \left( \tfrac{1-k}{k} \right) k c \right) \right] < \P \left[ \left| Y - \E [ Y ] \right| > \tfrac{\epsilon}{k} \E [ Y ]  \right]  \\
&  \quad \leq 2 \exp \left( - \epsilon^2 \E [ Y ] / \left(3 k^2 \right) \right) < 2 \exp \left( - \epsilon^3 c / \left(6 e k^2 \right) \right)  < p_{\textrm{fail}}  .
\end{align}
Here the first inequality and first equality follow similar arguments as Case 1; the second inequality is by the Algorithm \ref{algApproxPR} terminating condition and $w = \frac{c r^s_{\max} r^t_{\max}}{\delta}$; the second equality simply multiplies and divides $k$; the third inequality holds by assumption $\E [ Y ] \in [ \frac{\epsilon c}{2e} , kc)$; the fourth inequality holds by Theorem \ref{thmChernoff} (note $\frac{\epsilon}{k} = \epsilon^{2/3} (2e)^{1/3} < 1$ by assumption $\epsilon < \frac{1}{\sqrt{2e}}$, so \eqref{eqChernoff1} applies); the fifth inequality follows from $\E [ Y ] \in [ \frac{\epsilon c}{2e} , kc)$; and the final inequality holds by \eqref{eqKCinequality}. Note we have assumed $1-k > 0$ in the third and fifth inequality; this follows from $\epsilon < \frac{1}{\sqrt{2e}}$. 

For the third and final sub-case, assume $\E[ Y ] < \frac{\epsilon c}{2e}$. We have the following:
%\subsubsection{Case 3: $\E[ Y ] < \frac{\epsilon c}{2e}$} We have the following:
\begin{align}
& \P [ | \hat{\pi}_s(t) - \pi_s(t)  | > \epsilon \pi_s(t) ] = \P \left[ \left| \tfrac{1}{w} \tsum_{i=1}^w X_i -  \E [X_i] \right| > \tfrac{\epsilon \pi_s(t)}{ \|r^s\|_1 } \right] = \P \left[ | Y - \E [ Y ] | >  \tfrac{\epsilon \pi_s(t) w}{ \|r^s\|_1 r^t_{\max} } \right] \\
& \quad  \leq \P \left[ | Y - \E [ Y ] | > \tfrac{\epsilon \delta w}{ r^s_{\max} r^t_{\max} } \right] = \P \left[ | Y - \E [Y] | > \epsilon c \right] \leq \P [ Y > \epsilon c ]  \leq 2^{-\epsilon c} . \label{eqCase3_application}
\end{align}
Here the first three equalities and first inequality follow similar arguments as previous cases; the penultimate inequality holds since $\{ | Y - \E [ Y ] | > \epsilon c \} \subset \{ Y > \epsilon c \}$ when $Y \geq \E [ Y ]$, whereas $\{ | Y - \E [ Y ] | > \epsilon c \} \subset \{ \E [ Y ] > \epsilon c \} \subset \{ 2 e \E [ Y ] > \epsilon c \} = \emptyset$ when $Y < \E [ Y ]$; and the final inequality holds by Theorem \ref{thmChernoff}; note $\epsilon c > 2 e \E[Y]$ by assumption, so \eqref{eqChernoff2} applies. Next, we observe
\begin{align}
\epsilon c & >  \tfrac{ 6 e \log ( 1 / p_{\textrm{fail}}  ) }{ ( \sqrt{2e} \epsilon )^{4/3} } > 6 e \log ( 1 / p_{\textrm{fail}} ) = \tfrac{6 e}{ \log_2(e) } \log_2( 1 / p_{\textrm{fail}} ) > \log_2( 1 / p_{\textrm{fail}} ) . \label{eqCase3_logToLog2}
\end{align}
where the first two inequalities hold by $c > \frac{3 (2e)^{1/3} \log(1/p_{\textrm{fail}})}{\epsilon^{7/3}}$ and $\epsilon < \frac{1}{\sqrt{2e}}$, and the final inequality holds since $e < 4 \Rightarrow \log_2(e) < 2 \Rightarrow \frac{6 e}{ \log_2(e) } > \frac{3 e}{2} > 1$. Combining \eqref{eqCase3_application} and \eqref{eqCase3_logToLog2} completes Case 3.

Finally, we note the bounds in Cases 1 and 3 grow with decreasing  and increasing $k$, respectively. Hence, our choice of $k = ( \frac{\epsilon}{2e} )^{1/3}$ comes from equating the two to minimize failure probability.

We now turn to the case $\pi_s(t) < \delta$. Observe that by $\pi_s(t) < \delta$ and the invariant \eqref{eqFwBwInvariant}, $\| r^s \|_1 \E [X_i] < \delta$. By \eqref{eqExpScalings}, this implies $2 e \E [Y] <  \frac{2 e w \delta}{r^t_{\max} \| r^s \|_1} =: b$. Then
\begin{align}
& \P [ |  \hat{\pi}_s(t) - \pi_s(t)  | > 2 e \delta ] = \P \left[ \left| \tfrac{1}{w} \tsum_{i = 1}^w X_i -  \E [ X_i ] \right| > \tfrac{2 e \delta}{ \| r^s \|_1 } \right] = \P \left[ | Y - \E [Y] | > \tfrac{2 e \delta w}{ \| r^s \|_1 r^t_{\max}}   \right]  \\
& \quad = \P \left[ | Y - \E [Y] | > b \right] \leq P [ Y > b ] \leq 2^{-b} . \label{eqCase4_application} 
\end{align}
Here the equalities follow similar steps as previous cases, the first inequality holds by the same argument in the Case 3 analysis, and the final inequality holds by Theorem \ref{thmChernoff} (note \eqref{eqChernoff2} applies since $b > 2 e \E [ Y ]$). We also observe
\begin{equation} \label{eqCase4_logToLog2}
b = \tfrac{2 e w \delta}{r^t_{\max} \| r^s \|_1} > \tfrac{2 e w \delta}{r^t_{\max} r^s_{\max}} = 2 e c > \epsilon c > \log_2 ( 1 / p_{\textrm{fail}} ) ,
\end{equation}
where the first inequality is by the Algorithm \ref{algApproxPR} terminating condition, the second inequality holds since $2 e > 1 > \epsilon$, and the third inequality follows from \eqref{eqCase3_logToLog2}; the equalities are by definition. Finally, we combine \eqref{eqCase4_application} and \eqref{eqCase4_logToLog2} to complete the proof.

\section{Proof of Theorem \ref{STATE_FWBWMCMC_COMP}} \label{PROOF_FWBWMCMC_COMP}

The complexity of Algorithm \ref{algOurEstimator} is the total complexity of Algorithm \ref{algApproxCont}, Algorithm \ref{algApproxPR}, and the random walks. Below, we show Algorithms \ref{algApproxCont} and \ref{algApproxPR} have complexity $\frac{m}{\alpha r^t_{\max}}$ and $\frac{m}{\alpha r^s_{\max}}$, respectively (using arguments from \cite{andersen2008local} and \cite{andersen2006local}). Furthermore, the complexity of the random walk stage is $O ( \frac{r^s_{\max} r^t_{\max} \log( 1 / p_{\textrm{fail}} ) }{ \alpha \delta \epsilon^{7/3} } )$, where $\frac{1}{\alpha}$ is the expected complexity of sampling a single random walk, and where the remaining factors give the number of walks required (recall in the statement of the theorem we assume \eqref{eqFwBwMcmcAccAss} holds). Hence, the complexity of Algorithm \ref{algOurEstimator} is $O(C(r^s_{\max} r^t_{\max}) /\alpha)$, where
\begin{equation}\label{eqOurEstimatorComp}
C(r^s_{\max} r^t_{\max}) = \tfrac{m}{ r^s_{\max}} + \tfrac{r^s_{\max} r^t_{\max} \log( 1 / p_{\textrm{fail}} ) }{ \delta \epsilon^{7/3} } + \tfrac{m}{ r^t_{\max}}   . 
\end{equation}

We now aim choose $r^s_{\max}, r^t_{\max}$ to minimize $O(C(r^s_{\max} r^t_{\max}) /\alpha)$, or equivalently, to minimize $C(r^s_{\max} r^t_{\max})$. For this, we let $K = \frac{\log( 1 / p_{\textrm{fail}} )}{ \delta \epsilon^{7/3} } > 0$ and note $\tfrac{\partial C}{\partial r^s_{\max}} = K r^t_{\max} - \tfrac{m}{ ( r^s_{\max} )^2 }  = 0$ if and only if $( r^s_{\max} )^2 r^t_{\max} = \tfrac{m}{K}$, and similarly, $\frac{\partial C}{\partial r^t_{\max}} = 0$ if and only if $( r^t_{\max} )^2 r^s_{\max} = \frac{m}{K}$; hence, $( \frac{m}{K} )^{1/3},( \frac{m}{K} )^{1/3}$ is a stationary point of $C(r^s_{\max},r^t_{\max})$. To verify this is a minimizer, we observe
\begin{align}
\begin{bmatrix} \frac{\partial^2 C}{\partial (r^s_{\max})^2} & \frac{\partial^2 C}{\partial r^s_{\max} r^t_{\max}} \\ \frac{\partial^2 C}{\partial r^t_{\max} r^s_{\max}} & \frac{\partial^2 C}{\partial (r^t_{\max})^2} \end{bmatrix} = \begin{bmatrix} \frac{2 m}{ (r^s_{\max} )^3 } & K \\ K & \frac{2 m}{ (r^t_{\max} )^3 }  \end{bmatrix} ,
\end{align}
from which it follows that the Hessian of $C$ evaluated at $r^s_{\max} = r^t_{\max} = ( \frac{m}{K} )^{1/3}$ is $K ( I + 1 1^{\trans} )$. This is positive definite, since for any vector $z \neq 0$,
\begin{equation}
z^{\trans} \left( K \left( I + 1 1^{\trans} \right) \right) z = K \left( \| z \|_2^2 + \left( z^{\trans} 1 \right)^2 \right) > 0 .
\end{equation}
To summarize, we have shown $r^s_{\max} = r^t_{\max} = ( \frac{m}{K} )^{1/3}$ minimizes $C(r^s_{\max},r^t_{\max})$ and hence minimizes the complexity of Algorithm \ref{algOurEstimator}. This establishes that the choice of $r^s_{\max}, r^t_{\max}$ in the statement of the theorem minimizes complexity. Finally, substituting $r^s_{\max} = r^t_{\max} = ( \frac{m}{K} )^{1/3}$ into \eqref{eqOurEstimatorComp} and dividing by $\alpha$ gives the complexity expression given in the theorem. Following the same approach establishes the Algorithm \texttt{Bidirectional-PPR} complexity bound given in the theorem.  

We return to bound the complexities of Algorithms \ref{algApproxCont} and \ref{algApproxPR}. For Algorithm \ref{algApproxCont}, we use an argument from \cite{andersen2008local}. First, let $v \in V$. From Algorithm \ref{algApproxCont}, $p^t(v)$ increases by at least $\alpha r^t_{\max}$ at each iteration for which $v^*  = v$. By the invariant \eqref{eqBwInvariant}, $p^t(v) \leq \pi_v(t)$. Taken together, $v^* = v$ for at most $\frac{\pi_v(t)}{\alpha r^t_{\max}}$ iterations. Furthermore, the complexity of each iteration for which $v^* = v$ is $d_{\textrm{in}}(v)$. Hence, the complexity of all iterations for which $v^* = v$ is bounded by $d_{\textrm{in}}(v) \frac{\pi_v(t)}{\alpha r^t_{\max}}$. Finally, the complexity of Algorithm \ref{algApproxCont} can be bounded by summing over all $v \in V$, i.e.\ $\sum_{v \in V} d_{\textrm{in}}(v) \frac{\pi_v(t)}{\alpha r^t_{\max}}  \leq \frac{1}{\alpha r^t_{\max}} \sum_{v \in V} d_{\textrm{in}}(v) = \frac{m}{\alpha r^t_{\max}}$.

We next turn to Algorithm \ref{algApproxPR}. As mentioned in the main text, Algorithm \ref{algApproxPR} changes the termination criteria from the algorithm originally defined in \cite{andersen2006local}; for clarify, we include the original definition in Algorithm \ref{algApproxPROriginal}. Here we use tilde marks to distinguish quantities from those in Algorithm \ref{algApproxPR}, and we explicitly indicate iteration number $k$ to improve clarity of the arguments to follow. Besides these notational changes, the only difference between Algorithms \ref{algApproxPR} and \ref{algApproxPROriginal} is the termination criteria.

With this notation in place, the complexity of Algorithm \ref{algApproxPROriginal} can be bounded as follows (using arguments from \cite{andersen2006local}). First, observe that for any iteration $k$,
\begin{equation}\label{eqFDPl1Decrease}
\| \tilde{r}^s_k \|_1 = \sum_{v \in V \setminus ( \{ v_k \} \cup N_{\textrm{out}}(v_k) ) } \tilde{r}^s_{k-1}(v)  + \sum_{v \in N_{\textrm{out}} ( v_k ) } \left(  \tilde{r}^s_{k-1}(v) + \tfrac{ (1-\alpha) \tilde{r}^s_{k-1}(v_k) }{ d_{\textrm{out}}(v_k) } \right) = \| \tilde{r}^s_{k-1} \|_1 - \alpha \tilde{r}^s_{k-1}(v_k)  ,
\end{equation}
where the first equality holds via the iterative update in Algorithm \ref{algApproxPROriginal}. Next, let $k^*$ be the iteration at which Algorithm \ref{algApproxPROriginal} terminates. Then the complexity of the algorithm is $\sum_{k=1}^{k^*} d_{\textrm{out}}(v_k)$, and
\begin{align}
\tsum_{k=1}^{k^*} d_{\textrm{out}}(v_k) & = \tsum_{k=1}^{k^*} \tfrac{d_{\textrm{out}}(v_k)}{\tilde{r}^s_{k-1}(v_k)} \tilde{r}^s_{k-1}(v_k) < \tfrac{1}{\tilde{r}^s_{\max}} \tsum_{k=1}^{k^*} \tilde{r}^s_{k-1}(v_k) \\
& = \tfrac{1}{\alpha \tilde{r}^s_{\max}} \tsum_{k=1}^{k^*} \left( \| \tilde{r}^s_{k-1} \|_1 - \| \tilde{r}^s_{k} \|_1 \right) = t\frac{1}{\alpha \tilde{r}^s_{\max}} \left( \| \tilde{r}^s_{0} \|_1 - \| \tilde{r}^s_{k^*} \|_1 \right) \leq \tfrac{1}{\alpha \tilde{r}^s_{\max}} ,
\end{align}
where the first inequality holds since $\tilde{r}_{\max}^s < \| D^{-1} \tilde{r}^s_k \|_{\infty} = \frac{\tilde{r}^s_{k-1}(v_k)}{d_{\textrm{out}}(v_k)}$ for $k \leq k^*$ (i.e.\ for each $k$ until the algorithm terminates), the second equality holds by the previous display, and the final inequality holds since $\| \tilde{r}^s_{0} \|_1 = \| e_s \|_1 = 1$ and $\| \tilde{r}^s_{k^*} \|_1 \geq 0$ (the remaining steps are straightforward). 

Using this, we can bound the complexity of Algorithm \ref{algApproxPR}. First, observe that in Algorithm \ref{algApproxPR},
\begin{equation}
\| r^s \|_1 = \tsum_{v \in V} \tfrac{r^s(v)}{d_{\textrm{out}}(v)} d_{\textrm{out}}(v) \leq m \max_{v \in V} \frac{r^s(v)}{d_{\textrm{out}}(v)} = m \| D^{-1} r^s \|_{\infty} ,
\end{equation}
and so to guarantee termination of Algorithm \ref{algApproxPR} (i.e.\ to ensure $\| r^s \|_1 \leq r^s_{\max}$), it suffices to guarantee $\| D^{-1} r^s \|_{\infty} \leq \frac{r^s_{\max}}{m}$. But from the analysis of Algorithm \ref{algApproxPROriginal}, the complexity required to ensure $\| D^{-1} r^s \|_{\infty} \leq \frac{r^s_{\max}}{m}$ is $\frac{m}{\alpha r^s_{\max}}$; hence, the complexity of Algorithm \ref{algApproxPR} is bounded by $\frac{m}{\alpha r^s_{\max}}$ as well.

\begin{algorithm}
\caption{$(\tilde{p}^s,\tilde{r}^s) = \texttt{Approximate-PageRank-Original} (G,s,\alpha,\tilde{r}_{\max}^s)$} \label{algApproxPROriginal}
Set $k = 0, \tilde{p}^s_k = 0, \tilde{r}^s_k = e_s$ \\
\While{$\| D^{-1} \tilde{r}^s_k \|_{\infty} > \tilde{r}_{\max}^s$}{
	Set $k \leftarrow k+1$; let $v_k \in \arg \max_{v \in V} \tilde{r}^s_{k-1}(v) / d_{\textrm{out}}(v)$ \\
	Set $\tilde{r}^s_k(v) = \tilde{r}^s_{k-1}(v) + (1-\alpha) \tilde{r}^s_{k-1}(v_k) / d_{\textrm{out}}(v_k)$, $\tilde{p}^s_k(v) = \tilde{p}^s_{k-1}(v)$ $\forall\ v \in N_{\textrm{out}}(v_k)$ \\
%
%	\For{ $v \in N_{\textrm{out}}(v_k)$ }{
%		  $\tilde{r}^s_k(v) = \tilde{r}^s_{k-1}(v) + (1-\alpha) \tilde{r}^s_{k-1}(v_k) / d_{\textrm{out}}(v_k)$
%	}
	Set $\tilde{r}^s_k(v_k) = 0$, $\tilde{p}^s_k(v_k) = \tilde{p}^s_{k-1}(v_k) + \alpha \tilde{r}^s_{k-1}(v_k)$ \\
	Set $\tilde{p}^s_k(v) = \tilde{p}^s_{k-1}(v)$, $\tilde{r}^s_k(v) = \tilde{r}^s_{k-1}(v)$ $\forall\ v \in V \setminus ( \{ v_k \} \cup N_{\textrm{out}}(v_k) )$ \\
}
%\KwRet{$(\tilde{p}^s, \tilde{r}^s) = (\tilde{p}^s_k, \tilde{r}^s_k)$}
\end{algorithm}

\section{Practical version of \texttt{FW-BW-MCMC}} \label{appPracticalVersion}

In this appendix, we define and analyze a modified version of \texttt{FW-BW-MCMC} that is more useful in practice. Before proceeding to the formal definition and analysis, we first motivate the practical algorithm. First, suppose for an instance of \texttt{FW-BW-MCMC} we have already run the backward DP (Algorithm \ref{algApproxCont}) and we are currently running the forward DP (Algorithm \ref{algApproxPR}). Though \texttt{FW-BW-MCMC} dictates we run the forward DP until $\| r^s \|_1 < r^s_{\max}$ for some predefined $r^s_{\max}$, we could instead terminate the forward DP (even if $\| r^s \|_1 > r^s_{\max}$) and proceed to the random walks. In other words, we dynamically change $r^s_{\max}$ from the predefined value to the current value of $\| r^s \|_1$. Then, if the number of walks sampled is $w = c \| r^s \|_1 r^t_{\max} / \delta$, where
\begin{equation}\label{eqCrequire}
c = \tfrac{3 (2e)^{1/3} \log( 2 / p_{\textrm{fail}} )}{ \epsilon^{7/3}} ,
\end{equation}
the proof of Theorem \ref{STATE_FWBWMCMC_ACC} goes through, i.e.\ the accuracy guarantee is achieved. Furthermore, this argument holds at any iteration of the forward DP. In other words, we can terminate the forward DP at any iteration and achieve the accuracy guarantee, as long as we scale $w$ with the $\| r^s \|_1$ value obtained at termination. From this observation, we aim to terminate the forward DP at the ``optimal'' iteration, i.e.\ the iteration for which the overall complexity of the algorithm is minimized.

Towards determining this optimal iteration, let $C_{FDP}$ denote the complexity of the forward DP until the current iteration, and define $C_{MCMC} = \frac{3 (2e)^{1/3} r^t_{\max} \log( 2 / p_{\textrm{fail}} )}{\alpha \delta \epsilon^{7/3}}$, so that $\| r^s \|_1 C_{MCMC}$ gives the complexity of the MCMC stage (since $c \| r^s \|_1 r^t_{\max} / \delta$ walks are sampled, each in expected time $\frac{1}{\alpha}$, with $c$ satisfying \eqref{eqCrequire}). Then, if we terminate the forward DP at the current iteration, the combined complexity of forward DP and MCMC stages will be $C_{FDP} + \| r^s \|_1 C_{MCMC}$. Suppose instead that we decide to run one more iteration, i.e.\ to terminate the forward DP at the \textit{next} iteration. Then, by Algorithm \ref{algApproxPR}, the next iteration will have complexity $d_{\textrm{out}}(v^*)$. Furthermore, by \eqref{eqFDPl1Decrease} in Appendix \ref{PROOF_FWBWMCMC_COMP}, $\| r^s \|_1$ will decrease by $\alpha r^s(v^*)$ at the next iteration. Hence, if we run one more iteration, the combined complexity of forward DP and MCMC will be $\left( C_{FDP} + d_{\textrm{out}}(v^*) \right) + \left( \| r^s \|_1 - \alpha r^s(v^*) \right) C_{MCMC}$. Now clearly, we should terminate the forward DP if and only if the resulting complexity is less than the complexity resulting from running another iteration. Hence, from the previous argument, we should terminate if and only if
\begin{equation}\label{eqDlInfinityTerm}
C_{FDP} + \| r^s \|_1 C_{MCMC}  < \left( C_{FDP} + d_{\textrm{out}}(v^*) \right) + \left( \| r^s \|_1 - \alpha r^s(v^*) \right) C_{MCMC}  \Leftrightarrow \tfrac{r^s(v^*)}{d_{\textrm{out}}(v^*)} < \tfrac{1}{\alpha C_{MCMC}} .
\end{equation}

In other words, to optimize the tradeoff between forward DP and MCMC complexity, we should run the forward DP until $\| D^{-1} r^s \|_{\infty}$ falls below the threshold in \eqref{eqDlInfinityTerm}. This motivates the practical version of \texttt{FW-BW-MCMC}, given in Algorithm \ref{algOurEstimatorPractical}. Algorithm \ref{algOurEstimatorPractical} changes two aspects of \texttt{FW-BW-MCMC}. First, it replaces Algorithm \ref{algApproxPR} with Algorithm \ref{algApproxPROriginal} (which uses  $\| D^{-1} \tilde{r}^s \|_{\infty}$ termination, as suggested by \eqref{eqDlInfinityTerm}). Second, it scales the the number of random walks sampled with $\| \tilde{r}^s \|_1$, as discussed above.

\begin{algorithm}
\caption{$\hat{\pi}_s(t) = \texttt{FW-BW-MCMC-Practical} (G,s,t,\alpha,\tilde{r}^s_{\max}, r_{\max}^t,w)$} \label{algOurEstimatorPractical}
Let $(p^t,r^t) = \texttt{Approximate-Contributions} (G,t,\alpha,r_{\max}^t)$ (Algorithm \ref{algApproxCont}) \\
Let $(\tilde{p}^s,\tilde{r}^s) = \texttt{Approximate-PageRank-Original} (G,s,\alpha,\tilde{r}_{\max}^s)$ (Algorithm \ref{algApproxPROriginal}); set $\tilde{\sigma}_s = \tilde{r}^s / \| \tilde{r}^s \|_1$ \\
\For{ $i = 1$ \KwTo $w \| \tilde{r}^s \|_1$ }{
	 Sample random walk starting at $\nu \sim \tilde{\sigma}_s$ of length $\sim$ geom($\alpha$); let $X_i = r^t ( U_i )$, where $U_i$ is endpoint of walk
}
Let $\hat{\pi}_s(t) = p^t(s) + \langle \tilde{p}^s , r^t \rangle +  \frac{1}{w} \sum_{i=1}^{w \| \tilde{r}^s \|_1} X_i$
\end{algorithm}

We can now establish accuracy and average-case complexity guarantees for Algorithm \ref{algOurEstimatorPractical}.

\begin{theorem} \label{STATE_FWBWMCMC_PRAC_ACC}
Fix minimum PPR threshold $\delta$, relative error tolerance $\epsilon$, failure probability $p_{\textrm{fail}}$. Let
\begin{equation} \label{eqFwBwMcmcPracAccAss}
\epsilon \in \left( 0,\tfrac{1}{\sqrt{2e}} \right) , \quad w = \tfrac{c r^t_{\max}}{\delta} , \quad c > \tfrac{3 (2e)^{1/3} \log( 2 / p_{\textrm{fail}} )}{\epsilon^{7/3}}   .
\end{equation} 
Then the estimate $\hat{\pi}_s(t)$ produced by Algorithm \ref{algOurEstimatorPractical} satisfies \eqref{eqBidirTypeErrGuar}with probability $\geq 1-p_{\textrm{fail}}$.
\end{theorem}
\begin{proof}
As discussed above, the proof of Theorem \ref{STATE_FWBWMCMC_ACC} goes through to establish this result.
\end{proof}

\begin{theorem}
Fix minimum PPR threshold $\delta$, relative error tolerance $\epsilon$, and failure probability $p_{\textrm{fail}}$. Assume \eqref{eqFwBwMcmcPracAccAss} holds. Then for any $s \in V$ and for $t \sim V$ uniformly, setting $\tilde{r}^s_{\max} = \frac{ \delta \epsilon^{7/3}  }{  r^t_{\max} \log( 1 / p_{\textrm{fail}})  }$, $r^t_{\max} = \frac{ \sqrt{m \delta} \epsilon^{7/6} }{ \sqrt{n \log ( 1 / p_{\textrm{fail}} ) } }$ in Algorithm \ref{algOurEstimatorPractical} yields complexity $O \left(  \frac{\sqrt{m \log(1/p_{\textrm{fail}})}}{\sqrt{n \delta} \alpha \epsilon^{7/6} }  \right)$.
\end{theorem}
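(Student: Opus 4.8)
The plan is to bound the expected running time of Algorithm \ref{algOurEstimatorPractical} by decomposing it into the cost of the backward DP (Algorithm \ref{algApproxCont}), the forward DP (Algorithm \ref{algApproxPROriginal}), and the random walk stage, bounding each piece, and then substituting the prescribed values of $\tilde r^s_{\max}$ and $r^t_{\max}$; the accuracy claim is inherited directly from Theorem \ref{STATE_FWBWMCMC_PRAC_ACC}, whose hypotheses \eqref{eqFwBwMcmcPracAccAss} are assumed here, so only complexity needs to be argued.

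\textbf{Forward DP and random walks.} From the analysis in Appendix \ref{secApproxPRComplexity}, Algorithm \ref{algApproxPROriginal} with $\| D^{-1} \tilde r^s \|_\infty \le \tilde r^s_{\max}$ termination runs in time $O(1/(\alpha \tilde r^s_{\max}))$. For the MCMC stage, Algorithm \ref{algOurEstimatorPractical} samples $w\|\tilde r^s\|_1$ walks with $w = c r^t_{\max}/\delta$ and $c = O(\log(1/p_{\textrm{fail}})/\epsilon^{7/3})$ (from \eqref{eqFwBwMcmcPracAccAss}); by \eqref{eqFDPl1Decrease}, $\|\tilde r^s\|_1$ is non-increasing along the forward DP, so $\|\tilde r^s\|_1 \le \|e_s\|_1 = 1$ and at most $O(r^t_{\max}\log(1/p_{\textrm{fail}})/(\delta\epsilon^{7/3}))$ walks are drawn. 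Each has geometric$(\alpha)$ length, hence expected cost $O(1/\alpha)$, so the expected MCMC cost is $O(r^t_{\max}\log(1/p_{\textrm{fail}})/(\alpha\delta\epsilon^{7/3}))$. With the stated choice $\tilde r^s_{\max} = \delta\epsilon^{7/3}/(r^t_{\max}\log(1/p_{\textrm{fail}}))$ the forward DP cost matches the MCMC cost, consistent with the tradeoff motivating the practical algorithm in Appendix \ref{appPracticalVersion}.

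\textbf{Backward DP, averaged over $t$.} From Appendix \ref{secApproxContComplexity}, for a fixed target $t$ the cost of Algorithm \ref{algApproxCont} is at most $\tfrac{1}{\alpha r^t_{\max}}\sum_{v\in V} d_{\textrm{in}}(v)\,\pi_v(t)$. The key point is that each $\pi_v$ is a probability distribution over $V$, so $\sum_{t\in V}\pi_v(t) = 1$ and therefore $\E_{t\sim V}[\pi_v(t)] = 1/n$. Taking expectation over $t$ uniform then gives expected backward DP cost at most $\tfrac{1}{n\alpha r^t_{\max}}\sum_{v\in V}d_{\textrm{in}}(v) = \tfrac{m}{n\alpha r^t_{\max}}$, an $n$-fold improvement over the worst case $m/(\alpha r^t_{\max})$.

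\textbf{Combining.} Summing and substituting the chosen $\tilde r^s_{\max}$ gives total expected complexity $O\big(\tfrac{r^t_{\max}\log(1/p_{\textrm{fail}})}{\alpha\delta\epsilon^{7/3}} + \tfrac{m}{n\alpha r^t_{\max}}\big)$, and these two terms are balanced exactly at $r^t_{\max} = \sqrt{m\delta\epsilon^{7/3}/(n\log(1/p_{\textrm{fail}}))} = \sqrt{m\delta}\,\epsilon^{7/6}/\sqrt{n\log(1/p_{\textrm{fail}})}$, the prescribed value; plugging back in yields the claimed $O\big(\sqrt{m\log(1/p_{\textrm{fail}})}/(\sqrt{n\delta}\,\alpha\epsilon^{7/6})\big)$. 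The main subtlety is the averaging step for the backward DP — extracting the clean $1/n$ factor from $\sum_t\pi_v(t)=1$ — together with verifying $\tilde r^s_{\max},r^t_{\max}\in(0,1)$, which, as in Theorem \ref{STATE_FWBWMCMC_COMP}, holds under a mild condition on $m,\delta,\epsilon,p_{\textrm{fail}}$; the remaining algebra is routine.
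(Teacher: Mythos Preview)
Your proposal is correct and follows essentially the same argument as the paper: decompose the cost into backward DP, forward DP, and MCMC; bound the forward DP by $O(1/(\alpha\tilde r^s_{\max}))$ and the MCMC stage by $O(r^t_{\max}\log(1/p_{\textrm{fail}})/(\alpha\delta\epsilon^{7/3}))$ using $\|\tilde r^s\|_1\le 1$; average the backward DP bound over $t\sim V$ uniform via $\sum_{t\in V}\pi_v(t)=1$ to get $m/(n\alpha r^t_{\max})$; then substitute the stated parameter choices. The only cosmetic difference is the order in which you treat the three stages.
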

\begin{proof}
First, for the complexity of the backward DP (Algorithm \ref{algApproxCont}), we use the result from \cite{lofgren2013personalized}, which we include for completeness. Recall from Appendix \ref{PROOF_FWBWMCMC_COMP} that the complexity of Algorithm \ref{algApproxCont} for any $t \in V$ is bounded by $\tsum_{v \in V} d_{\textrm{in}}(v) \tfrac{\pi_v(t)}{\alpha r^t_{\max}}$. Hence, for $t \sim V$ uniformly, the expected complexity is
\begin{equation}
\tfrac{1}{n} \tsum_{t \in V} \tsum_{v \in V} d_{\textrm{in}}(v) \tfrac{\pi_v(t)}{\alpha r^t_{\max}} = \tfrac{1}{n \alpha r^t_{\max}} \tsum_{v \in V} d_{\textrm{in}}(v) \tsum_{t \in V} \pi_v(t) = \tfrac{m}{n \alpha r^t_{\max}} ,
\end{equation}
since $\sum_{t \in V} \pi_v(t) = 1$ by definition. Next, we consider the complexity of the forward DP (Algorithm \ref{algApproxPROriginal}). From Appendix \ref{PROOF_FWBWMCMC_COMP}, for any $s \in V$ we have complexity $\frac{1}{\alpha \tilde{r}^s_{\max}} = \frac{ r^t_{\max} \log( 1 / p_{\textrm{fail}}) }{ \alpha \delta \epsilon^{7/3} }$. Finally, for the MCMC stage, we sample $w \| \tilde{r}^s \|_1 \leq w$ walks, where $w = c r^t_{\max} / \delta$ with $c$ satisfying \eqref{eqFwBwMcmcPracAccAss}. Each walk is sampled in average time $\frac{1}{\alpha}$. Therefore, the MCMC stage complexity is $O ( \tfrac{ r^t_{\max} \log( 1 / p_{\textrm{fail}}) }{ \alpha \delta \epsilon^{7/3} } )$. Thus, the overall complexity of Algorithm \ref{algOurEstimatorPractical} is bounded by
\begin{equation}\label{eqPracCompRtmax}
O \left( \tfrac{m}{n \alpha r^t_{\max}} + \tfrac{ r^t_{\max} \log( 1 / p_{\textrm{fail}}) }{ \alpha \delta \epsilon^{7/3} } \right) .
\end{equation}
Substituting $r^t_{\max}$ given in the statement of the theorem yields the desired complexity bound. Further, viewing \eqref{eqPracCompRtmax} as a function of $r^t_{\max}$, it is straightforward to verify this $r^t_{\max}$ is the global minimizer.
\end{proof}

\section{Proof of Theorem \ref{STATE_UNIFIED_MCMC_COMP}} \label{PROOF_UNIFIED_MCMC_COMP}

We first observe
\begin{align}
& \P \left[  \tsum_{v \in V} \max_{s \in S} X_s^{(w)}(v) > (1+\epsilon) w \tsum_{v \in V} \max_{s \in S} \sigma_s(v) \right]  \\
& \leq \P \left[  \cup_{s \in S, v \in V} \{ X_s^{(w)}(v) > (1+\epsilon) w \sigma_s(v) \} \right] \leq \sum_{ s \in S, v \in V : \sigma_s(v) > 0  } \P \left[ X_s^{(w)}(v) > (1+\epsilon) w \sigma_s(v) \right] , \label{eqMaxSampleUnionBound}
\end{align}
where the second inequality uses the fact that $X^{(w)}_s(v) \sim \textrm{Binomial}(w,\sigma_s(v))$ (hence, $X^{(w)}_s(v) = 0$ when $\sigma_s(v) = 0$). Again using this fact, we have by \eqref{eqChernoff1} from Theorem \ref{thmChernoff} in Appendix \ref{PROOF_FWBWMCMC_ACC},
\begin{equation}\label{eqMaxSampleChernoff}
\P \left[ X_s^{(w)}(v) > (1+\epsilon) w \sigma_s(v) \right] \leq \exp \left( - \tfrac{\epsilon^2}{3} w \sigma_s(v) \right) .
\end{equation}
Combining \eqref{eqMaxSampleUnionBound} and \eqref{eqMaxSampleChernoff}, we obtain
\begin{align}
& \P \left[ \tsum_{v \in V} \max_{s \in S} X_s^{(w)}(v) > (1+\epsilon) w \tsum_{v \in V} \max_{s \in S} \sigma_s(v) \right]  \leq \sum_{ s \in S, v \in V : \sigma_s(v) > 0  } \exp \left( - \tfrac{\epsilon^2}{3} w \sigma_s(v) \right) \\
& \quad \leq  \left( \max_{s \in S, v \in V : \sigma_s(v) > 0} \left\{  \exp \left( - \tfrac{\epsilon^2}{3} w \sigma_s(v) \right) \right\} \right) \left( \tsum_{s \in S, v \in V} \mathbbm{1}_{ \{ \sigma_s(v) > 0 \} } \right) \\
& \quad  = \exp \left( \tfrac{- \epsilon^2}{3} w \min_{s \in S, v \in V : \sigma_s(v) > 0} \sigma_s(v) \right) \left( \tsum_{s \in S, v \in V} \mathbbm{1}_{ \{ \sigma_s(v) > 0 \} } \right) < p_{\fail} / 2 , \label{eqUnionBoundTerm1}
\end{align}
where the final inequality holds by the bound on $w$ in the statement of the theorem. For the lower tail, following the same steps used to obtain \eqref{eqUnionBoundTerm1} gives
\begin{equation}\label{eqUnionBoundTerm2}
\P \left[  \tsum_{v \in V} \max_{s \in S} X_s^{(w)}(v) < (1-\epsilon) w \tsum_{v \in V} \max_{s \in S} \sigma_s(v) \right] < p_{\fail} / 2 .
\end{equation}
Finally, by the union bound, \eqref{eqUnionBoundTerm1} and \eqref{eqUnionBoundTerm2} together establish the theorem.

%\begin{align}
%& \P \left[  \tsum_{v \in V} \max_{s \in S} X_s^{(w)}(v) < (1-\epsilon) w \tsum_{v \in V} \max_{s \in S} \sigma_s(v) \right] \\
%& \quad \leq \P \left[  X_s^{(w)}(v) < (1-\epsilon) w \sigma_s(v) \textrm{ for some } s \in S, v \in V \right] \\
%& \quad \leq \sum_{ s \in S, v \in V : \sigma_s(v) > 0  } \P \left[ X_s^{(w)}(v) < (1-\epsilon) w \sigma_s(v) \right]  \leq \sum_{ s \in S, v \in V : \sigma_s(v) > 0  } \exp \left( - \tfrac{\epsilon^2}{2} w \sigma_s(v) \right) \\
%& \quad \leq \left( \max_{s \in S, v \in V : \sigma_s(v) > 0} \left\{  \exp \left( - \tfrac{\epsilon^2}{2} w \sigma_s(v) \right) \right\} \right) \left( \tsum_{s \in S, v \in V} \mathbbm{1}_{ \{ \sigma_s(v) > 0 \} } \right) \\
%& \quad  = \exp \left( \tfrac{- \epsilon^2}{2} w \min_{s \in S, v \in V : \sigma_s(v) > 0} \sigma_s(v) \right) \left( \tsum_{s \in S, v \in V} \mathbbm{1}_{ \{ \sigma_s(v) > 0 \} } \right) < p_{\fail} / 2 . \label{eqUnionBoundTerm2}
%\end{align}

\section{Proof of Proposition \ref{STATE_BW_ITER_MERGE}}  \label{PROOF_BW_ITER_MERGE}

First, assume \texttt{Merge} is used at each iteration for which $v^* = t_2$. By Algorithm \ref{algApproxCont}, $\| p^{t_2} \|_1$ increases by at least $\alpha r^t_{\max}$ at each iteration for which $v^* \neq t_1$. By \eqref{eqMergeUpdate}, $\| p^{t_2} \|_1$ increases by at least $\| p^{t_1} \|_1 r^t_{\max}$ at each iteration for which $v^* = t_1$. Let us define $I_1$ as the number of iterations for which $v^* \neq t_1$, $I_2$ as the number of iterations for which $v^* = t_1$, and $I = I_1 + I_2$ as the total number of iterations. Since $\| p^{t_2} \|_1 = 0$ at the start of Algorithm \ref{algApproxCont} and $\| p^{t_2} \|_1 \leq n \pi(t_2)$ by the invariant \eqref{eqBwInvariant}, we have
\begin{equation} \label{eqExtendMergeCompExpansion}
\tfrac{n \pi(t_2)}{r^t_{\max}} \geq \alpha I_1 + \| p^{t_1} \|_1 I_2 = \alpha I + ( \| p^{t_1} \|_1 - \alpha ) I_2 .
\end{equation}
Now at termination of Algorithm \ref{algApproxCont}, $\| r^{t_2} \|_{\infty} \leq r^t_{\max}$, so by the invariant \eqref{eqBwInvariant}, $\pi_{t_1}(t_2) \leq p^{t_2}(t_1) + r^t_{\max}$ at termination. Therefore, if $\pi_{t_1}(t_2) > r^t_{\max}$, $p^{t_2}(t_1) > 0$ at termination, which can only occur if $v^* = t_1$ at some iteration. Hence, $\pi_{t_1}(t_2) > r^t_{\max} \Rightarrow I_2 \geq 1$. Finally, from Algorithm \ref{algApproxCont}, $\| p^{t_1} \|_1 \geq \alpha$. Substituting into \eqref{eqExtendMergeCompExpansion} gives $I \leq \tfrac{n \pi(t_2)}{\alpha r^t_{\max}} - \tfrac{( \| p^{t_1} \|_1 - \alpha )}{\alpha}$.

If instead \texttt{Merge} is not used, $\| p^{t_2} \|_1$ increases by at least $\alpha r^t_{\max}$ at \textit{every} iteration. Hence, the same argument as above establishes that the total number of iterations is bounded by $\frac{n \pi(t_2)}{\alpha r^t_{\max}}$.

\section{Proof of Theorem \ref{STATE_MATRIX_ACC}} \label{PROOF_MATRIX_ACC}

We will use Corollary 6.2.1 from \cite{tropp2015introduction}, which (applied to our setting) states the following. Assume $\{ X_i \}_{i=1}^w$ are independent random matrices satisfying $\E [ X_i ] = R_S^{\trans} \Pi R_T$. Let $M$ be s.t.\ $\| X_i \|_2 \leq M$ a.s., and let $m_2(X_i) = \max \{ \| \E [ X_i X_i^{\trans} ] \|_2 , \| \E [ X_i^{\trans} X_i ] \|_2 \}$. Then $\forall\ \eta > 0$,
\begin{equation} 
\P \left[ \left\| R_S^{\trans} \Pi R_t - \tfrac{1}{w} \tsum_{i=1}^w X_i \right\|_2 > \eta \right] \leq 2l \exp \left(  \tfrac{-3 w \eta^2}{ 6 m_2(X_i) + 4 M \eta } \right) . \label{eqBernstein1}
\end{equation}
We have verified the independence and $\E [ X_i ] = R_S^{\trans} \Pi R_T$ assumptions in the main text. Furthermore, from \eqref{eqMatrixInvariant} and Algorithm \ref{algMatrixEstimator}, $\Pi(S,T) - \hat{\Pi}(S,T) = R_S^{\trans} \Pi R_t - \frac{1}{w} \sum_{i=1}^w X_i$. We may therefore write
\begin{align} 
& \P \left[ \left\| \Pi(S,T) - \hat{\Pi}(S,T) \right\|_2 > \epsilon \max \{ \| \Pi(S,T) \|_2 , 1 \} \right]  \leq 2l \exp \left(  \tfrac{-3 w \left( \epsilon \max \{ \| \Pi(S,T) \|_2 , 1 \} \right)^2 }{ 6 m_2(X_i) + 4 M \epsilon \max \{ \| \Pi(S,T) \|_2 , 1 \}  } \right)  \\
&  \quad \leq 2l \exp \left(  \frac{-3 w \epsilon^2  }{ 6 \frac{m_2(X_i)}{\max \{ \| \Pi(S,T) \|_2 , 1 \}} + 4 M \epsilon  } \right)   \leq 2l \exp \left(  \frac{-3 w \epsilon^2  }{ 6 \frac{m_2(X_i)}{ \| \Pi(S,T) \|_2 } + 4 M \epsilon  } \right) . \label{eqBernstein2}
\end{align}
where we have also used the inequalities $\max \{ \| \Pi(S,T) \|_2 , 1 \} \geq 1$, $\max \{ \| \Pi(S,T) \|_2 , 1 \} \geq \| \Pi(S,T) \|_2$.

Now to prove the theorem, we aim to find $M$ s.t.\ $\| X_i \|_2 \leq M$ a.s. and to compute $m_2(X_i)$ such that \eqref{eqBernstein2} is bounded by $p_{\fail}$, in each of the following cases:
\begin{align} 
& \textrm{(Case 1)} \quad \sigma = \sigma_{\avg}, \quad w \geq \tfrac{ l^2 \sqrt{\srank(\Pi(S,T))} \log ( 2 l / p_{\textrm{fail}} ) r^s_{\max} r^t_{\max} (6+4\epsilon) }{3\epsilon^2} . \label{eqWassumptionCase1} \\
& \textrm{(Case 2)} \quad \sigma = \sigma_{\max}, \quad w \geq \tfrac{ l^{3/2} \| \Sigma \|_{\infty,1} \log ( 2 l / p_{\textrm{fail}} ) r^s_{\max} r^t_{\max} (6+4\epsilon)}{3\epsilon^2} . \label{eqWassumptionCase2}
\end{align}

We begin with Case 1. By Lemma \ref{lemL2bound}, we may take $M = l^{3/2} r^s_{\max} r^t_{\max}$, and by Lemma \ref{lem2ndMoment}, we have $m_2(X_i) \leq l^2 r^s_{\max} r^t_{\max} \| \Pi(S,T) \|_F$. We can then write
\begin{align}
6 \tfrac{m_2(X_i)}{ \| \Pi(S,T) \|_2 } + 4 M \epsilon & \leq l^2 r^s_{\max} r^t_{\max} \left( 6 \tfrac{\| \Pi(S,T) \|_F}{ \| \Pi(S,T) \|_2} +  \tfrac{4}{\sqrt{l}} \epsilon \right) = l^2 r^s_{\max} r^t_{\max} \left( 6 \sqrt{\srank(\Pi(S,T))} +  \tfrac{4}{\sqrt{l}} \epsilon \right) \\
&   \leq l^2 r^s_{\max} r^t_{\max} \sqrt{\srank(\Pi(S,T))} ( 6 + 4 \epsilon ) \leq \tfrac{3 w \epsilon^2}{\log(2l/p_{\fail})} , \label{eqDenBoundCase1_4}
\end{align}
where the equality is definition of $\srank$, the penultimate inequality holds since $l , \srank(\Pi(S,T)) \geq 1$, and the final inequality is by \eqref{eqWassumptionCase1}. Substituting \eqref{eqDenBoundCase1_4} into \eqref{eqBernstein2} establishes the desired result.

For Case 2, we take $M = l^{3/2} \| \Sigma \|_{\infty,1} r^s_{\max} r^t_{\max}$ (Lemma \ref{lemL2bound}), and by Lemma \ref{lem2ndMoment} we have 
\begin{equation}
m_2(X_i) \leq l \| \Sigma \|_{\infty,1} r^s_{\max} r^t_{\max} \max \{ \| \Pi(S,T) \|_{\infty} , \| \Pi(S,T) \|_{1} \} .
\end{equation}
We then obtain
\begin{align}
6 \tfrac{m_2(X_i)}{ \| \Pi(S,T) \|_2 } + 4 M \epsilon & \leq l \| \Sigma \|_{\infty,1} r^s_{\max} r^t_{\max} \left( 6 \tfrac{\max \{ \| \Pi(S,T) \|_{\infty} , \| \Pi(S,T) \|_{1} \}}{ \| \Pi(S,T) \|_2} +  4 \sqrt{l} \epsilon \right) \nonumber \\
&  \leq l^{3/2} \| \Sigma \|_{\infty,1}  r^s_{\max} r^t_{\max} \left( 6 +  4  \epsilon \right)  \leq \tfrac{3 w \epsilon^2}{\log(2l/p_{\fail})} \label{eqDenBoundCase2_2}
\end{align}
where the second inequality is a standard norm equivalence inequality (for $A \in \R^{l \times l}$, $\| A \|_{\infty}, \| A \|_1 \leq \sqrt{l} \| A \|_2$), and the third inequality is by \eqref{eqWassumptionCase2}. Substituting \eqref{eqDenBoundCase2_2} into \eqref{eqBernstein2} completes the proof.

\begin{lemma}  \label{lemL2bound}
If $\sigma = \sigma_{\avg}$, $\| X_i \|_2 \leq l^{3/2} r^s_{\max} r^t_{\max}$ a.s.; if $\sigma = \sigma_{\max}$, $\| X_i \|_2 \leq l^{3/2} \| \Sigma \|_{\infty,1} r^s_{\max} r^t_{\max}$ a.s.
\end{lemma}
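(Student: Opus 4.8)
The plan is to exploit that each $X_i$ produced in Algorithm \ref{algMatrixEstimator} is a rank-one matrix, so that $\|X_i\|_2$ factors as a product of two Euclidean norms, each of which can be controlled separately using the termination guarantees of the forward and backward DP stages. Concretely, write $X_i = R_S^{\trans}\diag(1/\sigma)\,e_{\mu_i}e_{\nu_i}^{\trans}R_T = u_i v_i^{\trans}$, where $u_i := \tfrac{1}{\sigma(\mu_i)}R_S^{\trans}e_{\mu_i}$ is $\tfrac{1}{\sigma(\mu_i)}$ times the transpose of the $\mu_i$-th row of $R_S$, and $v_i := R_T^{\trans}e_{\nu_i}$ is the transpose of the $\nu_i$-th row of $R_T$. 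Since $\|u_i v_i^{\trans}\|_2 = \|u_i\|_2\|v_i\|_2$, it suffices to bound the two factors. Note $\sigma(\mu_i) > 0$ almost surely, since $\mu_i\sim\sigma$ and (as argued preceding Algorithm \ref{algMatrixEstimator}) $\sigma$ is positive on its support.

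For $v_i$: its $j$-th entry is $r^{t_j}(\nu_i)$, and Algorithm \ref{algApproxContMany} terminates, for each $t_j$, with $\|r^{t_j}\|_{\infty}\le r^t_{\max}$ — this holds whether or not \texttt{Merge} was invoked, since \texttt{Merge} changes only the update rule and not the termination test. Hence $\|v_i\|_2 \le \sqrt{l}\,\|v_i\|_{\infty} \le \sqrt{l}\,r^t_{\max}$. For $u_i$: bound $\|u_i\|_2 \le \|u_i\|_1 = \tfrac{1}{\sigma(\mu_i)}\sum_{j=1}^l r^{s_j}(\mu_i)$, and rewrite each $r^{s_j}(\mu_i) = \|r^{s_j}\|_1\,\sigma_{s_j}(\mu_i) \le r^s_{\max}\,\sigma_{s_j}(\mu_i)$, using the definition $\sigma_{s_j} = r^{s_j}/\|r^{s_j}\|_1$ together with the termination guarantee $\|r^{s_j}\|_1 \le r^s_{\max}$ of Algorithm \ref{algApproxPR}. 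This gives $\|u_i\|_2 \le \tfrac{r^s_{\max}}{\sigma(\mu_i)}\sum_{j=1}^l\sigma_{s_j}(\mu_i)$.

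It remains to cancel the $\sigma(\mu_i)$ in the denominator against the sum in the numerator, separately in the two cases. When $\sigma = \sigma_{\avg}$, by definition $\sum_{j=1}^l\sigma_{s_j}(\mu_i) = l\,\sigma_{\avg}(\mu_i) = l\,\sigma(\mu_i)$, so $\|u_i\|_2 \le l\,r^s_{\max}$ and therefore $\|X_i\|_2 \le l^{3/2}r^s_{\max}r^t_{\max}$. When $\sigma = \sigma_{\max}$, $\sum_{j=1}^l\sigma_{s_j}(\mu_i) \le l\max_{s\in S}\sigma_s(\mu_i) = l\,\|\Sigma\|_{\infty,1}\,\sigma_{\max}(\mu_i) = l\,\|\Sigma\|_{\infty,1}\,\sigma(\mu_i)$, so $\|u_i\|_2 \le l\,\|\Sigma\|_{\infty,1}\,r^s_{\max}$ and therefore $\|X_i\|_2 \le l^{3/2}\|\Sigma\|_{\infty,1}r^s_{\max}r^t_{\max}$, as claimed.

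There is essentially no real obstacle in this argument — it is a sequence of elementary norm inequalities. The only points that warrant a moment of care are the observation that $X_i$ is rank one (so the operator norm factors exactly), and the bookkeeping that makes the $1/\sigma(\mu_i)$ factor cancel, which relies on expressing the residual entries $r^{s_j}$ in terms of the sampling distributions $\sigma_{s_j}$ via $\|r^{s_j}\|_1\le r^s_{\max}$; one should also confirm, as noted above, that the $\|r^{t_j}\|_\infty\le r^t_{\max}$ bound is preserved under the \texttt{Merge} acceleration of Section \ref{secManyTarget}.
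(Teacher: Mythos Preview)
Your proposal is correct and follows essentially the same route as the paper: both arguments factor $X_i$ as a rank-one outer product, use $\|X_i\|_2=\|u_i\|_2\|v_i\|_2$, bound $\|v_i\|_2\le\sqrt{l}\,r^t_{\max}$ via the backward-DP termination, and bound $\|u_i\|_2$ via $\|\cdot\|_2\le\|\cdot\|_1$ together with $\|r^{s_j}\|_1\le r^s_{\max}$ and the defining identities for $\sigma_{\avg}$ and $\sigma_{\max}$. Your explicit remark that the \texttt{Merge} update in Algorithm \ref{algApproxContMany} does not alter the termination test $\|r^{t_j}\|_\infty\le r^t_{\max}$ is a nice clarification that the paper leaves implicit.
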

\begin{proof}
Observe $X_i = a_i b_i^{\trans}$, where $a_i,b_i \in \R^l$ with $a_i(j) = r^{s_j}(\mu_i) / \sigma(\mu_i), b_i(j) = r^{t_j}(\nu_i)$. $X_i$ has rank 1, and we may write its singular value decomposition as
\begin{equation}
X_i = \left( \| a_i \|_2 \| b_i \|_2 \right) \left( \tfrac{a_i}{\| a_i \|_2} \right) \left( \tfrac{b_i}{\| b_i \|_2} \right)^{\trans} ,
\end{equation}
so the nonzero singular value of $X_i$ is $\| a_i \|_2 \| b_i \|_2$. Using the well-known fact that a matrix's 2-norm equals its largest singular value, $\| X_i \|_2 = \| a_i \|_2 \| b_i \|_2$, so we seek bounds on $\| a_i \|_2$ and $\| b_i \|_2$. 

First, we assume $\sigma = \sigma_{\avg}$. Then we can write the following:
\begin{equation} \label{eqAsBound1}
\sigma(\mu_i)  = \tfrac{1}{l} \tsum_{s \in S} \tfrac{r^s(\mu_i)}{\|r^s\|_1} \geq \tfrac{1}{l r^s_{\max}} \tsum_{s \in S} r^s(\mu_i)   \geq \tfrac{1}{l r^s_{\max}} \left( \tsum_{s \in S} r^s(\mu_i)^2 \right)^{1/2} = \tfrac{1}{l r^s_{\max}} \| a_i \|_2 \sigma(\mu_i) .
\end{equation}
Here the first equality holds by definition \eqref{eqSigmaAvgSigmaMaxDefn}, the first inequality uses the terminating condition of Algorithm \ref{algApproxPR} ($\| r^s \|_1 \leq r^s_{\max}$), the second inequality is by nonnegativity, and the second equality is by definition of $a_i$. We conclude $\| a_i \|_2 \leq l r^s_{\max}$. To bound $\| b_i \|_2$, we have
\begin{equation} \label{eqAsBoundBi}
\| b_i \|_2 \leq \sqrt{l} \| b_i \|_{\infty} \leq \sqrt{l} r^t_{\max} ,
\end{equation}
where we have used a well-known vector norm inequality and the terminating condition of Algorithm \ref{algApproxCont} ($\| r^t \|_{\infty} \leq r^t_{\max}$). Hence, $\| X_i \|_2 \leq l^{3/2} r^s_{\max} r^t_{\max}$ follows.

Next, we assume $\sigma = \sigma_{\max}$. We have
\begin{align}
\sigma(\mu_i) & = \tfrac{1}{\| \Sigma \|_{\infty,1}} \max_{s \in S} \tfrac{r^s(\mu_i)}{\|r^s\|_1} \geq \tfrac{1}{\| \Sigma \|_{\infty,1} r^s_{\max}} \max_{s \in S} r^s(\mu_i) \geq \tfrac{1}{l \| \Sigma \|_{\infty,1} r^s_{\max}} \tsum_{s \in S} r^s(\mu_i)  = \tfrac{1}{l \| \Sigma \|_{\infty,1} r^s_{\max}} \| a_i \| \sigma(\mu_i) , \label{asBound2}
\end{align}
which is justified similarly to \eqref{eqAsBound1}. Combining with \eqref{eqAsBoundBi} gives $\| X_i \|_2 \leq l^{3/2} \| \Sigma \|_{\infty,1} r^s_{\max} r^t_{\max}$.
\end{proof}

\begin{lemma} \label{lem2ndMoment}
If $\sigma = \sigma_{\avg}$, then $m_2(X_i) \leq l^2 r^s_{\max} r^t_{\max} \| \Pi(S,T) \|_F$; if instead $\sigma = \sigma_{\max}$, then $m_2(X_i) \leq l \| \Sigma \|_{\infty,1} r^s_{\max} r^t_{\max} \max \{ \| \Pi(S,T) \|_{\infty} , \| \Pi(S,T) \|_{1} \}$.
\end{lemma}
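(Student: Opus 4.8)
The plan is to reduce the operator-norm second moment $m_2(X_i)$ to an ordinary scalar expectation, and then bound that expectation by playing the backward invariant \eqref{eqBwInvariant} against the forward invariant \eqref{eqFwInvariant}. Recall from the proof of Lemma~\ref{lemL2bound} that $X_i = a_i b_i^{\trans}$ with $a_i(j) = r^{s_j}(\mu_i)/\sigma(\mu_i)$ and $b_i(j) = r^{t_j}(\nu_i)$, so $X_i$ has rank at most one and $X_i X_i^{\trans} = \|b_i\|_2^2 a_i a_i^{\trans}$, $X_i^{\trans} X_i = \|a_i\|_2^2 b_i b_i^{\trans}$ are both positive semidefinite with operator norm $\|a_i\|_2^2 \|b_i\|_2^2$. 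Since the operator norm is convex, Jensen's inequality gives $m_2(X_i) \le \E[\|a_i\|_2^2 \|b_i\|_2^2]$. (As in the main text, I assume $\sigma(v) > 0$ for all $v$; otherwise all sums below run over $\{v : \sigma(v) > 0\}$, on which $r^s(v) = 0$ for every $s$ when $\sigma \in \{\sigma_{\avg},\sigma_{\max}\}$.)

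First I would condition on $\mu_i$. Since $a_i$ is a function of $\mu_i$ alone, and given $\mu_i = v$ the endpoint $\nu_i$ is distributed as $\pi_v$, I get $\E[\|b_i\|_2^2 \mid \mu_i = v] = \tsum_{j=1}^l \E_{\nu\sim\pi_v}[r^{t_j}(\nu)^2] \le r^t_{\max} \tsum_{j=1}^l \tsum_w \pi_v(w) r^{t_j}(w) \le r^t_{\max}\, \rho(v)$, where $\rho(v) := \tsum_{j=1}^l \pi_v(t_j)$, the first inequality uses the terminating condition $\|r^{t_j}\|_\infty \le r^t_{\max}$, and the second uses \eqref{eqBwInvariant} in the form $\tsum_w \pi_v(w) r^{t_j}(w) = \pi_v(t_j) - p^{t_j}(v) \le \pi_v(t_j)$. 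Using $\|a_i\|_2^2 = \|R_S(v,:)\|_2^2/\sigma(v)^2$ when $\mu_i = v$ and averaging over $\mu_i \sim \sigma$ then yields $m_2(X_i) \le r^t_{\max} \tsum_v \tfrac{\|R_S(v,:)\|_2^2}{\sigma(v)}\, \rho(v)$.

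Next I would bound $\|R_S(v,:)\|_2^2/\sigma(v)$ case by case, using $\|r^s\|_1 \le r^s_{\max}$. For $\sigma = \sigma_{\avg}$, $\sigma(v) \ge \tfrac{1}{l r^s_{\max}} \tsum_{s\in S} r^s(v)$ and $\|R_S(v,:)\|_2^2 = \tsum_{s} r^s(v)^2 \le (\tsum_s r^s(v))^2$, hence $\|R_S(v,:)\|_2^2/\sigma(v) \le l r^s_{\max} \tsum_s r^s(v)$; for $\sigma = \sigma_{\max}$, $\sigma(v) \ge \tfrac{1}{\|\Sigma\|_{\infty,1} r^s_{\max}} \max_s r^s(v)$ and $\|R_S(v,:)\|_2^2 \le (\max_s r^s(v))(\tsum_s r^s(v))$, hence $\|R_S(v,:)\|_2^2/\sigma(v) \le \|\Sigma\|_{\infty,1} r^s_{\max} \tsum_s r^s(v)$. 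In both cases the bound takes the form $m_2(X_i) \le C\, r^s_{\max} r^t_{\max} \tsum_v \big(\tsum_s r^s(v)\big)\rho(v)$ with $C = l$ for $\sigma_{\avg}$ and $C = \|\Sigma\|_{\infty,1}$ for $\sigma_{\max}$. Exchanging the order of summation and applying \eqref{eqFwInvariant} in the form $\tsum_v r^s(v)\pi_v(t_j) = \pi_s(t_j) - p^s(t_j) \le \pi_s(t_j)$ collapses this to $C\, r^s_{\max} r^t_{\max} \tsum_{s\in S}\tsum_{j=1}^l \pi_s(t_j) = C\, r^s_{\max} r^t_{\max} \|\Pi(S,T)\|_{1,1}$.

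To finish, I would invoke the elementary comparisons $\|\Pi(S,T)\|_{1,1} \le l \|\Pi(S,T)\|_F$ (Cauchy--Schwarz over the $l^2$ entries) and $\|\Pi(S,T)\|_{1,1} \le l \max\{\|\Pi(S,T)\|_\infty, \|\Pi(S,T)\|_1\}$ (each row and column sum is at most the maximum row, resp.\ column, sum). Using the first for $\sigma = \sigma_{\avg}$ gives $m_2(X_i) \le l^2 r^s_{\max} r^t_{\max} \|\Pi(S,T)\|_F$, and the second for $\sigma = \sigma_{\max}$ gives $m_2(X_i) \le l \|\Sigma\|_{\infty,1} r^s_{\max} r^t_{\max} \max\{\|\Pi(S,T)\|_\infty, \|\Pi(S,T)\|_1\}$, as claimed. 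I expect the main obstacle to be spotting the intermediate quantity $\tsum_v \big(\tsum_s r^s(v)\big)\rho(v)$ and seeing that it is controlled simultaneously by the backward invariant (through $\rho$) and the forward invariant (through $\tsum_s r^s(v)$); the Jensen reduction and the concluding norm comparisons are routine.
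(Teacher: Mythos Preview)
Your argument is correct. For the $\sigma_{\avg}$ case it matches the paper's proof essentially line for line: both use Jensen to reduce $m_2(X_i)$ to $\E[\|a_i\|_2^2\|b_i\|_2^2]$, then use $\sigma_{\avg}(v) \ge \tfrac{1}{l r^s_{\max}}\sum_s r^s(v)$ together with the invariants to collapse to $l r^s_{\max} r^t_{\max}\|\Pi(S,T)\|_{1,1}$, and finish with $\|\cdot\|_{1,1} \le l\|\cdot\|_F$.

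For the $\sigma_{\max}$ case you take a genuinely different route. The paper does \emph{not} apply Jensen here; instead it computes $\E[X_iX_i^{\trans}]$ explicitly as a symmetric nonnegative matrix and invokes Perron--Frobenius to bound its $2$-norm by its maximum row sum, which after the same invariant bookkeeping produces $\|\E[X_iX_i^{\trans}]\|_2 \le l\|\Sigma\|_{\infty,1} r^s_{\max} r^t_{\max}\|\Pi(S,T)\|_\infty$ directly, and symmetrically $\|\E[X_i^{\trans}X_i]\|_2 \le l\|\Sigma\|_{\infty,1} r^s_{\max} r^t_{\max}\|\Pi(S,T)\|_1$. Your approach instead reuses the Jensen reduction uniformly, lands at $\|\Sigma\|_{\infty,1} r^s_{\max} r^t_{\max}\|\Pi(S,T)\|_{1,1}$, and then applies $\|\Pi(S,T)\|_{1,1} \le l\max\{\|\Pi(S,T)\|_\infty,\|\Pi(S,T)\|_1\}$. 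Both reach the stated bound; the paper's Perron--Frobenius argument yields the slightly finer information that the $\|\cdot\|_\infty$ and $\|\cdot\|_1$ bounds attach separately to $X_iX_i^{\trans}$ and $X_i^{\trans}X_i$, while your version is shorter and handles both choices of $\sigma$ with a single template.
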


\begin{proof}
We first assume $\sigma = \sigma_{\avg}$. Using Jensen's inequality, and since $X_i = a_i b_i^{\trans}$, we have $\| \E [ X_i X_i^{\trans} ] \|_2 \leq \E [ \| X_i X_i^{\trans} \|_2 ] = \E [ \| a_i \|_2^2 \| b_i \|_2^2 ]$; similarly, $\| \E [ X_i X_i^{\trans} ] \|_2 \leq \E [ \| a_i \|_2^2 \| b_i \|_2^2 ]$. Thus, 
\begin{align}
m_2(X_i) & \leq \E [ \| a_i \|_2^2 \| b_i \|_2^2 ]  = \tsum_{u , v \in V} \sigma(u) \pi_u(v) \left( \tfrac{1}{\sigma(u)^2} \tsum_{s \in S} r^s(u)^2 \right) \left( \tsum_{t \in T} r^t(v)^2 \right)  \\
& \leq r^t_{\max} \tsum_{u , v \in V} \tfrac{\pi_u(v)}{\sigma(u)} \left( \tsum_{s \in S} r^s(u) \right)^2 \tsum_{t \in T} r^t(v) \\
& \leq l r^s_{\max} r^t_{\max} \tsum_{s \in S} \tsum_{t \in T} \tsum_{u, v \in V} r^s(u) \pi_u(v) r^t(v) \leq l r^s_{\max} r^t_{\max} \tsum_{s \in S} \sum_{t \in T} \pi_s(t) ,  \label{eqBernMomIneq3}
\end{align}
where the second inequality uses the terminating condition of Algorithm \ref{algApproxCont} ($r^t(v) \leq r^t_{\max}$) and the nonnegativity of $r^s(u)$, the third follows from \eqref{eqAsBound1}, and the fourth uses the invariant \eqref{eqFwBwInvariant}. Finally, letting $vec(\Pi(S,T))$ denote the $l^2$-length vector with entries $\{ \pi_s(t) \}_{s \in S, t \in T}$, we have
\begin{equation*}
\tsum_{s \in S} \tsum_{t \in T} \pi_s(t) = \| vec(\Pi(S,T)) \|_1 \leq  l \| vec(\Pi(S,T)) \|_2 = l \| \Pi(S,T) \|_F ,
\end{equation*}
where the first equality is by nonnegativity, the inequality is a standard norm inequality, and the second inequality is by definition of Frobenius norm. Substituting into \eqref{eqBernMomIneq3} establishes the result.

We next assume $\sigma = \sigma_{\max}$ and bound $\| \E [ X_i X_i^{\trans} ] \|_2$. We observe that by definition,
\begin{align}
& X_i X_i^{\trans} = \tfrac{\left( \sum_{t \in T} r^t(\nu_i)^2 \right)}{\sigma(\mu_i)^2}  \begin{bmatrix} r^{s_1}(\mu_i) & \cdots & r^{s_l}(\mu_i)  \end{bmatrix}^{\trans} \begin{bmatrix} r^{s_1}(\mu_i)&& \cdots & r^{s_l}(\mu_i) \end{bmatrix} \\
& \Rightarrow \E [ X_i X_i^{\trans} ] = \tsum_{u , v \in V} \tfrac{\pi_u(v) }{\sigma(u)} \tsum_{t \in T} r^t(v)^2  \begin{bmatrix} r^{s_1}(u) & \cdots & r^{s_l}(u)  \end{bmatrix}^{\trans} \begin{bmatrix} r^{s_1}(u) & \cdots & r^{s_l}(u)  \end{bmatrix} .
\end{align}
Letting $1_l$ denote the all ones vector of length $l$, we also have 
\begin{equation} \label{eqXXTrowSums}
\E [ X_i X_i^{\trans} ] 1_l = \tsum_{u , v \in V}  \tfrac{\pi_u(v)}{\sigma(u)}  \tsum_{t \in T} r^t(v)^2   \tsum_{s \in S} r^s(u)  \begin{bmatrix} r^{s_1}(u) & \cdots & r^{s_l}(u) \end{bmatrix}^{\trans} .
\end{equation}
Now since $\E [ X_i X_i^{\trans}]$ is symmetric, its 2-norm is its largest eigenvalue; since it is nonnegative, the Perron-Frobenius Theorem states this eigenvalue is bounded by its maximum row sum. Therefore,
\begin{align}
\| \E [ X_i X_i^{\trans} ] \|_2 & \leq  \max_{j \in \{ 1 , 2 , ... , l \} } \tsum_{u , v \in V}  \tfrac{\pi_u(v)}{\sigma(u)}   \tsum_{t \in T} r^t(v)^2 \tsum_{s \in S} r^s(u) r^{s_j}(u) \label{eqXXTbound1}  \\
&  \leq l \| \Sigma \|_{\infty,1} r^s_{\max} r^t_{\max} \max_{j \in \{ 1 , 2 , ... , l \} } \tsum_{t \in T} \tsum_{u , v \in V} r^{s_j}(u) \pi_u(v) r^t(v) \label{eqXXTbound2}  \\
&  \leq l \| \Sigma \|_{\infty,1} r^s_{\max} r^t_{\max} \max_{j \in \{ 1 , 2 , ... , l \} } \tsum_{t \in T} \pi_{s_j} ( t )  = l \| \Sigma \|_{\infty,1} r^s_{\max} r^t_{\max} \| \Pi(S,T) \|_{\infty} , \label{eqXXTbound4}
\end{align}
where \eqref{eqXXTbound1} uses the row sums derived in \eqref{eqXXTrowSums}, \eqref{eqXXTbound2} uses  \eqref{asBound2} from the proof of Lemma \ref{lemL2bound} and the terminating condition of Algorithm \ref{algApproxCont} ($\| r^t \|_{\infty} \leq r^t_{\max}$), and \eqref{eqXXTbound4} uses the invariant \eqref{eqFwBwInvariant}. We can use the same idea to bound $\| \E [ X_i^{\trans} X_i ] \|_2$. The steps to obtain the expression analogous to \eqref{eqXXTbound1} follow the same approach so we omit them. We then have
\begin{align}
\| \E [ X_i^{\trans} X_i ] \|_2 & \leq  \max_{j \in \{ 1 , 2 , ... , l \} } \tsum_{u , v \in V}  \frac{\pi_u(v)}{\sigma(u)}   \tsum_{s \in S} r^s(u)^2 \tsum_{t \in T} r^t(v) r^{t_j}(v) \nonumber \\ 
&  \leq \tsum_{u , v \in V}  \tfrac{\pi_u(v)}{\sigma(u)} \tsum_{s \in S} r^s(u) \max_{s' \in S} r^{s'}(u) \tsum_{t \in T} r^t(v) r^{t_j}(v) \label{eqXTXbound1} \\
&  \leq l \| \Sigma \|_{\infty,1} r^s_{\max} r^t_{\max} \max_{j \in \{ 1 , 2 , ... , l \} } \tsum_{s \in S} \tsum_{u , v \in V} r^s(u) \pi_u(v) r^{t_j}(v) \label{eqXTXbound2} \\
&  \leq l \| \Sigma \|_{\infty,1} r^s_{\max} r^t_{\max} \max_{j \in \{ 1 , 2 , ... , l \} } \tsum_{s \in S} \pi_s ( t_j )  = l \| \Sigma \|_{\infty,1} r^s_{\max} r^t_{\max} \| \Pi(S,T) \|_{1} , \label{eqXTXbound4} 
\end{align}
where \eqref{eqXTXbound1} is immediate, \eqref{eqXTXbound2} uses \eqref{asBound2} from the proof of Lemma \ref{lemL2bound} and the terminating condition of Algorithm \ref{algApproxCont} ($\| r^t \|_{\infty} \leq r^t_{\max}$), and \eqref{eqXTXbound4} uses the invariant \eqref{eqFwBwInvariant}. We conclude from \eqref{eqXXTbound4} and \eqref{eqXTXbound4} that 
\begin{align}
\max \{ \| \E [ X_i^{\trans} X_i ] \|_2 , \| \E [ X_i X_i^{\trans} ] \|_2 \}  \leq l \| \Sigma \|_{\infty,1} r^s_{\max} r^t_{\max} \max \{ \| \Pi(S,T) \|_{\infty} , \| \Pi(S,T) \|_{1} \} . \quad\quad \qedhere
\end{align}
\end{proof}

% Appendices that only fit in arxiv version
% IMPORTANT NOTE: Appendix numbers hard-coded for journal version. If order changed, need to update!
\iftoggle{arxiv}{ 

\section{Proof of Theorem \ref{thmSbmSources}} \label{appProofThmSbmSources}

The theorem relies on two key lemmas. The first of these (Lemma \ref{lemSbmDegConc}) shows that the out-degrees in our stochastic block model concentrate, in the sense that these degrees are all close to $p \sqrt{n}$ with high probability. Lemma \ref{lemSbmDegConc} also bounds the maximal number of outgoing edges pointing to other communities, i.e.\ $\max_{v \in V_n}  d_{\outT}^-(v)$. The proof, deferred to Appendix \ref{appProofLemSbmDegConc}, is a modified version of a similar (standard) result for similar random graph families (such as the Erd\H{o}s-R\'enyi model). 

\begin{lemma} \label{lemSbmDegConc}
Let $\{ G_n = (V_n,E_n) \}_{n \in \N : \sqrt{n} \in \N}$ be the sequence of stochastic block models defined in Section \ref{secManySource}, with $p_n = p$ for some constant $p \in (0,1)$. For $\epsilon, C > 0$, define the following events:
\begin{gather}
\mathcal{E}_{n,\epsilon} = \cap_{v \in V_n} \left\{ d_{\outT}(v) \in \left( (1-\epsilon) p \sqrt{n} , (1+\epsilon) p \sqrt{n} \right) \right\} , \\
\mathcal{F}_{n,C} = \left\{ \max_{v \in V_n}  d_{\outT}^-(v) \leq C q_n n \right\} , \quad \mathcal{G}_{n,C} = \left\{ \max_{v \in V_n}  d_{\outT}^-(v) \leq \frac{C \log n}{\log \log n} \right\} .
\end{gather}
Then the following hold:
\begin{itemize}
\item If $q_n = o ( 1 / \sqrt{n} )$, then for any constant $\epsilon > 0$, $\lim_{n \rightarrow \infty} \P ( \mathcal{E}_{n,\epsilon} ) = 1$.
\item If $q_n = \Omega ( \log n / n )$, then for some constant $C > 0$, $\lim_{n \rightarrow \infty} \P ( \mathcal{F}_{n,C} ) = 1$.
\item If $q_n = \Theta ( 1/n )$, then for some constant $C > 0$, $\lim_{n \rightarrow \infty} \P ( \mathcal{G}_{n,C} ) = 1$.
\end{itemize}
\end{lemma}
\begin{proof}
See Appendix \ref{appProofLemSbmDegConc}.
\end{proof}

The second key lemma (Lemma \ref{lemSigVecBounds}) contains useful bounds regarding the vector $\sigma^s_k = r^s_k / \| r^s_k \|_1$, where $r^s_k$ is the $r^s$ vector in the $k$-th iteration of Algorithm \ref{algApproxPR}. (Here and moving forward, we explicitly denote the current iteration of Algorithm \ref{algApproxPR} via subscripts, as in Algorithm \ref{algApproxPROriginal} from Appendix \ref{PROOF_FWBWMCMC_COMP}). In fact, these bounds hold more generally than will be required for the theorem; namely, we formulate the lemma for any deterministic graph on $n$ nodes for which the out-degree condition $\mathcal{E}_{n,\epsilon}$ holds. The proof is somewhat tedious so is deferred to Appendix \ref{appProofLemSigVecBounds}.
\begin{lemma} \label{lemSigVecBounds}
Let $G_n = (V_n = \{1,\ldots,n\} ,E_n)$ be a deterministic graph satisfying
\begin{equation}  \label{eqKeyAssumption2}
d_{\outT}(v) \in \left( (1-\epsilon) p \sqrt{n} , (1+\epsilon) p \sqrt{n} \right)\ \forall\ v \in V_n 
\end{equation}
for some $p, \epsilon \in (0,1)$, and let $k \in \{1,\ldots,\lceil (1-\epsilon)^2 p \sqrt{n}  (1-\alpha)/(2e) \rceil \}$. Then for any $s \in V_n$,
\begin{equation} \label{eqSigVecBoundsInComm}
\sigma_k^s(v) < \frac{1}{\sqrt{n}}  \frac{ e }{  (1-\alpha)(1-\epsilon)^2 p  - 2 e k / \sqrt{n}   }\ \forall\ v \in V_n ,
\end{equation}
and for any $S_n \subset V_n$ s.t.\ $s \in S_n$,
\begin{equation} \label{eqSigVecBoundsOutComm}
\sum_{v \in V_n \setminus S_n} \sigma_k^s ( v ) < \frac{ \max_{s' \in S_n} | N_{\outT}(s') \setminus S_n | }{ \sqrt{n} }  \frac{  1+ 2 e k / \sqrt{n}  }{ (1-\epsilon)p \left(  (1-\alpha) (1-\epsilon)^2 p  - 2 e k / \sqrt{n} \right)  } .
\end{equation}
\end{lemma}
\begin{proof}
See Appendix \ref{appProofLemSigVecBounds}.
\end{proof}

We now turn to the proof of the theorem. First, suppose all sources belong to the same community, and consider the sub-case $q_n = o ( 1 / \sqrt{n} ), q_n = \Omega ( \log n / n )$. Then for any $\epsilon \in (0,1)$, Lemma \ref{lemSigVecBounds} implies that any realization of $G_n$ satisfying $\mathcal{E}_{n,\epsilon}$ also satisfies
\begin{align}
\sum_{v \in V_n} \max_{s \in S_n} \sigma_k^s(v) & \leq \sum_{v \in S_n } \max_{s \in S_n} \sigma_k^s(v) + \sum_{s \in S_n} \sum_{v \in V_n \setminus S_n }  \sigma_k^s(v) \\
& \leq |S_n| \times \frac{1}{\sqrt{n}} \frac{ e }{  (1-\alpha)(1-\epsilon)^2 p  - 2 e k / \sqrt{n} } \\
& \quad\quad +  |S_n| \times  \frac{ \max_{s \in S_n} d_{\outT}^-(s)}{ \sqrt{n} } \frac{  1+ 2 e k / \sqrt{n}  }{ (1-\epsilon)p \left(  (1-\alpha) (1-\epsilon)^2 p  - 2 e k / \sqrt{n} \right)  } .
\end{align}
Recall $\alpha, \epsilon, p$ are constants and $|S_n| = \sqrt{n} , k = o ( \sqrt{n} )$ in the statement of the theorem. Hence, for some $C'' > 0$ and all $n$ large, 
any realization  of $G_n$ satisfying $\mathcal{E}_{n,\epsilon}$ also satisfies
\begin{equation}
\sum_{v \in V_n} \max_{s \in S_n} \sigma_k^s(v) \leq C'' \max_{s \in S_n} d_{\outT}^-(s) .
\end{equation}
Now let $C' > 0, C = C' C''$. Then for $n$ large, any realization satisfying $\mathcal{E}_{n,\epsilon}$ and $\mathcal{F}_{n,C'}$ also satisfies
\begin{equation}
\sum_{v \in V_n} \max_{s \in S_n} \sigma_k^s(v) \leq C q_n n .
\end{equation}
In other words, we have shown that for some $C > 0$ and any $C' > 0$, 
\begin{equation}
\lim_{n \rightarrow \infty} \P \left( \sum_{v \in V_n} \max_{s \in S_n} \sigma_k^s(v) \leq C q_n n \middle| \mathcal{E}_{n,\epsilon} , \mathcal{F}_{n,C'} \right) = 1 . 
\end{equation}
Finally, for $C'$ satisfying the second statement of Lemma \ref{lemSbmDegConc}, we obtain
\begin{align}
\P \left( \sum_{v \in V_n} \max_{s \in S_n} \sigma_k^s(v) \leq C q_n n \right)  \geq  \P \left( \sum_{v \in V_n} \max_{s \in S_n} \sigma_k^s(v) \leq C q_n n \middle| \mathcal{E}_{n,\epsilon} , \mathcal{F}_{n,C'} \right) \P \left( \mathcal{E}_{n,\epsilon} , \mathcal{F}_{n,C'} \right) \xrightarrow[n \rightarrow \infty]{} 1 .
\end{align}
In the sub-case $q_n = \Theta ( 1 / n )$, a similar argument implies that for some $C, C' > 0$,
\begin{align}
& \P \left( \sum_{v \in V_n} \max_{s \in S_n} \sigma_k^s(v) \leq \frac{C \log n}{ \log \log n } \right) \\
& \quad\quad \geq \P \left( \sum_{v \in V_n} \max_{s \in S_n} \sigma_k^s(v) \leq \frac{C \log n}{ \log \log n } \middle| \mathcal{E}_{n,\epsilon} , \mathcal{G}_{n,C'} \right) \P \left( \mathcal{E}_{n,\epsilon} , \mathcal{G}_{n,C'} \right) \xrightarrow[n \rightarrow \infty]{} 1 .
\end{align}

We next consider the case for which all sources belong to different communities, i.e.\ $S_n = \{ \sqrt{n} , 2 \sqrt{n} , \ldots , n \}$ (which is without loss of generality by symmetry). Then clearly
\begin{equation} \label{eqDistinctCommInitialBound}
\sum_{v \in V_n} \max_{s \in S_n} \sigma_k^s(v) \geq \sum_{i=1}^{\sqrt{n}} \sum_{v=1+(i-1)\sqrt{n}}^{i\sqrt{n}} \sigma_k^{i \sqrt{n}}(v) .
\end{equation}
Furthermore, for any $\epsilon \in (0,1)$, Lemma \ref{lemSigVecBounds} implies that any realization satisfying $\mathcal{E}_{n,\epsilon}$ satisfies
\begin{equation}
\sum_{v=1}^{\sqrt{n}} \sigma_k^{\sqrt{n}}(v) \geq 1 - \frac{ \max_{v \in V_n } d_{\outT}^-(v)}{ \sqrt{n} } \frac{  1+ 2 e k / \sqrt{n}  }{ (1-\epsilon)p \left(  (1-\alpha) (1-\epsilon)^2 p  - 2 e k / \sqrt{n} \right)  } .
\end{equation}
Now suppose $q_n = o ( 1 / \sqrt{n} ), q_n = \Omega ( \log n / n )$, and let $\delta \in (0,1)$ be a constant. Then for $C> 0$ and $n$ sufficiently large, any realization satisfying $\mathcal{E}_{n,\epsilon}$ and $\mathcal{F}_{n,C}$ will also satisfy
\begin{equation}
\sum_{v=1}^{\sqrt{n}} \sigma_k^{\sqrt{n}}(v) \geq 1 - \frac{C q_n n}{\sqrt{n}} = 1 - C q_n \sqrt{n} \geq 1 - \delta .
\end{equation}
where we again used the fact that $\alpha,\epsilon,p$ are constant and $k = o(\sqrt{n})$. The same argument holds for all summands in the summation over $i$ in \eqref{eqDistinctCommInitialBound}. It follows that, for appropriate choice of $C > 0$,
\begin{equation}
\lim_{n \rightarrow \infty} \P \left( \sum_{v \in V_n} \max_{s \in S_n} \sigma_k^s(v) \geq (1-\delta) \sqrt{n} \middle| \mathcal{E}_{n,\epsilon} , \mathcal{F}_{n,C} \right) \P \left( \mathcal{E}_{n,\epsilon} , \mathcal{F}_{n,C} \right) = 1 .
\end{equation}
A similar approach establishes the desired result in the case $q_n = \Theta ( 1 / n )$.

Note that (perhaps surprisingly) the only feature of the stochastic block model used above was the degree concentration of Lemma \ref{lemSbmDegConc}. In other words, we considered the number of edges for each node, while ignoring how exactly these edges were connected. Consequently, the same analysis can be used to obtain results for sequences of \textit{deterministic} graphs $\{ G_n = (V_n,E_n) \}_{n \in \N : \sqrt{n} \in \N}$. For example, if such a sequence satisfies $\mathcal{E}_{n,\epsilon}, \mathcal{G}_{n,C}$ for some constants $\epsilon, C$ and for all $n$ large, the analysis above implies $\| \Sigma_{S_n} \|_{\infty,1} = O ( \log n / \log \log n )$ when $\sqrt{n}$ sources belong to the same community, whereas $\| \Sigma_{S_n} \|_{\infty,1} = \Omega(\sqrt{n})$ when $\sqrt{n}$ sources belong to different communities.

\subsection{Proof of Lemma \ref{lemSbmDegConc}} \label{appProofLemSbmDegConc}

For the first statement, we begin by showing $d_{\outT}(1)$ concentrates around $p \sqrt{n}$; we will then use the union bound to establish the lemma. Towards this end, first note that since edges from node $1$ to each $v \in \{2,\ldots,\sqrt{n}\}$ are present with probability $p$, and since edges from node $1$ to each $v \in \{\sqrt{n}+1,\ldots,n\}$ are present with probability $q_n$, we have
\begin{equation} \label{eqAsympRegExp}
\E \left[ d_{\outT}(1) \right] = p \left( \sqrt{n} - 1 \right) + q_n \left( n - \sqrt{n} \right) = p \sqrt{n} + \left( q_n \left( n - \sqrt{n} \right) - p \right) .
\end{equation}
Next, since $q_n = o ( 1/ \sqrt{n}  )$ and $p$ is constant by assumption, we have for $n$ sufficiently large,
\begin{equation}
\frac{q_n (n - \sqrt{n} ) - p }{ p \sqrt{n} } \leq \frac{ \epsilon/2 }{ 1 + \epsilon/2 } \quad \Rightarrow \quad \left( 1 + \frac{\epsilon}{2} \right) \left( q_n (n - \sqrt{n} ) - p \right)  \leq \frac{\epsilon  }{2}  p \sqrt{n} .
\end{equation}
Thus, combining the previous two lines, we obtain (for such $n$),
\begin{equation}
\left( 1 + \frac{\epsilon}{2} \right) \E \left[ d_{\outT}(1) \right] = \left( 1 + \frac{\epsilon}{2} \right) p \sqrt{n} + \left( 1 + \frac{\epsilon}{2} \right) \left( q_n \left( n - \sqrt{n} \right) - p \right) \leq (1+\epsilon) p \sqrt{n} .
\end{equation}
We can then use monotonicity and \eqref{eqChernoff1} from Appendix \ref{PROOF_FWBWMCMC_ACC} to obtain
\begin{equation}
\P \left( d_{\outT}(1) > (1+\epsilon) p \sqrt{n} \right) \leq \P \left( d_{\outT}(1) > \left( 1 + \frac{\epsilon}{2} \right) \E \left[ d_{\outT}(1) \right] \right)  \leq \exp \left( - \frac{ \epsilon^2 p }{12} \sqrt{n} \right) ,
\end{equation}
where we also used $\E [ d_{\outT}(1) ] \geq p \sqrt{n}$ by \eqref{eqAsympRegExp}. Using the same argument for the lower tail, and then using the union bound, we thus obtain
\begin{equation}
\P \left(  d_{\outT}(1) \notin \left[ (1-\epsilon) p \sqrt{n} , (1+\epsilon) p \sqrt{n} \right] \right) \leq 2 \exp \left( - \frac{ \epsilon^2 p }{12} \sqrt{n} \right) .
\end{equation}
Finally, by this bound, the fact that $\{ d_{\outT}(v) \}_{v \in V}$ are identically-distributed, and the union bound,
\begin{equation}
\P \left( \cup_{v \in V} \left\{ d_{\outT}(v) \notin \left[ (1-\epsilon) p \sqrt{n} , (1+\epsilon) p \sqrt{n} \right] \right\} \right) \leq 2 n \exp \left( - \frac{ \epsilon^2 p }{12} \sqrt{n} \right) \xrightarrow[n \rightarrow \infty]{} 0 ,
\end{equation}
which, by the law of complements, completes the proof of the first statement.

For the second statement, we similarly begin with a tail bound for $d_{\outT}^-(1)$. First note that, since $q_n = \Omega ( \log n / n )$, we can find $C' > 0$ such that for all $n$ sufficiently large,
\begin{equation}
\frac{ q_n }{ \log n / n } > C' \Rightarrow q_n n > C' \log n .
\end{equation}
Now let $C > \max \{ 2 e , 2 / ( C'  \log 2 ) \}$. Then clearly
\begin{equation}
C q_n n > 2 e q_n \left( n - \sqrt{n} \right) = 2 e \E \left[ d_{\outT}^-(1) \right] .
\end{equation}
Hence, we can use \eqref{eqChernoff2} from Appendix \ref{PROOF_FWBWMCMC_ACC} to obtain
\begin{equation}
\P \left( d_{\outT}^-(1) > C q_n n \right) \leq 2^{ - C q_n n } .
\end{equation}
By the union bound argument used above, we then have
\begin{equation}
\P \left( \max_{v \in V_n} d_{\outT}^-(v) > C q_n n \right) \leq n \P \left( d_{\outT}^-(1) > C q_n n \right) = 2^{ - ( C q_n n - \log_2 n ) } .
\end{equation}
Also, by our choice of $C$ and for $n$ sufficiently large (so that $q_n n > C' \log n$),
\begin{equation}
C q_n n - \log_2 n  > \frac{2}{C' \log 2} C' \log n - \log_2 n = \log_2 n .
\end{equation}
Combining the previous two inequalities then yields, for $n$ sufficiently large,
\begin{equation}
\P \left( \max_{v \in V_n} d_{\outT}^-(v) > C q_n n \right) \leq 1 / n ,
\end{equation}
from which the second statement clearly follows.

For the third statement, we again derive a tail bound for $d_{\outT}^-(1)$ and invoke the union bound, but the tail bound requires a slightly different approach. First, for any $M \in \{1,\ldots, \lfloor n - \sqrt{n} \rfloor \}$, the event $\{ d_{\outT}^-(1) \geq M \}$ means that node $1$ has outgoing edges to $M$ nodes in other communities, so
\begin{align}
\P \left( d_{\outT}^-(1) \geq M \right) & = \P \left( \cup_{U_n \subset \{ 1 + \sqrt{n} , \ldots , n \} : |U_n| = M} \{ 1 \rightarrow u \in E_n\ \forall\ u \in U_n \} \right) \\
& \leq \sum_{ U_n \subset \{ 1 + \sqrt{n} , \ldots , n \} : |U_n| = M } \P \left( 1 \rightarrow u \in E_n\ \forall\ u \in U_n \right) = { n - \sqrt{n} \choose M } q_n^M \leq { n \choose M } q_n^M  ,
\end{align}
where the first inequality is the union bound, the second equality holds by definition of our stochastic block model, and the second inequality is immediate. Now by assumption $q_n = \Theta ( 1 / n )$, we can find $C_1$ such that $q_n n \leq C_1$ for $n$ sufficiently large; combined with the standard binomial coefficient approximation ${ n \choose M } \leq ( \frac{n e}{M} )^M$, we can further bound the above as
\begin{equation}
\P \left( d_{\outT}^-(1) \geq M \right) \leq \left( \frac{n q_n e}{M} \right)^M \leq \left( \frac{C_2}{M} \right)^M 
\end{equation}
for all $n$ large (we also defined $C_2 = C_1 e$). Thus, by the union bound and the fact that $\{ d_{\outT}^-(v) \}_{v \in V}$ are identically-distributed, we obtain for all $n$ large and any constant $C > 0$,
\begin{equation}
\P \left( \max_{v \in V_n} d_{\outT}^-(v) \geq \frac{C \log n}{ \log \log n } \right) \leq n \left( \frac{C_2 \log \log n}{ C \log n } \right)^{ C \log n / \log \log n }  .
\end{equation}
Next, we note
\begin{align}
& \log \left( n \left( \frac{C_2 \log \log n}{ C \log n } \right)^{ C \log n / \log \log n } \right) = \log n + \frac{C \log n}{\log \log n} \left( \log ( C_2 \log \log n ) - \log ( C \log n ) \right) \\
& \quad = \log n \left( 1 +  \frac{ C \log \log ( \log n )^{C_2} }{ \log \log n } - \frac{ C \log \log n^C }{ \log \log n } \right) .
\end{align}
Choosing any $C \geq 1$ clearly implies
\begin{equation}
\frac{ \log \log n^C }{ \log \log n } \geq 1 .
\end{equation}
Also, since $C_2 > 0$ is a constant, we have for all $n$ large (for example)
\begin{equation}
\frac{ \log \log ( \log n )^{C_2} }{ \log \log n } < \frac{1}{2} .
\end{equation}
Combining the previous four lines, we then obtain, for all $n$ large,
\begin{equation}
\log \P \left( \max_{v \in V_n} d_{\outT}^-(v) > \frac{C \log n}{ \log \log n } \right) \leq  \left( 1 - C / 2 \right) \log n  ,
\end{equation}
so that choosing any $C > 2$ establishes the third statement.

\subsection{Proof of Lemma \ref{lemSigVecBounds}} \label{appProofLemSigVecBounds}

We begin with another lemma, which in fact holds for any underlying graph $G$.
\begin{lemma} \label{lemMaxRvLowerUpper}
For any graph $G = (V,E)$, any source node $s \in V$, and any iteration $k \in \{1,\ldots, d_{\outT}(s) \}$ of Algorithm \ref{algApproxPR},
\begin{equation}
\frac{1-\alpha}{ \max_{v \in V} d_{\outT}(v) } \leq \max_{v \in V} r^s_k(v) \leq \frac{1-\alpha}{ \min_{v \in V} d_{\outT}(v) } \exp \left( \frac{(1-\alpha) (k-1) }{ \min_{v \in V} d_{\outT}(v)} \right) .
\end{equation}
\end{lemma}
\begin{proof}
For the lower bound, first note $r_1^s(v) = (1-\alpha) / d_{\outT}(s)\ \forall\ v \in N_{\outT}(s)$. Furthermore, for each such $v$, $r_k^s(v)$ is non-decreasing in $k$ for $k < k_v$, where $k_v$ is the first iteration $k$ for which $v^*_k = v$. Also, since $v_1^* = s$, we must have $k_v \geq d_{\outT}(s)+1$ for some $v \in N_{\outT}(s)$. Hence, for any $k \in \{1,\ldots, d_{\outT}(s) \}$, we can find some $v \in N_{\outT}(s)$ for which $k_v > k$, which implies $r_k^s(v) \geq r_1^s(v) = (1-\alpha) / d_{\outT}(s)$. Since also $d_{\outT}(s) \leq \max_{v \in V} d_{\outT}(v)$, the lower bound follows.

For the upper bound, we use induction. For the base of induction, simply note
\begin{equation}
r_1^s(v) = \frac{1-\alpha}{d_{\outT}(s)} \leq \frac{1-\alpha}{ \min_{v \in V} d_{\outT}(v) }  \ \forall\ v \in N_{\outT}(s) .
\end{equation}
Now assuming the upper bound holds for $k-1$, we have for any $v \in V$,
\begin{align}
r^s_k(v) & \leq r^s_{k-1}(v) + \frac{1-\alpha}{d_{\outT}(v^*_k)} r^s_{k-1}(v_k^*) \leq \left( 1 +  \frac{1-\alpha}{ \min_{v \in V} d_{\outT}(v) } \right) \max_{v' \in V} r^s_{k-1}(v') \\
& \leq \left( 1 +  \frac{1-\alpha}{ \min_{v \in V} d_{\outT}(v) } \right) \frac{1-\alpha}{ \min_{v \in V} d_{\outT}(v) } \exp \left( \frac{(1-\alpha) (k-2) }{ \min_{v \in V} d_{\outT}}(v) \right) \\
& \leq \frac{1-\alpha}{ \min_{v \in V} d_{\outT}(v) }  \exp \left( \frac{(1-\alpha) (k-1) }{ \min_{v \in V} d_{\outT}(v)} \right) ,
\end{align}
where the first inequality uses the iterative update rule in Algorithm \ref{algApproxPR}, the second is immediate, the third uses the inductive hypothesis, and the fourth uses the standard inequality $1+x \leq e^x$. 
\end{proof}

We next state and prove a corollary of Lemma \ref{lemMaxRvLowerUpper}, which translates the $r^s_k$ bounds from Lemma \ref{lemMaxRvLowerUpper} to bounds regarding $\sigma^s_k$ (the actual vector of interest in the theorem).
\begin{corollary}  \label{corRvecBounds}
Let $G_n = (V_n = \{1,\ldots,n\} ,E_n)$ be a graph satisfying
\begin{equation}  \label{eqKeyAssumption}
d_{\outT}(v) \in \left( (1-\epsilon) p \sqrt{n} , (1+\epsilon) p \sqrt{n} \right)\ \forall\ v \in V_n 
\end{equation}
for some $p, \epsilon \in (0,1)$. Then for any $k \in \{ 1, \ldots ,  \lfloor (1-\epsilon) p \sqrt{n} \rfloor \}$ and any $s \in V_n$, 
\begin{gather} 
\frac{1-\alpha}{ 2 \sqrt{n} } < \max_{v \in V_n} r^s_k(v) < \frac{ e }{ (1-\epsilon) p \sqrt{n} } , \\
\frac{(1-\epsilon)(1-\alpha)}{ 4 \sqrt{n} } < r^s_k(v^*_{k+1}) < \frac{ 2 e }{ (1-\epsilon)^2 p \sqrt{n} } , \\
(1 - \alpha)- \frac{ 2 e (k-1)  }{ (1-\epsilon)^2 p \sqrt{n} } \leq \| r^s_k \|_1 \leq (1 - \alpha) - \frac{ (k-1) (1-\epsilon)(1-\alpha)}{ 4 \sqrt{n} } .
\end{gather}
\end{corollary}
\begin{proof}
Fix $k \in \{ 1, \ldots ,  \lfloor (1-\epsilon) p \sqrt{n} \rfloor \}$ and $s \in V_n$. Then $k < d_{\outT}(s)$ by \eqref{eqKeyAssumption} and the choice of $k$. We can then use the assumption \eqref{eqKeyAssumption}, Lemma \ref{lemMaxRvLowerUpper}, and the choice of $k$ to obtain
\begin{align} \label{eqFirstPairProof}
\frac{1-\alpha}{ (1+\epsilon) p \sqrt{n} } < \frac{1-\alpha}{ \max_{v \in V} d_{\outT}(v) } & \leq \max_{v \in V} r^s_k(v) \\
& \leq \frac{1-\alpha}{ \min_{v \in V} d_{\outT}(v) } \exp \left( \frac{(1-\alpha) (k-1) }{ \min_{v \in V} d_{\outT}(v)} \right) < \frac{ (1-\alpha) e^{1-\alpha} }{ (1-\epsilon) p \sqrt{n} } .
\end{align}
Finally, $\epsilon, p, \alpha \in (0,1)$ yields the first pair of inequalities. Next, by definition of $v_{k+1}^*$ in Algorithm \ref{algApproxPR},
\begin{equation} \label{eqSecondPairVstarDefn}
r^s_k(v^*_{k+1}) = d_{\outT}(v^*_{k+1}) \frac{ r^s_k(v^*_{k+1})  }{ d_{\outT}(v^*_{k+1}) } = d_{\outT}(v^*_{k+1}) \max_{v \in V_n} \frac{ r^s_k(v)  }{ d_{\outT}(v) } = \max_{v \in V_n} \frac{  d_{\outT}(v^*_{k+1}) }{  d_{\outT}(v) } r^s_k(v) .
\end{equation}
On the other hand, by the assumption \eqref{eqKeyAssumption}, and since $\epsilon \in (0,1)$,
\begin{equation} \label{eqSecondPairDegRatio}
\frac{1-\epsilon}{2} < \frac{1-\epsilon}{1+\epsilon} < \frac{  d_{\outT}(v^*_{k+1}) }{  d_{\outT}(v) } < \frac{1+\epsilon}{1-\epsilon} < \frac{2}{1-\epsilon}
\end{equation}
Combining \eqref{eqSecondPairVstarDefn} and \eqref{eqSecondPairDegRatio}, and using the first pair of inequalities, yields the second pair of inequalities. For the third pair of inequalities, we first assume $k > 1$ and use \eqref{eqFDPl1Decrease} from Appendix \ref{PROOF_FWBWMCMC_COMP} to obtain
\begin{equation}
\| r^s_k \|_1 = \| r^s_{k-1} \|_1 - \alpha r^s_{k-1}(v_k^*)  = \cdots = \| r^s_0 \|_1 - \alpha r^s_0(v_1^*) - \alpha \sum_{j=1}^{k-1} r^s_j(v^*_{j+1}) = 1 - \alpha - \alpha \sum_{j=1}^{k-1} r^s_j(v^*_{j+1}) , 
\end{equation}
where we also used $r^s_0 = e_s, v^*_1 = s$ by Algorithm \ref{algApproxPR}. We can then use the second pair of inequalities to obtain the third pair of inequalities. If instead $k=1$, we immediately have $\| r^s_k \|_1 = 1-\alpha$, which is precisely the third pair of inequalities in the case $k=1$.
\end{proof}

We can now prove Lemma \ref{lemSigVecBounds}. For the first bound, note the assumptions of Lemma \ref{lemSigVecBounds} are stronger than those of Corollary \ref{corRvecBounds}, so we can use Corollary \ref{corRvecBounds} to obtain
\begin{equation}
\sigma_k^s(v) = \frac{ r_k^s(v) }{ \| r_k^s \|_1 } < \frac{ \frac{ e }{ (1-\epsilon) p \sqrt{n} } }{ (1 - \alpha)- \frac{ 2 e (k-1)  }{ (1-\epsilon)^2 p \sqrt{n} } }  = \frac{1}{\sqrt{n}} \frac{ (1-\epsilon)e }{ (1-\alpha)(1-\epsilon)^2 p - 2 e (k-1) / \sqrt{n} } .
\end{equation}
Using the trivial inequalities $1-\epsilon < 1, k-1<k$ in the final expression then yields the first upper bound. (Note the assumed upper bound on $k$ ensures the denominator is non-negative.)

For the second bound, let $S_n \subset V_n$ be a set containing $s$. We begin by showing
\begin{equation} \label{eqRcecBoundsOutComm}
\sum_{v \in V_n \setminus S_n} r_k^s ( v ) < \frac{ \sqrt{n} + 2 e (k-1) }{ (1-\epsilon)^3 p^2 n } \max_{s' \in S_n} \left| N_{\outT}(s') \setminus S_n \right| .
\end{equation}
To prove \eqref{eqRcecBoundsOutComm}, we use induction. For $k=1$, the $r^s$ update in Algorithm \ref{algApproxPR} implies
\begin{equation}
r^s_1(v) = \frac{1-\alpha}{d_{\outT}(s)} \mathbbm{1}_{ \{ v \in N_{\outT}(s) \} } \Rightarrow \sum_{v \in V_n \setminus S_n} r^s_k(v) = \frac{1-\alpha}{d_{\outT}(s)} \left| N_{\outT}(s) \setminus S_n \right| ,
\end{equation}
which, using the assumption \eqref{eqKeyAssumption2} and $\alpha, \epsilon, p \in (0,1)$, can clearly be bounded as
\begin{equation}
\sum_{v \in V_n \setminus S_n} r^s_k(v) < \frac{1}{ (1-\epsilon)^3 p^2 \sqrt{n} } \max_{s' \in S_n} \left| N_{\outT}(s') \setminus S_n \right| ,
\end{equation}
which establishes \eqref{eqRcecBoundsOutComm} when $k=1$. Now assume \eqref{eqRcecBoundsOutComm} holds for $k-1$ and consider two cases:
\begin{enumerate}
\item $v_k^* \notin S_n$: We can write the $r^s$ update in Algorithm \ref{algApproxPR} as
\begin{equation} \label{eqPushUpdateForKeyCor}
r_k^s(v) = r_{k-1}^s(v) + \frac{1-\alpha}{d_{\outT}(v_k^*)} r_{k-1}(v_k^*) \mathbbm{1}_{ \{ v \in N_{\outT}(v_k^*) \} } - r_{k-1}(v_k^*) \mathbbm{1}_{ \{ v = v_k^* \} }\ \forall\ v \in V_n ,
\end{equation}
where $\mathbbm{1}_A$ is the indicator function of the event $A$. This clearly implies
\begin{equation}
\sum_{v \in V_n \setminus S_n} r_k^s ( v )  = \sum_{v \in V_n \setminus S_n} r_{k-1}^s ( v ) + r_{k-1}(v_k^*) \left( \frac{1-\alpha}{ d_{\outT}(v_k^*)} \left| N_{\outT}(v_k^*) \setminus S_n \right| - 1 \right)  < \sum_{v \in V_n \setminus S_n} r_{k-1}^s ( v ) ,
\end{equation}
from which the inductive hypothesis completes the proof, since the upper bound in \eqref{eqRcecBoundsOutComm} increases with $k$.
\item $v_k^* \in S_n$: Again using \eqref{eqPushUpdateForKeyCor}, we observe
\begin{equation} \label{eqPushUpdateForKeyCorCase2}
\sum_{v \in V_n \setminus S_n} r_k^s ( v ) = \sum_{v \in V_n \setminus S_n} r_{k-1}^s ( v ) + \frac{1-\alpha}{d_{\outT}(v_k^*)} r_{k-1}(v_k^*) \left| N_{\outT}(v_k^*) \setminus S_n \right| .
\end{equation}
(Note the final term in \eqref{eqPushUpdateForKeyCor} does not appear in \eqref{eqPushUpdateForKeyCorCase2}, since $\sum_{v \in V_n \setminus S_n} \mathbbm{1}_{ \{ v = v_k^* \} } = 0$ when $v_k^* \in S_n$.) For the second summand in \eqref{eqPushUpdateForKeyCorCase2}, we use the second upper bound from Corollary \ref{corRvecBounds}, the assumption \eqref{eqKeyAssumption2}, $\alpha \in (0,1)$, and the assumed case $v_k^* \in S_n$ to obtain
\begin{equation}
\frac{1-\alpha}{d_{\outT}(v_k^*)} r_{k-1}(v_k^*) \left| N_{\outT}(v_k^*) \setminus S_n \right| < \frac{2e}{ (1-\epsilon)^3 p^2 n } \max_{s' \in S_n} \left| N_{\outT}(s') \setminus S_n \right| ,
\end{equation}
Substituting into \eqref{eqPushUpdateForKeyCorCase2} and using the inductive hypothesis yields
\begin{equation}
\sum_{v \in V_n \setminus S_n} r_k^s ( v )  < \left( \frac{ \sqrt{n} + 2 e (k-2) }{ (1-\epsilon)^3 p^2 n } + \frac{2e}{ (1-\epsilon)^3 p^2 n } \right) \max_{s' \in S} \left| N_{\outT}(s') \setminus S_n \right| ,
\end{equation}
which completes the proof.
\end{enumerate} 
Combining \eqref{eqRcecBoundsOutComm} with the lower bound for $\| r^s_k \|_1$ from Corollary \ref{corRvecBounds} gives
\begin{align}
\sum_{v \in V_n \setminus S_n} \sigma_k^s ( v ) & = \frac{\sum_{v \in V_n \setminus S_n} r_k^s ( v )}{ \| r^s_k \|_1 } < \frac{\frac{ \sqrt{n} + 2 e (k-1) }{ (1-\epsilon)^3 p^2 n } \max_{s' \in S} | N_{\outT}(s') \setminus S_n |}{ (1 - \alpha)- \frac{ 2 e (k-1)  }{ (1-\epsilon)^2 p \sqrt{n} }  } \\
& = \frac{ \max_{s' \in S_n} | N_{\outT}(s') \setminus S_n | }{ \sqrt{n} }  \frac{  1+ 2 e (k-1) / \sqrt{n}  }{ (1-\epsilon)p \left(  (1-\alpha) (1-\epsilon)^2 p  - 2 e (k-1) / \sqrt{n} \right)  } ,
\end{align}
from which the trivial bound $k-1<k$ completes the proof.

\section{Choosing order of targets in Algorithm \ref{algApproxContMany}} \label{secTargetOrder}

As mentioned at the end of Section \ref{secManyTarget}, the performance of Algorithm \ref{algApproxContMany} can significantly depend on the order in which the targets $t_1, t_2, \ldots, t_{|T|}$ are chosen. For instance, suppose there exists $t^* \in T$ such that $\pi_{t^*}(t') > r^t_{\max}\ \forall\ t' \in T$, but $\pi_{t}(t') \leq r^t_{\max}\ \forall\ t \in T \setminus \{ t^* \}, t' \in T$. Then choosing $t_1 = t^*$ implies $c_T = |T|-1$, while choosing $t_{|T|} = t^*$ implies $c_T = 0$. More generally, the algorithm is most efficient when any $t$ satisfying $\pi_t(t') > r^t_{\max}$ for many $t' \in T$ is chosen ``early'' in the algorithm, i.e.\ $t_i = t$ for small $i$. However, because $\pi_t(t')$ is unknown, optimizing the order $t_1, t_2, \ldots, t_{|T|}$ at runtime is difficult. A possible workaround is to use $p^{t'}(t)$ as a proxy for $\pi_t(t')$, since $p^{t'}(t) \in [ \pi_{t}(t') - r^t_{\max}, \pi_{t}(t')]$ by the invariant \eqref{eqBwInvariant}. Unfortunately, even this proxy is difficult to utilize at runtime. This is because we would like to choose $t_i$ such that $\pi_{t_j}(t_i)$ is large for many $j < i$, but the proxy $p^{t_i}(t_j)$ of $\pi_{t_j}(t_i)$ is only known \textit{after} choosing $t_i$. (Loosely speaking, we have a ``chicken and egg'' scenario.) Hence, we do not suspect there is a provably optimal method, or even a simple heuristic but suboptimal method, for choosing the order of targets at runtime.

\section{Details on Section \ref{secExperiments} experiments} \label{secExperimentDetails}

\textbf{Datasets:} \texttt{Direct-ER} is a directed Erd\H{o}s-R\'enyi graph with parameters $n = 2000, p = 0.005$ (edge $v \rightarrow u$ is present with probability $p$, independent of other edges, $\forall\ v , u \in V , v \neq u$). \texttt{Direct-SBM} is a directed stochastic block model; there are $n = 2000$ nodes partitioned into $k = 20$ disjoint communities, each of size $\frac{n}{k} = 100$; directed edges occur with probability $9 / ( \frac{n}{k} - 1 )$ between distinct nodes in the same community and with probability $1 / ( n - \frac{n}{k} )$ between nodes in different communities (so that each node has nine neighbors in its own community and one neighbor in another community, in expectation, yielding a highly modular graph). The real graphs used are available from the Stanford Network Analysis Platform (SNAP) \cite{snapnets}; see Table \ref{tabDatasets} for further details.

\begin{table}
\caption{Dataset details} \label{tabDatasets}
\begin{tabular}{|l|l|c|c|c|}
\hline 
Dataset & Description & $n$ & $m$ \\ \hline 
\texttt{com-Amazon} & Amazon co-purchasing & 334863 & 925872  \\ \hline 
\texttt{com-dblp} & Scientific co-authorship & 317080 & 1049866  \\ \hline 
\texttt{roadNet-PA} & Roads in Pennsylvania & 1087532 & 1541514 \\ \hline
\texttt{Slashdot} & Friendships on technology news site & 71307 & 912381 \\ \hline
\texttt{web-BerkStan} & berkley.edu, stanford.edu web graph & 334857 & 4523232  \\ \hline 
\texttt{web-Google} & Partial web crawl & 434818 & 3419124  \\ \hline 
\texttt{Wiki-Talk} & Friendships among Wikipedia editors & 111881 & 1477893 \\ \hline
\end{tabular} 
\end{table}

\vspace{5pt} \noindent \textbf{Parameters:} For the scalar estimation experiments in Sections \ref{secExpSynScal} and \ref{secExpRealScal}, we use the algorithmic parameters shown in Table \ref{tabParam}. More specifically, \texttt{FW-BW-MCMC} uses Algorithm \ref{algApproxPROriginal} for forward DP with parameter $\tilde{r}^s_{\max}$ and samples $w \| \tilde{r}^s \|_1$ random walk starting node locations for each source $s$ (as in Algorithm \ref{algOurEstimatorPractical}), uses the walk sharing scheme from Section \ref{secManySource} to sample walks jointly across $S$, and uses Algorithm \ref{algApproxContMany} with parameter $r^t_{\max}$ for the targets; for \texttt{Bidirectional-PPR}, we sample $w$ walks separately for each source and run Algorithm \ref{algApproxCont} separately for each target. In practice, we find that $w$ given by the accuracy guarantee (Theorem \ref{STATE_FWBWMCMC_ACC}) is overly pessimistic, so we instead set $w = \frac{c r^t_{\max} }{\delta}$ for both methods, with $c$ given in the table. For the matrix experiments in Sections \ref{secExpSynMat} and \ref{secExpRealMat}, we use the same $\tilde{r}^s_{\max}$ and $r^t_{\max}$ values. Furthermore, we set $w = l \tfrac{c r^t_{\max} }{\delta}$, $w = \| \Sigma \|_{\infty,1} \tfrac{c r^t_{\max} }{\delta}$, and $w = \sqrt{l\ \srank(P_T(S,:)+P_S^{\trans} R_T)} \tfrac{c r^t_{\max} }{\delta}$ for the baseline, $\sigma_{\max}$, and $\sigma_{\avg}$ schemes, respectively.

\vspace{5pt} \noindent \textbf{Single pair performance:} The parameters in Table \ref{tabParam} were chosen so the primitives \texttt{FW-BW-MCMC- Practical} and \texttt{Bidirectional-PPR} offer similar accuracy in the single pair case and balance runtime between dynamic programming (DP) and Monte Carlo (MC). To demonstrate this, we show statistics in Table \ref{tabParam}. We obtained the statistics by averaging across $10^3$ trials of the following procedure. First, we sample $t \in V$ uniformly. Next, we sample a ``significant'' source $s$ (i.e.\ $s$ satisfying $\pi_s(t) > \delta$) and an ``insignificant'' source $s'$ (i.e.\ $s'$ satisfying $\pi_{s'}(t) < \delta$). Since Theorem \ref{STATE_FWBWMCMC_ACC} bounds relative and absolute error for significant and insignificant pairs, respectively, we compute relative and absolute error for the $\pi_s(t)$ and $\pi_{s'}(t)$ estimates, respectively. (We do not report absolute error statistics as no insignificant estimate violated the absolute error guarantee.) For real datasets, we cannot compute $\pi_s(t)$ to test error performance; instead, we run Algorithm \ref{algApproxCont} with $r^t_{\max}$ replaced by $\eta = \frac{1}{n}$, denote the output $p^t_{\eta}, r^t_{\eta}$, and bound relative error for significant pairs as
\begin{align}
\tfrac{  | \hat{\pi}_s(t) - \pi(t) | }{\pi_s(t)} \leq \tfrac{ | \hat{\pi}_s(t) - p^t_{\eta}(s) |  + | p^t_{\eta}(s) - \pi_s(t) | }{ p^t_{\eta}(s) }  \leq \tfrac{ | \hat{\pi}_s(t) - p^t_{\eta}(s) | + \| r^t_{\eta} \|_{\infty} }{ p^t_{\eta}(s) }  < \tfrac{ | \hat{\pi}_s(t) - p^t_{\eta}(s) | }{ p^t_{\eta}(s) } + \tfrac{1}{10} ,
\end{align}
where we have used $p_{\eta}^t(s) \in [ \pi_s(t) - \| r^t_{\eta} \|_{\infty} , \pi_s(t)]$ (which holds by \eqref{eqBwInvariant}), $\| r^t_{\eta} \|_{\infty} < \eta = \frac{1}{n}$ (which holds by Algorithm \ref{algApproxCont}), and $p_{\eta}^t(s) \geq \delta = \frac{10}{n}$ (which holds by choice of $s,t$). In the same manner, we can bound absolute error for insignificant pairs as $| \hat{\pi}_s(t) - \pi(t) | \leq | \hat{\pi}_s(t) - p_{\eta}^t(s) | + \frac{1}{n}$. (Note we choose significant pairs as those $(s,t)$ satisfying $p^t_{\eta}(s) \geq \delta$, since then $\pi_s(t) \geq \delta$ by \eqref{eqBwInvariant}; similarly, we choose insignificant pairs as those $(s',t)$ satisfying $p^t_{\eta}(s') < \delta - \eta$, since then $\pi_{s'}(t) < \delta$ by \eqref{eqBwInvariant}.)

\begin{table}
\caption{Experiment parameters and single pair performance} \label{tabParam}
\begin{tabular}{ |C{0.84in}|C{1.3in}|C{0.25in}|C{0.25in}|C{0.28in}|C{0.18in}|C{0.3in}|C{0.3in}|C{0.3in}| } \hline
Graph & Algorithm & $\tilde{r}^s_{\max}$ $\times 10^3$ & $r^t_{\max}$ $\times 10^3$ & $\delta$ & $c$ & DP time (ms) & MC time (ms) & Error \\ \hline
\texttt{Direct-ER} & \texttt{FW-BW-MCMC-Prac} & $1.8$ & $3.8$ & $1/n$ & $7$ & 10.61 & 7.63 & 0.075 \\ \hline
\texttt{Direct-ER} & \texttt{Bidirectional-PPR} & N/A & $1.6$ & $1/n$ & $12$ & 6.94 & 7.52 & 0.072 \\ \hline
\texttt{Direct-SBM} & \texttt{FW-BW-MCMC-Prac} & $1$ & $4 $ & $1/n$ & $7$ & 15.43 & 7.08 & 0.052 \\ \hline
\texttt{Direct-SBM} & \texttt{Bidirectional-PPR} & N/A & $3 $ & $1/n$ & $10$ & 10.19 & 12.01 & 0.061 \\ \hline
\texttt{com-amazon} & \texttt{FW-BW-MCMC-Prac} & $3.6$ & $18.2 $ & $10/n$ & $12 $ & 22.55 & 22.54 & 0.12 \\ \hline
\texttt{com-amazon} & \texttt{Bidirectional-PPR} & N/A & $7.4 $ & $10/n$ & $13$ & 22.13 & 22.21 & 0.11 \\ \hline
\texttt{com-dblp} & \texttt{FW-BW-MCMC-Prac} & $2.9$ & $14.3 $ & $10/n$ & $13$ & 20.27 & 20.31 & 0.12 \\ \hline
\texttt{com-dblp} & \texttt{Bidirectional-PPR} & N/A & $6 $ & $10/n$ & $15 $ & 20.03 & 19.65 & 0.11 \\ \hline
\texttt{roadNet-PA} & \texttt{FW-BW-MCMC-Prac} & $15.1$ & $34.8 $ & $10/n$ & $6 $  & 55.04 & 56.58 & 0.11 \\ \hline
\texttt{roadNet-PA} & \texttt{Bidirectional-PPR} & N/A & $12.8 $ & $10/n$ & $6$ & 53.19 & 55.96 & 0.10 \\ \hline
\texttt{Slashdot} & \texttt{FW-BW-MCMC-Prac} & $2$ & $12.2 $ & $10/n$ & $7$ & 3.08 & 3.38 & 0.10 \\ \hline
\texttt{Slashdot} & \texttt{Bidirectional-PPR} & N/A & $4.2 $ & $10/n$ & $17$ & 3.30 & 4.03 & 0.11 \\ \hline
\texttt{web-BerkStan} & \texttt{FW-BW-MCMC-Prac} & $6.9$ & $23 $ & $10/n$ & $3$ & 11.13 & 11.02 & 0.12 \\ \hline
\texttt{web-BerkStan} & \texttt{Bidirectional-PPR} & N/A & $11.6 $ & $10/n$ & $3$ & 8.40 & 8.42 & 0.12 \\ \hline
\texttt{web-Google} & \texttt{FW-BW-MCMC-Prac} & $4.5$ & $17.6 $ & $10/n$ & $8$ & 23.33 & 22.83 & 0.11 \\ \hline
\texttt{web-Google} & \texttt{Bidirectional-PPR} & N/A & $6.7 $ & $10/n$ & $11$ & 26.07 & 22.29 & 0.11 \\ \hline
\texttt{WikiTalk} & \texttt{FW-BW-MCMC-Prac} & $2.3$ & $7.5 $ & $10/n$ & $8$ & 4.40 & 3.99 & 0.11 \\ \hline
\texttt{WikiTalk} & \texttt{Bidirectional-PPR} & N/A & $2.9 $ & $10/n$ & $20$ & 5.84 & 5.10 & 0.11 \\ \hline
\end{tabular}
\end{table}

\vspace{5pt} \noindent {\textbf{Additional Erd\H{o}s-R\'enyi results:}} We also ran the first experiment from Section \ref{secExpSynScal} for Erd\H{o}s-R\'enyi graphs with $n \in \{ 4000, 8000 \}$, each with edge formation probability $10/n$. For \texttt{FW-BW-MCMC}, we used  parameters $(\tilde{r}_{\max}^s, r^t_{\max}) = ( 1.5 , 3.5 ) \times 10^{-3}$ when $n = 4000$ and $(\tilde{r}_{\max}^s, r^t_{\max}) = ( 1.2 , 3.2 ) \times 10^{-3}$ when $n = 8000$ (choosing smaller parameters for larger $n$ gave more balanced runtime than using the $n = 2000$ parameters from Table \ref{tabParam}). Similarly, for \texttt{Bidirectional-PPR}, we used $r^t_{\max} = 1.1 \times 10^{-3}$ when $n = 2000$ and $r^t_{\max} = 0.8 \times 10^{-3}$ when $n = 8000$. As in Table \ref{tabParam}, we ensured these parameters gave similar accuracy for both algorithms. Results are shown in Fig.\ \ref{figOtherEr}. As mentioned in Section \ref{secExpSynScal}, the plots are qualitatively similar across $n$; however, they improve slightly as $n$ grows. For instance, in the extreme case $|S| = |T| = n/2$, \texttt{FW-BW-MCMC-Prac} was (on average) 2.9, 4.5, and 5.8 times faster than \texttt{Bidirectional-PPR} for $n = 2000$, $n = 4000$, and $n = 8000$, respectively.

\begin{figure}
\centering
\begin{subfigure}[b]{\columnwidth}
\centering\
\includegraphics[height=\plotHeight]{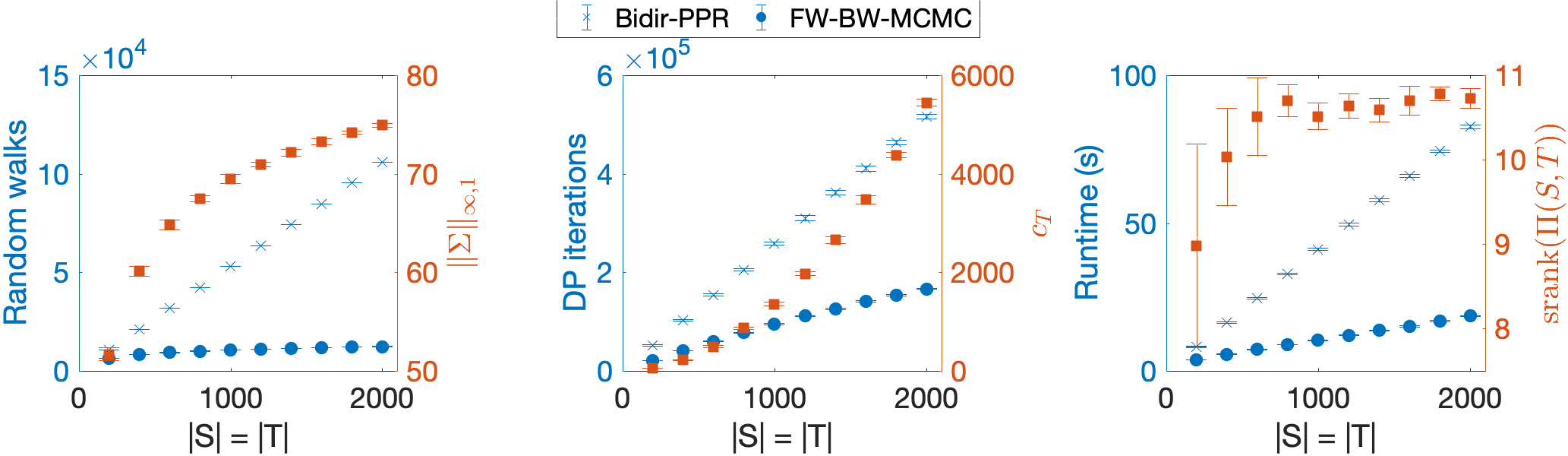} 
\end{subfigure}
\begin{subfigure}[b]{\columnwidth}
\centering\
\includegraphics[height=\plotHeight]{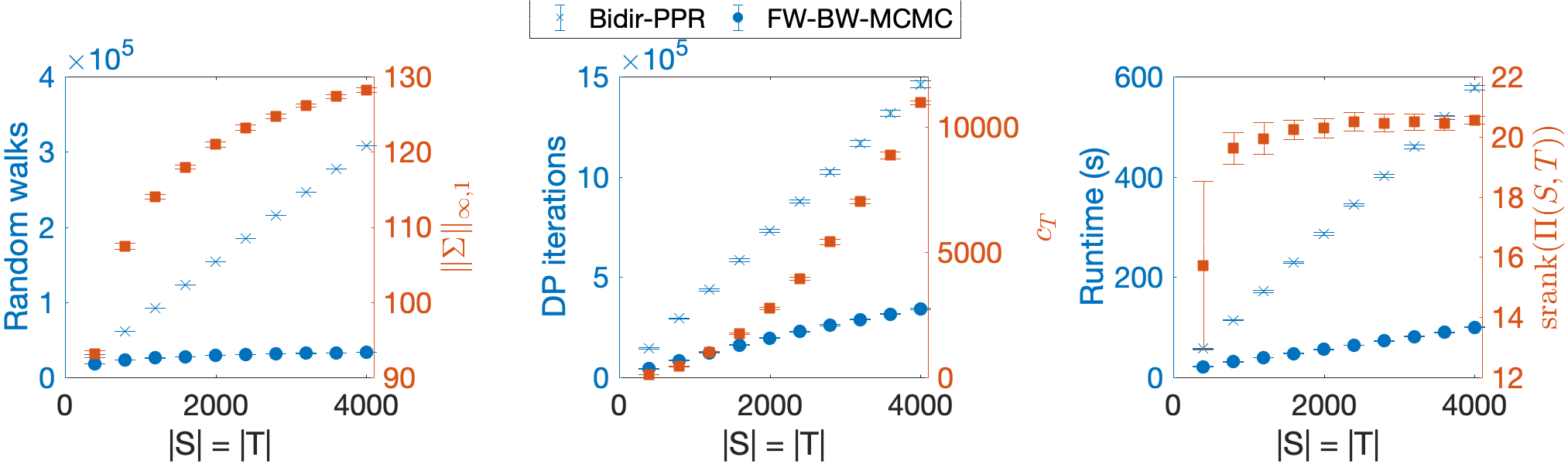} 
\end{subfigure}
\caption{Replicating Erd\H{o}s-R\'enyi experiment from Section \ref{secExpSynScal} with $n = 4000$ (top) and $n = 8000$ (bottom).} \label{figOtherEr}
\end{figure}

\vspace{5pt} \noindent \textbf{Building clustered subsets:} As mentioned in Section \ref{secExpReal}, we use a simple algorithm to randomly construct clustered subsets of nodes for experiments; Algorithm \ref{algBuildSet} provides a formal definition.

\begin{algorithm}
\caption{ $U = \texttt{Construct-Clustered-Set} (G,l)$} \label{algBuildSet}
 Choose $u \in V$ uniformly at random, let $U = \{ u \}$ \\
\For{ $i = 2$ \KwTo $l$}{
	Sample $w \in \left( \cup_{u \in U} N_{\textrm{out}}(u) \right) \setminus U$ with probability proportional to $\frac{\sum_{u \in U} \mathbbm{1}_{ \{ w \in N_{\textrm{out}} (u) \} } }{N_{\textrm{in}}(w)}$; set $U \leftarrow U \cup \{ w \}$
}
\end{algorithm}

\section{Additional experiments for distributed setting} \label{secAdditionalDist}

\subsection{Matrix approximation, $\sigma_{\avg}$ approach} \label{secDsMat}

In this section, we describe a scheme to use the $\sigma_{\avg}$ variant of Algorithm \ref{algMatrixEstimator} in the distributed setting from Section \ref{secDistributed}. Our scheme is quite similar to that defined in Section \ref{secDistributed} and proceeds as follows. First, we arbitrarily partition $S$ into $k$ subsets of size $|S|/k$, and we use the $i$-th machine to run forward DP (Algorithm \ref{algApproxPR}) for each source $s$ belonging to the $i$-th subset. Next, we create another partition $\{ S_i \}_{i = 1}^k$ of $S$ and use the $i$-th machine to sample random walks for $S_i$ using the $\sigma_{\avg}$ variant of Algorithm \ref{algMatrixEstimator}. Finally, we construct the estimate $\hat{\Pi}(S,T)$ of $\Pi(S,T)$ as in Algorithm \ref{algMatrixEstimator}.

%As in Section \ref{secExpRealMat}, the number of walks sampled on the $i$-th machine will be set proportional to 
%\begin{equation}
%\sqrt{ |S_i| \srank ( P_T(S_i,:) + P_{S_i}^{\trans} R_T ) } ,
%\end{equation}
%where $P_T$ and $R_T$ are the matrices with columns $\{ p_t \}_{t \in T}$ and $\{ r_t \}_{t \in T}$, respectively (with each $(p_t, r_t)$ computed offline via Algorithm \ref{algApproxCont}, as in Section \ref{secDistributed}), and where $P_{S_i}$ has columns $\{ p_s \}_{s \in S_i}$ (with each $p_s$ computed by Algorithm \ref{algApproxPR} during the first stage of the scheme).

It remains to specify the construction of $\{ S_i \}_{i = 1}^k$. For this, we first use the output $p^s$ of Algorithm \ref{algApproxPR} to define $\surr_s = P_T(s,:) + (p^s)^{\trans} R_T$ for each $s \in S$; here $P_T$ and $R_T$ are the matrices with columns $\{ p^t \}_{t \in T}$ and $\{ r^t \}_{t \in T}$, respectively (with each $(p^t, r^t)$ computed offline via Algorithm \ref{algApproxCont} as in Section \ref{secDistributed}). Note that $\surr_s$ is a row of the surrogate matrix $P_T(S,:)+P_S^{\trans} R_T$ discussed at the conclusion of Section \ref{secManyPairMatrix}. For $S' \subset S$, we also define  $\surr_{S'}$ be the matrix with rows $\{ \surr_s \}_{s \in S'}$. Now, as in Section \ref{secExpRealMat}, the number of walks sampled on the $i$-th machine will be set proportional to $\sqrt{ |S_i| \srank ( \surr_{S_i} ) ) }$; hence, our goal is to construct $\{ S_i \}_{i = 1}^k$ so as to minimize
\begin{equation}\label{eqObjectiveSrank}
\max_{i \in \{1,\ldots,k\}} \sqrt{ |S_i| \srank ( \surr_{S_i} ) } .
\end{equation}

To approximate the solution of this minimization problem, we consider a heuristic method defined in Algorithm \ref{algHeuristicSrankExact}. Note this is similar to Algorithm \ref{algHeuristic} in Section \ref{secDistributed}: first, we assign one source to each $S_i$, while attempting to choose these $s$ with $\surr_s$ vectors far apart; next, we iteratively assign the remaining $|S|-k$ nodes to some $S_i$, while attempting to minimize the cost of this assignment. In light of \eqref{eqObjectiveSrank}, we here define the cost of assigning $s$ to $S_i$ as $\tilde{d}(s,S_i) = \sqrt{ ( |S_i| + 1 ) \srank ( \surr_{S_i \cup \{ s \} } ) }$.

\begin{algorithm}
\caption{ $\{ S_i \}_{i=1}^k$ = \texttt{Source-Partition-$\sigma_{\avg}$} ($\{ \surr_s \}_{s \in S}$, $k$)  } \label{algHeuristicSrankExact}
Draw $s \sim S$ uniformly, set $S_1 = \{ s \}$; set $S_i = \emptyset\ \forall\ i \in \{2,\ldots,k\}$ \\
\For{$i = 2$ \KwTo $k$}{
	 Draw $s \sim S$ with probability proportional to $\min_{j \in \{1,\ldots,i-1\}} \| \surr_s - \surr_{S_j} \|_1$; set $S_i = \{ s \}$
}
\For{$i = k+1$ \KwTo $|S|$}{
	 Choose any $s \in S \setminus ( \cup_{j=1}^k S_j )$ (any $s$ not yet assigned); compute $\tilde{d}(s,S_j)\ \forall\ j \in \{1,\ldots,k\}$ \\
	 Let $j^* \in \argmin_j  \tilde{d}(s,S_j)$ , $S_{j^*} = S_{j^*} \cup \{ s \}$.
}
\end{algorithm}

Unfortunately, Algorithm \ref{algHeuristicSrankExact} requires the singular value decomposition (SVD) of $\surr_{ S_j \cup \{ s \} }$ to be computed, so that $\tilde{d}(s,S_j)$ can computed in the second for loop of Algorithm \ref{algHeuristicSrankExact}. (In contrast, computing $d(s,S_j)$ in the $\sigma_{\max}$ partitioning scheme, Algorithm \ref{algHeuristic}, only requires subtracting one vector from another.) Hence, we also propose an alternative partitioning method that avoids this SVD computation. This method is based on two observations. First, we have
\begin{align}
\| \surr_{S_j \cup \{ s \} } \|_2^2 & = \lambda_{\max} \left( \begin{bmatrix} \surr_{S_j}^{\trans} & \surr_s^{\trans} \end{bmatrix} \begin{bmatrix} \surr_{S_j} \\ \surr_s \end{bmatrix} \right) = \lambda_{\max} \left( \sum_{s' \in S_j} \surr_{s'}^{\trans} \surr_{s'} + \surr_s^{\trans} \surr_s \right) \\
& \leq \max_{t \in T} \sum_{t' \in T} \left( \sum_{s' \in S_j} \surr_{s'}^{\trans} \surr_{s'} + \surr_s^{\trans} \surr_s \right)(t,t') \\
& = \max_{t \in T} \left( \sum_{s' \in S_j} \surr_{s'}(t) \| \surr_{s'} \|_1 + \surr_s(t) \| \surr_s \|_1 \right) ,
\end{align}
where the first equality is a well-known result, the inequality follows from the Perron-Frobenius Theorem, and the remaining equalities are straightforward. Second, by definition of $\| \cdot \|_F$, we have
\begin{equation}
\| \surr_{S_j \cup \{ s \} } \|_F^2 = \sum_{s' \in S_j} \| \surr_{s'} \|_2^2 + \| \surr_s \|_2^2 .
\end{equation}
Combining these observations, we obtain
\begin{equation}\label{eqDtildeDefn}
\tilde{d}(s,S_j) \geq \hat{d}(s,S_j) = \sqrt{ \frac{ \left( |S_j| + 1 \right) \left( \sum_{s' \in S_j} \| \surr_{s'} \|_2^2 + \| \surr_s \|_2^2 \right) }{ \max_{t \in T} \left( \sum_{s' \in S_j} \surr_{s'}(t) \| \surr_{s'} \|_1 + \surr_s(t) \| \surr_s \|_1 \right) } } .
\end{equation}
This expression allows us to estimate $\tilde{d}(s,S_j)$ more efficiently than it can be computed exactly. In Algorithm \ref{algHeuristicSrankApprox}, we give a partitioning scheme that leverages this insight. Note that the computation of $\hat{d}(s,S_j)$ in Algorithm \ref{algHeuristicSrankApprox} can be performed as
\begin{equation}
\hat{d}(s,S_j) =  \sqrt{ \frac{ \left( |S_j| + 1 \right) \left( x_j + \| \surr_s \|_2^2 \right) }{ \max_{t \in T} \left( y_j(t) + \surr_s(t) \| \surr_s \|_1 \right) } } ,
\end{equation}
i.e.\ the terms $\sum_{s' \in S_j} \| \surr_{s'} \|_2^2$ and $\sum_{s' \in S_j} \surr_{s'}(t) \| \surr_{s'} \|_1$ in \eqref{eqDtildeDefn} have already been computed as $x_j$ and $y_j(t)$ when $\hat{d}(s,S_j)$ is computed; furthermore, $x_j$ and $y_j(t)$ are updated (rather than being computed in full) each time some $s$ is added to $S_j$ (last line of Algorithm \ref{algHeuristicSrankApprox}).

\begin{algorithm}
\caption{ $\{ S_i \}_{i=1}^k$ = \texttt{Source-Partition-$\sigma_{\avg}$-alt} ($\{ \surr_s \}_{s \in S}$, $k$)  } \label{algHeuristicSrankApprox}
Draw $s \sim S$ uniformly, set $S_1 = \{ s \}$, $x_1 = \| \surr_s \|_2^2$, $y_1(t) = \surr_s(t) \| \surr_s \|_1\ \forall\ t \in T$ \\
Set $S_i = \emptyset, x_i = y_i = 0\ \forall\ i \in \{2,\ldots,k\}$ \\
\For{$i = 2$ \KwTo $k$}{
	 Draw $s \sim S$ with probability proportional to $\min_{j \in \{1,\ldots,i-1\}} \| \surr_s - \surr_{S_j} \|_1$ \\
	 Set $S_i = \{ s \}$, $x_i = \| \surr_s \|_2^2$, $y_i(t) = \surr_s(t) \| \surr_s \|_1\ \forall\ t \in T$
}
\For{$i = k+1$ \KwTo $|S|$}{
	 Choose any $s \in S \setminus ( \cup_{j=1}^k S_j )$ (any $s$ not yet assigned); compute $\hat{d}(s,S_j)\ \forall\ j \in \{1,\ldots,k\}$  \\
	 Let $j^* \in \argmin_j  \hat{d}(s,S_j)$ \\
	 Set $x_{j^*} = x_{j^*} + \| \surr_s \|_2^2$, $y_{j^*}(t) = y_{j^*}(t) + \surr_s(t) \| \surr_s \|_1\ \forall\ t \in T$, $S_{j^*} = S_{j^*} \cup \{ s \}$ 
}
\end{algorithm}

In Fig.\ \ref{figDsMat}, we present empirical results for the $\sigma_{\avg}$ matrix approximation scheme in the distributed setting. In particular, we show results for the scheme described above with the partition $\{ S_i \}_{i = 1}^k$ constructed via Algorithm \ref{algHeuristicSrankExact} (``Heuristic'' in Fig.\ \ref{figDsMat}) and via Algorithm \ref{algHeuristicSrankApprox} (``Alt Heuristic'' in Fig.\ \ref{figDsMat}). For both schemes, we show the maximum forward DP and random walk sampling time across machines, the maximum number of walks sampled across machines, and the value of the objective function \eqref{eqObjectiveSrank}. The first two quantities are shown relative to the respective quantities for a baseline scheme, which arbitrarily partitions $S$ into subsets of size $|S|/k$ and uses the $i$-th machine to run the baseline matrix approximation scheme from Section \ref{secExpRealMat} for the $i$-th subset (recall no forward DP is used for this baseline scheme, i.e.\ walks are not shared across sources). For this experiment, we let $S = \{ \tilde{S}_i \}_{i = 1}^k$, where $k = 10$ and each $\tilde{S}_i$ is a clustered subset satisfying $| \tilde{S}_i | = 100$; we also compare to an oracle scheme that sets $S_i = \tilde{S}_i$ (as in Section \ref{secDistributed}). In general, Fig.\ \ref{figDsMat} conveys the same message as Fig.\ \ref{figDistSetting} in Section \ref{secDistributed}: our  methods perform similarly to the oracle method and noticeably outperform the baseline. Here we also note that the heuristic outperforms the oracle across graphs, while the oracle in turn outperforms the alternative heuristic. Nevertheless, the alternative heuristic offers similar performance as the other schemes, while avoiding the SVD computation of the heuristic (which we expect would become prohibitively costly as $S$ grows).

\begin{figure}
\centering
\includegraphics[width=\columnwidth]{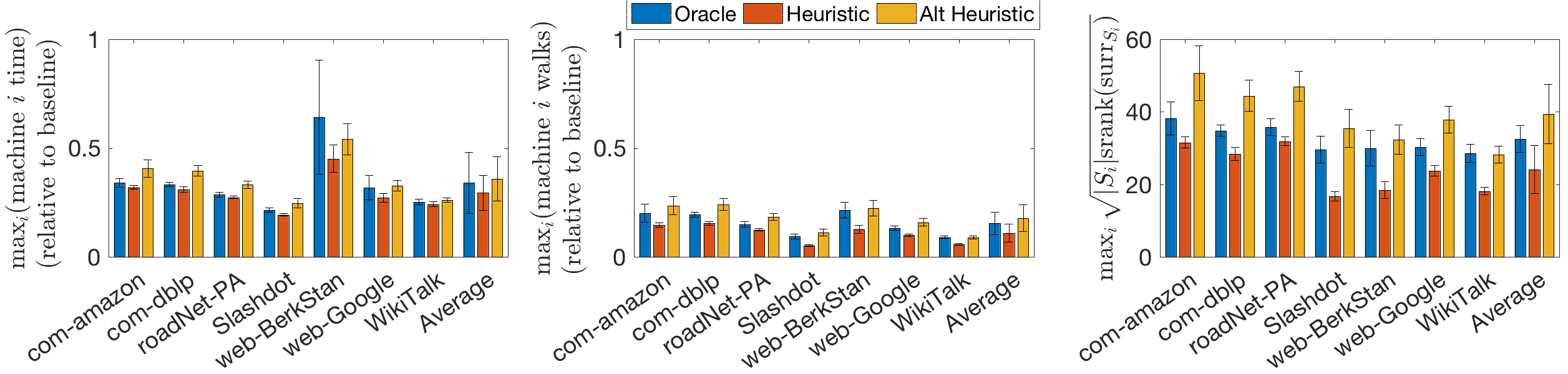} % too wide for \plotHeight since three cols
\caption{The $\sigma_{\avg}$ matrix approximation scheme is typically 2-3 times faster than the baseline scheme in the distributed setting of Section \ref{secDistributed}, and our heuristic partitioning schemes (Algorithms \ref{algHeuristicSrankExact} and \ref{algHeuristicSrankApprox}) perform similar to the oracle method.} \label{figDsMat}
\end{figure}

\subsection{Other results for source partitioning schemes}

As discussed at the conclusion of Section \ref{secDistributed}, it is crucial that our source partitioning schemes (Algorithms \ref{algHeuristic}, \ref{algHeuristicSrankExact}, and \ref{algHeuristicSrankApprox}) balance the number of sources assigned to each machine. To see why, note that the baseline schemes have objective function value $|S| / k$; hence, if some machine $i$ is assigned $O(|S|)$ sources using our schemes, we may only outperform the baseline when clustering is extreme. Luckily, we find that the partitions are typically quite balanced in practice, despite the lack of explicit balance constraints in Algorithms \ref{algHeuristic}, \ref{algHeuristicSrankExact}, and \ref{algHeuristicSrankApprox}. To demonstrate this, we show the maximum and minimum number of sources assigned to machines for the three partitioning schemes in Fig.\ \ref{figBalance}. Averaged across graphs, Algorithms \ref{algHeuristic}, \ref{algHeuristicSrankExact} and \ref{algHeuristicSrankApprox} typically produce partitions with $|S_i| \in [ 85, 122 ]$, $|S_i| \in [ 55, 188 ]$, and $|S_i| \in [ 75, 134 ]$, respectively (the red line shows $|S| / k = 100$, i.e.\ a perfectly balanced partition). We also note that, while Algorithm \ref{algHeuristicSrankExact} typically produces the least balanced partition, its overall performance is similar to that for Algorithm \ref{algHeuristicSrankApprox} (see Fig.\ \ref{figDsMat}), which we have argued is more useful in practice for large $S$.

\begin{figure}
\centering
\includegraphics[height=\plotHeight]{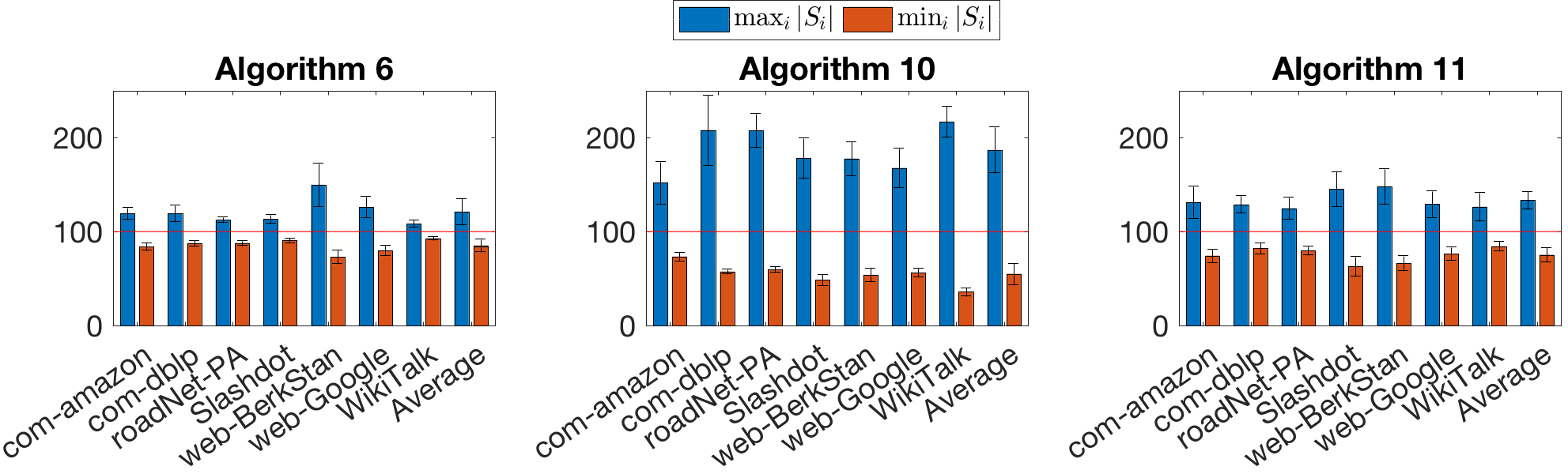}
\caption{All of our source partitioning schemes produce partitions $\{ S_i \}_{i=1}^k$ with $|S_i|$ reasonably close to $|S| / k = 100$ for every subset $i$ (where $|S| / k = 100$ is the case of perfectly balanced partition).} \label{figBalance}
\end{figure}

}

\end{appendices}

\bibliographystyle{ACM-Reference-Format}
\bibliography{references} 

\end{document}